\documentclass[11pt]{article}

\usepackage[english]{babel}
\usepackage{amsmath, amsthm, amssymb}
\usepackage[boxed,lined]{algorithm2e}
\usepackage{algpseudocode}
\usepackage{nicefrac}
\usepackage{enumerate}
\usepackage{comment}
\usepackage{graphics}
\usepackage[dvipdf]{graphicx}
\usepackage[numbers,square]{natbib}
\usepackage{xspace} 
\usepackage{fullpage}


\newtheorem{thm}{Theorem}[section]
\newtheorem{lem}[thm]{Lemma}
\newtheorem{cor}[thm]{Corollary}
\newtheorem{defn}[thm]{Definition}

\newtheorem{remark}[thm]{Remark}
\newtheorem{problem}{Problem}
\newtheorem{observation}{Observation}

 \newenvironment{slist}%
 { \begin{list}%
         {--}%
         {\setlength{\labelwidth}{20pt}%
          \setlength{\leftmargin}{25pt}%
          \setlength{\topsep}{0pt}
          \setlength{\itemsep}{0pt}
          \setlength{\parsep}{0pt}}}%
 { \end{list} }

\newcommand{\R}{\mathbb{R}}
\newcommand{\Z}{\mathbb{Z}}

\newcommand{\Prob}{\Phi}
\newcommand{\p}[1]{\Pr\{#1\}}

\newcommand{\cdf}{F}
\newcommand{\cdfn}{\cdf_\mathcal{N}}

\newcommand{\dist}{\bd}

\newcommand{\plog}{\textup{polylog}}

\newcommand{\e}{\varepsilon}
\newcommand{\bd}{\mathrm{d}}

\newcommand{\f}{\mathcal{F}}
\newcommand{\g}{\mathcal{G}}

\newcommand{\T}{\mathcal{T}}

\newcommand{\A}{\mathcal{A}}

\newcommand{\ball}{\ensuremath{\mathcal{B}}\xspace}
\newcommand{\nn}{\ensuremath{\mathcal{NN}}\xspace}
\newcommand{\pleb}{\ensuremath{\mathcal{PLEB}}\xspace}
\newcommand{\rnn}{\ensuremath{\mathcal{RNN}}\xspace}

\newcommand{\nnpq}[2]{\nn_{#1}(#2)}
\newcommand{\nnp}[1]{\nnpq{P}{#1}}
\newcommand{\ennpq}[3]{\ensuremath{#1\mbox{-}}\nnpq{#2}{#3}}
\newcommand{\ennp}[2]{\ennpq{#1}{P}{#2}}

\newcommand{\nnpqr}[3]{\ball_{#1}(#2, #3)}
\newcommand{\nnpr}[2]{\nnpqr{P}{#1}{#2}}

\newcommand{\rnnpq}[2]{\rnn_{#1}(#2)}
\newcommand{\rnnp}[1]{\rnnpq{P}{#1}}
\newcommand{\ernnpq}[3]{\ensuremath{#1\mbox{-}}\rnnpq{#2}{#3}}
\newcommand{\ernnp}[2]{\ernnpq{#1}{P}{#2}}
\newcommand{\rnnds}{\ensuremath{\mathcal{RNNDS}}}

\newcommand{\dnn}[2]{\dist(#2, #1)}

\newcommand{\dnnp}[1]{\dnn{P}{#1}}
\newcommand{\dnnpp}[1]{\dnnp{#1}}

\title{Reverse Nearest Neighbors Search in High Dimensions\\using Locality-Sensitive Hashing}
\author{\begin{tabular}{ccc}
David Arthur &\hspace{50pt}& Steve Y. Oudot\\
Department of Computer Science && Geometrica group\\
Stanford University && INRIA Saclay -- \^Ile-de-France\\
{\tt darthur@gmail.com} && {\tt steve.oudot@inria.fr}
\end{tabular}}

\begin{document}
\maketitle

\begin{abstract}
We investigate the problem of finding reverse nearest neighbors
efficiently. Although provably good solutions exist for this problem
in low or fixed dimensions, to this date the methods proposed in high
dimensions are mostly heuristic. We introduce a method that is both
provably correct and efficient in all dimensions, based on a
reduction of the problem to one instance of $\e$-nearest neighbor
search plus a controlled number of instances of {\em exhaustive
  $r$-\pleb}, a variant of {\em Point Location among Equal Balls}
where all the $r$-balls centered at the data points that contain the query
point are sought for, not just one. The former problem has been extensively
studied and elegantly solved in high dimensions using
Locality-Sensitive Hashing (LSH) techniques. By contrast, the latter
problem has a complexity that is still not fully understood. We revisit the
analysis of the LSH scheme for exhaustive $r$-\pleb using a
somewhat refined notion of locality-sensitive family of hash function,
which brings out a meaningful output-sensitive term in the complexity
of the problem. Our analysis, combined with a non-isometric lifting of
the data, enables us to answer exhaustive $r$-\pleb queries (and
down the road reverse nearest neighbors queries) efficiently. Along
the way, we obtain a simple algorithm for answering exact nearest
neighbor queries, whose complexity is parametrized by some {\em
  condition number} measuring the inherent difficulty of a given
instance of the problem.
\end{abstract}

\section{Introduction}

Proximity queries are ubiquitous in science and engineering, and given
their natural importance they have received a lot of attention from
the computer science community~\citep{Clarkson99, Clarkson06, Indyk04,
  SDI05}. {\em Nearest Neighbor} (\nn) search is certainly among the
most popular ones. Given a finite set $P$ with $n$ points sitting in
some metric space $(X,\bd)$, the goal is to preprocess $P$ in such a
way that, for any query point $q\in X$, a nearest neighbor of $q$
among the set $P\setminus\{q\}$ can be found quickly.  The \nn query
can be easily answered in linear time by brute force search, so the
algorithmic challenge is to preprocess the data points so as to find
the answer in sub-linear time. Numerous methods have been proposed,
however their performances degrade significantly when the
dimensionality $d$ of the data increases --- a phenomenon known as the
{\em curse of dimensionality}.  Typically, they suffer
from either space or query time that is exponential in $d$ , and so
they become no better than brute-force search when $d$ becomes higher
than a few dozens or hundreds~\cite{WSB98}.

In light of the apparent hardness of \nn search, an approximate
version of the problem called $\e$-\nn has been considered, where the
answer can be any point of $P\setminus\{q\}$ whose distance to $q$ is
within a given factor $(1+\e)$ of the true nearest neighbor
distance~\citep{AMNSW98, Clarkson94, IM98, Kleinberg97,
  KOR98}. Inspired from the random projection techniques developed by
\citet{Kleinberg97}, \citet{IM98} and \citet{KOR98} proposed data
structures to answer $\e$-\nn queries with truly sublinear runtime and
fully polynomial space complexity. The approach developped
in~\cite{IM98} is based on the idea of {\em Locality-Sensitive
  Hashing} (LSH), which consists in hashing the data and query points
into a collection of tables indexed by random hash functions, such
that the query point $q$ has more chance to collide with nearby data
points than with data points lying far away. This technique solves a
decision version of the $\e$-\nn problem called {\em Point Location
  among Equal Balls} ($(r,\e)$-\pleb), which asks to decide whether
the distance of $q$ to $P\setminus\{q\}$ is below a given threshold
$r$ or above $r(1+\e)$.  The output is proven correct with high
probability, and the query time is bounded by $O(dn^\varrho\plog\;n)$
for some constant $\varrho = \frac{1}{1+\Theta(\e)}$. Moreover,
\citet{IM98} proposed a reduction of $\e$-\nn search to a
poly-logarithmic number of $(r,\e)$-\pleb queries, thus providing a
fully sublinear-time and polynomial-space procedure for solving
$\e$-\nn. Although originally designed for the Hamming cube, LSH was
later extended~\cite{AI06, DIIM04, HarPeledIndykMotwani} to affine
spaces $\R^d$ equipped with $\ell_s$-norms, $s\in (0,2]$.

In this paper we mainly focus on the reverse problem, known as {\em
  Reverse Nearest Neighbors} (\rnn) search. Given a finite set $P$
with $n$ points sitting in some metric space $(X,d)$, the goal is to
preprocess $P$ in such a way that, for any query point $q\in X$, one
can find the {\em influence set} of $q$, i.e. the set $\rnnpq{P}{q}$
formed by the points $p\in P\setminus\{q\}$ that are closer to $q$
than to $P\setminus \{p\}$. Such points are called {\em reverse
  nearest neighbors} of $q$.  \rnn queries arise in many different
contexts, and it is no surprise that they have
received a lot of attention since their formal introduction
by~\citet{KM00}. A wealth of methods have been
proposed~\citep{ABKKPR06,BJKS06,Clarkson03,FP09,
  KMSXZ07,KM00,KJG08,SFT03,SAA00,TPL04,TYM06}, which behave well in
practice on some classes of inputs. However, these methods are mostly
heuristic, and to date very little is known about the theoretical
complexity of \rnn search, except in low~\citep{CDLSV09,MVZ02} or
fixed~\citep{CVY09} dimensions, where the dimensionality of the data
can be considered as a mere constant. The crux of the matter is that,
in contrast to ($\e$-)\nn search, the answer to an \rnn query is not a
single point but a set of points, whose size can be up to exponential
in the ambient dimension \citep{PZ04}, so there is no way to achieve a
systematic sub-linear query time. Ideally, one would like to achieve a
query time of the form $\tilde O(n^\varrho + |\rnnpq{P}{q}|)$, where
$\varrho$ is a constant less than $1$ and $|\rnnpq{P}{q}|$ is the size
of the reverse nearest neighbors set. The big-$\tilde O$ notation may
hide extra factors that are polynomial in $d$ and
poly-logarithmic in $n$.  Intuitively, the first term in the bound
would represent the incompressible time needed to locate the query
point $q$ with respect to the point cloud $P$, as in a standard \nn
query, while the second term would represent the size of the
sought-for answer.

\paragraph*{Our contributions.}
Our main contribution (see Section~\ref{sec:rnn}) is a reduction of
\rnn search to one instance of $\e$-\nn search plus a poly-logarithmic
number of instances of {\em exhaustive $r$-\pleb}, 
a set-theoretic version of \pleb where not only one $r$-ball
containing the query point $q$ is sought for, but all such balls. Our
reduction is based on a partitioning of the data points into buckets
according to their nearest neighbor distances, combined with a pruning
strategy that prevents the inspection of too many buckets at query
time.

Turning our reduction into an effective algorithm for \rnn search
requires to adapt the LSH scheme to solve exhaustive $r$-\pleb
queries. Such an adatptation was proposed in~\cite[Chapter~1]{SDI05},
with expected query time $\tilde O(n^{\varrho}
+n^\varrho|\nnpqr{P}{q}{r(1+\e)}|)$, where $\varrho =
\frac{1}{1+\Theta(\e)}$ and where $\e>0$ is a user-defined
parameter. Even though the ouput of the query is the set
$\nnpqr{P}{q}{r}$, the query time depends on the size of the superset
$\nnpqr{P}{q}{r(1+\e)}$, and when choosing $\e$ the user must find a
trade-off between increasing the size of $\nnpqr{P}{q}{r(1+\e)}$ and
increasing the average retrieval cost $n^\varrho$ per point of
$\nnpqr{P}{q}{r(1+\e)}$. In Section~\ref{sec:all-re-nn} we revisit the
analysis of~\cite[Chapter~1]{SDI05} using a somewhat finer concept of
locality-sensitive hashing (see
Definition~\ref{def:stronglshashfunc}), which enables us to quantify
more precisely the amount of collisions with the query point that may
occur within the hash tables stored in the LSH data structure. Taking
advantage of this refined analysis, we propose a simple extra
preprocessing step that reduces the average retrieval cost per point
of $\nnpqr{P}{q}{r(1+\e)}$ down to $n^\alpha$ for some constant
$\alpha\leq \e\varrho<\e$, thereby making the previous trade-off no
longer necessary. The price to pay is a slight degradation of the
absolute term $n^\varrho$ in the complexity bound, which rises to
$n^{\varrho'}$ where $\varrho'=\frac{1}{1+\Theta(\e^2)}$
(Theorem~\ref{thm:allnn_Rd}). All in all, the query time bound becomes
$\tilde O(n^{\varrho'} + n^\alpha|\nnpqr{P}{q}{r(1+\e)}|)$ and
therefore remains sublinear in $n$ as long as $|\nnpqr{P}{q}{r(1+\e)}|\leq
n^{1-\e}$.  Intuitively, our extra preprocessing step consists
in lifting the point cloud $P$ and query point $q$ one dimension
higher through some highly non-isometric embedding, so that the
induced metric distortion moves $q$ away from $P$ and further
concentrates the distribution of the distances to $q$ around the
parameter value $r$, thereby reducing the total number of collisions
with $q$ within the hash tables. The output of the query can still be
proven correct thanks to the fact that the embedding preserves the
order of the distances to $q$. This approach stands in contrast to the
general trend of applying low-distortion embeddings to solve proximity
queries.

Down the road, these advances lead to an algorithm for solving \rnn
queries with high probability in expected $\tilde
O(\frac{1}{\e}n^{1/(1+\Theta(\e^2))} + n^\e
|\ernnpq{O(\e)}{P}{q}|)$ time using fully polynomial space, where $\e>0$
is a user-defined parameter and $\ernnpq{O(\e)}{P}{q}$ is a superset
of $\rnnpq{P}{q}$ whose points are $O(\e)$-close to being true
reverse nearest neighbors of $q$ (Theorem~\ref{thm:rnnquerytime}). To
the best of our knowledge, this is the first algorithm for answering
\rnn queries that is provably correct and efficient in
all dimensions. Furthermore, the algorithm and its analysis extend
naturally to the bichromatic setting where the data points are split
into two disjoint categories, e.g. clients and servers, a scenario
that is encountered in various applications~\cite{KM00}.

Along the way, in Section~\ref{sec:exact-nn} we obtain a simple
algorithm that can answer exact \nn queries in expected $\tilde
O(n^{1/(1+\Theta(\e^2))} + n^\e |\ennpq{O(\e)}{P}{q}|)$ time using
fully polynomial space, where $\e>0$ is a user-defined parameter and
$\ennpq{O(\e)}{P}{q}$ is a set of approximate nearest neighbors of $q$
(Theorem~\ref{thm:exact-nn}). The first term in the running time bound
corresponds to a standard $\e$-\nn query, while the second term is
parametrized by the size of $\ennpq{O(\e)}{P}{q}$, which thereby plays
the role of a {\em condition number} measuring the discrepancy in
difficulty between the exact and approximate \nn queries on a given
instance. Note that our algorithm is not expected to perform as well
as state-of-the-art techniques in growth-restricted
spaces~\cite{Clarkson99, hkmr-nnngrm-04, kr-fnngrm-02, kl-ns-04},
however its complexity bounds hold in a more general setting and its
sublinear behavior on a particular instance relies on the weaker
hypothesis that the condition number of this instance lies below the
threshold $n^{1-\e}$. In the same spirit, \citet{DIIM04} designed a
lightweight version of our algorithm that only works in Euclidean
spaces but is competitive with~\cite{Clarkson99, hkmr-nnngrm-04,
  kr-fnngrm-02, kl-ns-04}.

Throughout the paper, the analysis is carried out either in full
generality in metric spaces that admit locality-sensitive families of
hash functions, or more precisely in $(\R^d, \ell_s)$ when liftings of
the data one dimension higher come into play.  The case of the
$d$-dimensional Hamming cube is also encompassed by our analysis since
this space embeds itself isometrically into $(\R^d, \ell_1)$.

\section{Preliminaries}\label{sec:prelim}

In Section~\ref{sec:defs} we introduce some useful notation and state
the nearest neighbor and reverse nearest neighbors problems
formally. In Sections~\ref{sec:renn-to-enn}
through~\ref{sec:pleb_affine} we give an overview of LSH and its
application to approximate nearest neighbor search, with a special
emphasis on the case of affine spaces $\R^d$ equipped with
$\ell_s$-norms in Section~\ref{sec:pleb_affine}. The data structures
and algorithms introduced in this section are used as black-boxes in
the rest of the paper.

\subsection{Problem statements and notations}
\label{sec:defs}

Throughout the paper, $(X,\bd)$ denotes a metric space and $P$ a
finite subset of $X$.  Given a point $x\in X$, let $\dist(x,P)$ denote
the distance of $x$ to $P\setminus\{x\}$, that is: $\dist(x,P) = \min
\left\{\dist(x,p) \mid p\in P\setminus\{x\}\right\}.$ Given a
parameter $r\geq 0$, let $\ball(x,r)$ denote the metric ball of
center $x$ and radius $r$, and let $\nnpqr{P}{x}{r}$ be the set of
points of $P\setminus\{x\}$ that lie within this ball.  Then,
$\nnpqr{P}{x}{\dist(x,P)}$ is the set of {\em nearest neighbors} of
$x$ among $P\setminus\{x\}$, noted $\nnp{x}$. By analogy, given a
parameter $\e> 0$, $\ennp\e{x}$ denotes the set
$\nnpqr{P}{x}{(1+\e)\dist(x,P)}$ of {\em $\e$-nearest neigbors} of $x$
among $P\setminus\{x\}$. The usual convention is that point $x$ itself
is excluded from these sets, which is not mentioned explicitly in our
notations for simplicity but will be admitted implicitly throughout the paper.
\begin{problem}[\nn] \label{problem:nn}
  Given a query point $q\in X$, the {\em nearest neighbor query} asks
  to return any point of $\nnp{q}$.
\end{problem}
\begin{problem}[$\e$-\nn] \label{problem:enn}
  Given a query point $q\in X$, the {\em $\e$-nearest neighbor query}
  asks to return any point of $\ennp\e{q}$.
\end{problem}
Given now a point $x\in X$, let $\rnnp{x}$ denote the set of {\em
  reverse nearest neighbors} of $x$ among $P\setminus\{x\}$, which by
definition are the points $p\in P\setminus\{x\}$ such that $x\in
\nnpq{P\cup\{x\}}{p}$. By analogy, let $\ernnp\e{x}$ denote the set of
     {\em reverse $\e$-nearest neighbors} of $x$ among
     $P\setminus\{x\}$, which by definition are the points $p\in
     P\setminus\{x\}$ such that $x\in \ennpq\e{P\cup\{x\}}{p}$. Here
     again, point $x$ itself is excluded from the various sets, a fact
     omitted in our notations for simplicity but admitted
     implicitly.
\begin{problem}[\rnn] \label{problem:rnn}
  Given a query point $q\in X$, the {\em reverse nearest neighbors
    query} asks to retrieve the set $\rnnp{q}$.
\end{problem}
%

\subsection{Reducing approximate nearest neighbor search to its decision version}
\label{sec:renn-to-enn}

Given a parameter $r$, the decision version of
Problem~\ref{problem:nn} consists in deciding whether $\dist(q,P)$ is
smaller or larger than $r$. This problem is also known as {\em Point
  Location among Equal $r$-Balls} ($r$-\pleb) in the literature, because
it is equivalent to deciding whether $q$ lies inside the union of
balls of same radius $r$ about the points of $P$. It is
formalized as follows:
\begin{problem}[$r$-\pleb] \label{problem:pleb}
  Given a query point $q\in X$, the {\em $r$-\pleb} query asks
  the following:
  \begin{slist}
  \item[$\bullet$] if $\dist(q,P)\leq r$, then return {\em
    YES} and any point $p\in P$ such that $\dist(p,q)\leq r$;
  \item[$\bullet$] else {\rm ($\dist(q,P)> r$)}, return {\em NO}.
  \end{slist}
\end{problem}
By analogy, the decision version of Problem~\ref{problem:enn} consists
in deciding whether $\dist(q,P)$ is smaller than $r$ or larger than
$r(1+\e)$. If it lies between these two bounds, then any answer is
acceptable.  The formal statement is the following:
\begin{problem}[$(r,\e)$-\pleb] \label{problem:epleb}
  Given a query point $q\in X$, the
  {\em $(r,\e)$-\pleb}  query asks the following:
  \begin{slist}
  \item[$\bullet$] if $\dist(q,P)\leq r$, then return {\em
    YES} and any point $p\in P$ such that $\dist(p,q)\leq r(1+\e)$;
  \item[$\bullet$] if $\dist(q,P)> r(1+\e)$, then return {\em NO};
  \item[$\bullet$] else {\rm ($r < \dist(q, P)\leq r(1+\e)$)},
    return any of the above answers.
  \end{slist}
\end{problem}

The original LSH paper \citep{IM98} showed a construction that reduces
the $\e$-\nn problem to a logarithmic number of $(r,\e)$-\pleb
queries. Other reductions have since been proposed, and in this paper
we will make use of the following one, introduced by \citet{HarPeled01},
which is simple and works in any metric space. It is based on
a divide-and-conquer strategy, building a tree $\T(P,\e)$ of height
$O(\ln n)$, such that each node $v$ is assigned a subset
$P_v\subseteq P$ and an interval $[r_v, R_v]$ of possible values for
parameter $r$.  Each $\e$-\nn query is performed by traversing down
the search tree $\T(P,\e)$, and by answering two $(r, \e)$-\pleb
queries at each node $v$ to decide (approximately) whether
$\dist(q,P)$ belongs to the interval $[r_v, R_v]$ or not: in the
former case, a simple dichotomy on a geometric progression of values
of $r$ within the interval makes it possible to determine within a
relative error of $1+\e$ where $\dnn{P}{q}$ lies in the interval, and
to return a point of $\ennpq\e{P}{q}$, with a total number of $(r,
\e)$-\pleb queries bounded by $O(\log_2 \log_{1+\e} \frac{R_v}{r_v})$;
in the latter case, the choice of the child of $v$ in which to
continue the search is determined from the output of the two $(r,
\e)$-\pleb queries. In this construction, the ratio $\frac{R_v}{r_v}$
is guaranteed to be at most a polynomial in $\frac{n}{\e}$, with
bounded degree, so we have $O(\log_{1+\e}\frac{R_v}{r_v}) =
O(\log_{1+\e}\frac{n}{\e})= O(\frac{1}{\e}\ln \frac{n}{\e})$.  Thus,
\begin{thm}[see~\cite{HarPeled01}] \label{thm:cnn} 
Given a finite set $P\subseteq X$ with $n$ points, the tree $\T(P,\e)$
stores $O(\frac{1}{\e}\ln \frac{n}{\e})$ data structures for
$(r,\e)$-\pleb queries per node, and it reduces every $\e$-\nn query
to a set of $O(\ln n + \ln
\frac{1}{\e}+\ln\ln\frac{n}{\e})=O(\ln \frac{n}{\e})$ queries of type
$(r,\e)$-\pleb.
 \end{thm}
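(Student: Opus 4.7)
The plan is to build a search tree $\T(P,\e)$ of height $O(\ln n)$ by balanced recursive partitioning, and to answer each $\e$-\nn query by a top-down traversal that invokes the $(r,\e)$-\pleb data structures stored at the visited nodes. Each node $v$ is assigned a subset $P_v\subseteq P$ and a candidate interval $[r_v,R_v]$ meant to contain $\dnnp{q}$ for queries that reach $v$. The root covers the full meaningful distance range; the tree refines this range as one descends, and the construction is organized so that at every node the ratio $R_v/r_v$ is bounded by a polynomial in $n/\e$ of bounded degree.

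At each node $v$, I would store one $(r,\e)$-\pleb data structure per value of $r$ taken in a geometric progression of ratio $(1+\e)$ covering $[r_v,R_v]$. The number of these is $O(\log_{1+\e}(R_v/r_v)) = O(\frac{1}{\e}\ln\frac{n}{\e})$, matching the claimed per-node storage.

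To answer a query, start at the root and, at each visited node $v$, issue the two boundary $(r,\e)$-\pleb queries at $r=r_v$ and $r=R_v$. If the $R_v$-query returns \emph{no}, descend into the child whose interval covers larger radii; if the $r_v$-query returns \emph{yes}, descend into the child whose interval covers smaller radii. Otherwise $\dnnp{q}$ lies (approximately) in $[r_v,R_v]$, and a binary search among the stored $O(\log_{1+\e}(R_v/r_v))$ structures locates $\dnnp{q}$ up to a factor $1+\e$ using $O(\log_2\log_{1+\e}(R_v/r_v))=O(\ln\frac{n}{\e})$ extra $(r,\e)$-\pleb queries, returning the associated witness as a point of $\ennp{\e}{q}$. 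Summing, the $O(\ln n)$ internal nodes along the traversal contribute $O(\ln n)$ queries and the stopping node contributes $O(\ln\frac{n}{\e})$, for the claimed total of $O(\ln n + \ln\frac{1}{\e}+\ln\ln\frac{n}{\e}) = O(\ln\frac{n}{\e})$.

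The main obstacle is to design the tree so that three properties hold simultaneously: height $O(\ln n)$, every per-node ratio $R_v/r_v$ polynomial in $n/\e$, and soundness of the routing performed by the two boundary tests. The first two properties force an upfront reduction of the effective distance spread in $P$ to a polynomial in $n/\e$, obtained in a general metric space by snapping candidate radii to powers of $1+\e$ after a crude $O(n)$-factor initial bracketing of the scale of $\dnnp{q}$. Once this reduction is in place, the $(1+\e)$ slack in the $(r,\e)$-\pleb queries translates into a controlled overlap between the intervals of sibling nodes, which is enough to guarantee that the traversal always ends at a node whose interval captures $\dnnp{q}$ up to the target $(1+\e)$ factor.
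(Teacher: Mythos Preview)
Your proposal is correct and matches the paper's own sketch almost point for point: tree of height $O(\ln n)$ with nodes carrying subsets $P_v$ and intervals $[r_v,R_v]$ whose ratio is polynomial in $n/\e$, a geometric progression of $(r,\e)$-\pleb structures per node, two boundary queries to route the traversal, and a final binary search at the stopping node. The paper does not prove this theorem either but only outlines the construction and defers to~\cite{HarPeled01}; the one place where your sketch is slightly looser than the paper's is the last paragraph, where the polynomial bound on $R_v/r_v$ in Har-Peled's construction comes directly from the recursive partitioning of $P$ rather than from an upfront spread reduction.
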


\subsection{Solving $(r,\e)$-\pleb queries by means of Locality-Sensitive Hashing}
\label{sec:lsh-ernn}

\begin{defn}\label{def:lshashfunc}
  Given a metric space $(X,\dist)$ and two radii $r_1<r_2$, a family
  $\f=\{f:X\rightarrow \Z\}$ of hash functions is called {\em
    $(r_1,r_2,p_1,p_2)$-sensitive} if there exist quantities $1> p_1 >
  p_2 > 0$ such that $\forall x,y\in X$,
  \begin{slist}
  \item[$\bullet$] $\bd(x,y)\leq r_1$ $\Rightarrow$ $\p{f(x)=f(y)}\geq p_1$,
  \item[$\bullet$] $\bd(x,y)\geq r_2$ $\Rightarrow$ $\p{f(x)=f(y)}\leq p_2$,
  \end{slist}
  where probabilities are given for a random choice of hash
  function $f\in \f$ according to some probability distribution
  over the family.
\end{defn}
Intuitively, a $(r_1,r_2,p_1,p_2)$-sensitive family of hash functions
distinguishes points that are close together from points
that are far apart. 

Assuming that a $(r, r(1+\e), p_1, p_2)$-sensitive family $\f$ of hash
functions is given, it is possible to answer $(r,\e)$-\pleb queries in
sub-linear time \citep{GIM99,IM98}. The algorithm proceeds as follows:
\begin{itemize}
\item In the pre-processing phase, it {\em boosts} the sensitivity of
  the family $\f$ by building $k$-dimensional vectors $g=(f_1, \cdots,
  f_k): X\to \Z^k$ whose coordinate functions $f_i$ are drawn
  independently at random from $\f$. The hash key of a point $x\in X$
  is now a $k$-dimensional vector $g(x)=(f_1(x), \cdots, f_k(x))$, and
  two keys $g(x)$ and $g(y)$ are equal if and only if
  $f_i(x)=f_i(y)$ for all $i=1, \cdots, k$. Call $\g$ the family of
  such random hash vectors. The algorithm draws $L$ elements $g_1,
  \cdots, g_L$ independently from $\g$, and it builds the $L$
  corresponding hash tables $H_1, \cdots, H_L$. It then hashes each
  data point $p\in P$ into every hash table $H_i$ using vector
  $g_i(p)$ as the hash key.

\item In the online query phase, the algorithm hashes the query point
  $q$ into each of the $L$ hash tables, and it collects all the
  points colliding with $q$ therein, until either some point $p\in
  \nnpqr{P}{q}{r(1+\e)}$ has been found or more than $3L$ points
  (including duplicates) have been collected in total. In the former
  case the algorithm answers YES and returns $p$, while in the latter
  case it answers NO. It also answers NO if no point of
  $\nnpqr{P}{q}{r(1+\e)}$ has been found after visiting all the hash
  tables.
\end{itemize}

Letting $k=\lceil\frac{\ln n}{\ln \nicefrac{1}{p_2}}\rceil$ and
$L=\lceil \frac{n^\varrho}{p_1}\rceil$, where $\varrho=\frac{\ln p_1}{\ln p_2}$,
one can prove that this procedure gives the correct answer with
constant probability~\cite{GIM99,IM98}. By repeating it $\omega\ln n$
times, for a fixed constant $\omega> 0$, one can increase the probability of
success to at least $1-\frac{1}{n^\omega}$.  Thus,
\begin{thm}[see ~\citep{GIM99,IM98}]\label{th:lsh4renn}
  Given a finite set $P$ with $n$ points in $(X, \bd)$, two parameters
  $r,\e> 0$, and a $(r,r(1+\e),p_1,p_2)$-sensitive family $\f$ of hash
  functions for some constants $p_1>p_2$, the LSH data structure has
  size $O(\frac{n^{1+\varrho}}{p_1}\ln n)$ and answers $(r,\e)$-\pleb
  queries correctly with high probability in
  $O(\frac{n^\varrho}{p_1}\frac{\ln n}{\ln\nicefrac{1}{p_2}}\ln n)$
  time, where $\varrho = \frac{\ln p_1}{\ln p_2}<1$.
\end{thm}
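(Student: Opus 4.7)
The plan is to analyze the algorithm sketched immediately above the theorem with the specific parameter choices $k=\lceil\ln n/\ln(1/p_2)\rceil$ and $L=\lceil n^\varrho/p_1\rceil$, and then amplify a constant success probability by independent repetition. Since the algorithm only outputs YES when a witness $p$ with $\bd(p,q)\leq r(1+\e)$ has actually been retrieved from some hash table, every YES answer is trivially correct. The only failure mode is thus returning NO when in fact $\bd(q,P)\leq r$, and I would bound its probability by splitting it into two events: event (F1), no point of $\nnpqr{P}{q}{r(1+\e)}$ collides with $q$ in any of the $L$ tables; and event (F2), the total number of collisions between $q$ and points at distance strictly greater than $r(1+\e)$ exceeds $3L$.

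To bound $\Pr[\text{F1}]$, fix any $p^\ast\in P$ with $\bd(q,p^\ast)\leq r$, which exists by assumption. The lower sensitivity bound gives $\Pr[f(q)=f(p^\ast)]\geq p_1$ for a single $f\in\f$ drawn at random, hence $\Pr[g(q)=g(p^\ast)]\geq p_1^k$ for a boosted hash vector $g\in\g$. A short computation using $k\leq\ln n/\ln(1/p_2)+1$ shows $p_1^k\geq p_1\cdot n^{-\varrho}$, so the choice $L\geq n^\varrho/p_1$ forces $Lp_1^k\geq 1$; independence across tables then yields $\Pr[\text{F1}]\leq(1-p_1^k)^L\leq\exp(-Lp_1^k)\leq 1/e$. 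For $\Pr[\text{F2}]$, the upper sensitivity bound gives $\Pr[g(q)=g(p)]\leq p_2^k\leq 1/n$ for every $p$ with $\bd(q,p)>r(1+\e)$, thanks to the choice of $k$. Linearity of expectation over at most $n$ such points and $L$ tables bounds the expected number of bad collisions by $L$, and Markov's inequality gives $\Pr[\text{F2}]\leq 1/3$.

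Under the complementary event, (F1) fails and (F2) fails simultaneously, so at least one good collision occurs while the total number of bad collisions stays below $3L$; since the algorithm halts as soon as a good collision is processed, it terminates successfully within the collection cap. This gives single-run success probability at least $1-(1/e+1/3)$, a positive constant; repeating the construction $\omega\ln n$ times independently boosts this to $1-n^{-\omega}$ by standard Chernoff or direct product arguments. For the space bound, each of the $L$ tables stores $n$ (key, pointer) entries, and there are $\Theta(\ln n)$ independent copies of the whole structure, yielding total size $O(nL\ln n)=O(p_1^{-1}n^{1+\varrho}\ln n)$. For the query time, evaluating one vector $g_i(q)$ costs $O(k)$ calls to hash functions in $\f$, there are $L$ tables per copy and $\Theta(\ln n)$ copies, and the $3L$ cap bounds the reporting work per copy; the overall cost works out to $O(p_1^{-1}n^\varrho\cdot\ln n/\ln(1/p_2)\cdot\ln n)$, matching the theorem.

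The mildly subtle step is the interaction between (F1), (F2) and the early-termination cap of $3L$: one has to argue that when the bad-collision bound holds, the first good collision is in fact processed before the cap is reached, which relies on the fact that the cap counts bad and good collisions together and that both are bounded. Once this bookkeeping is in place, everything else reduces to the two elementary probability estimates above; I do not foresee any other serious obstacle.
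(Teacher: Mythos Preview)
Your proposal is correct and is precisely the standard LSH analysis from \cite{GIM99,IM98}. Note that the paper does not actually prove this theorem: it only describes the algorithm, states the parameter choices $k=\lceil\ln n/\ln(1/p_2)\rceil$ and $L=\lceil n^\varrho/p_1\rceil$, and then defers to the cited references with the sentence ``one can prove that this procedure gives the correct answer with constant probability~\cite{GIM99,IM98}''. Your write-up supplies exactly the argument those references contain, including the two-event decomposition (missed near neighbor versus too many far collisions), the Markov bound on far collisions, and the amplification by $\Theta(\ln n)$ independent repetitions; the bookkeeping you flag about the $3L$ cap is handled correctly.
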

Note that the running time bound ignores the time needed to compute
distances and to evaluate hash functions. These typically depend on
the metric space $(X,\bd)$ and hash family $\f$ considered.
The probabilities $p_1,p_2$ also depend on
$\f$, therefore they may vary with $r$ and $\e$.

\subsection{The case of affine spaces}
\label{sec:pleb_affine}

In most of the paper the ambient space $X$ will be the affine
space $\R^d$ equipped with some $\ell_s$-norm, $s\in
(0,2]$, and $\dist$ will denote the induced distance: $\forall x,y\in
\R^d$, $\dist(x,y) = \|x-y\|_s = \left(\sum_{i=1}^d
|x_i-y_i|^s\right)^{1/s}$, where $x_i,y_i$ stand for the $i$-th
coordinates of $x,y$.

In $(\R^d, \ell_s)$ we use the families of hash functions introduced
by~\citet{DIIM04}\footnote{A possible improvement would be to use the
  hash functions defined by~\citet{AI06} instead, which are known to
  give better complexity bounds. For now we leave this as future
  work.}, which are derived from so-called {\em $s$-stable
  distributions}. A distribution $D$ over the reals is called
$s$-stable if any linear combination $\sum_{i} \alpha_i X_i$ of
finitely many independent variables $X_i$ with distribution $D$ has
the same distribution as $(\sum_i |\alpha_i|^s)^{1/s} X$, where $X$ is
a random variable with distribution $D$.  Given such a distribution
$D$, one can build $(r, r(1+\e), p_1, p_2)$-sensitive families of hash
functions in $(\R^d, \ell_s)$ for any radius $r> 0$ and any
approximation parameter $\e>0$ as follows. First, rescale the data and
query points so that $r=1$. Then, choose a real value $w>0$ and define
a two-parameters family of hash functions
$\f=\{f_{a,b}:\R^d\to\Z\}_{a\in\R^d, b\in [0,w)}$ by
  $f_{a,b}(x)=\lfloor \frac{x\cdot a + b}{w}\rfloor$, where $\cdot$
  stands for the inner product in $\R^d$. The probability distribution
  over the family is not uniform: the coordinates of vector
  $a$ are chosen independently according to $D$, while $b$ is drawn
  uniformly at random from the interval $[0,w)$.  The local
    sensitivity of this family depends on the choice of parameter
    $w$. More precisely, according to~\citet{DIIM04}, given two points
    at distance $l$ of each other, the probability (over a random
    choice of hash function) that these points collide is
\begin{equation}\label{eq:proba_collision_ls}
\Prob(l)=\int_0^w \frac{1}{l}\; \mathrm{f}_D\!\! \left(\frac{t}{l}\right)
    \left(1-\frac{t}{w}\right)dt,
\end{equation}
 where $\mathrm{f}_D$ denotes the probability density function of the
 absolute value of $D$. The probabilities $p_1,p_2$ in
 Theorem~\ref{th:lsh4renn} are then obtained as $\Prob(1)$ and $\Prob(1+\e)$
 respectively. They do not depend on $r$, thanks to the
 rescaling. Note that they do note depend on the dimension $d$
 either.

Focusing back on Har-Peled's construction, recall from
Theorem~\ref{thm:cnn} that each node $v$ of the tree $\T(P,\e)$ stores
$O(\frac{1}{\e}\ln\frac{n}{\e})$ data structures for answering
$(r,\e)$-\pleb queries, each of size $O(|P_v|^{1+\varrho}\ln
|P_v|)$. Let us point out that by construction the subsets of $P$
assigned to the sons of $v$ form a partition of $P_v$. Then, a
recursion gives the following bounds on the size of $T(P,\e)$ and on
the query time\footnote{Our complexity bounds differ from
  the ones of~\citet{HarPeledIndykMotwani} in that the $\ln \ln n$
  factor in their bounds is replaced by a $\ln n$ factor in ours. This
  difference comes from the fact that we run the LSH procedure
  $\omega\ln n$ times, for a fixed constant $\omega$, to make
  its output correct with probability at least $1-\frac{1}{n^\omega}$,
  so the full $\e$-\nn algorithm can be correct with probability at
  least $1-\frac{1}{n}$, which will be useful in the rest of the
  paper. By contrast, the analysis in~\cite{HarPeledIndykMotwani} only
  runs the LSH procedure $O(\ln \ln n)$ times, to make the $\e$-\nn
  algorithm correct with constant probability.}:
\begin{cor}[see~\cite{HarPeledIndykMotwani}]\label{cor:lsh4enn}
  Given a finite set $P$ with $n$ points in $(\R^d, \ell_s)$, $s\in
  (0,2]$, and a parameter $\e > 0$, the tree structure $\T(P,\e)$ and
  its associated $(r,\e)$-\pleb data structures can answer $\e$-\nn
  queries correctly with high probability in
  $O\left(\frac{n^{\varrho}}{p_1}\frac{\ln n}{\ln \nicefrac{1}{p_2}} \ln n
  \ln\frac{n}{\e}\right)$ time using
  $O\left(\frac{1}{\e}\frac{n^{1+\varrho}}{p_1}\ln^2 n\ln \frac{n}{\e}\right)$
  space, where $\varrho = \frac{\ln p_1}{\ln p_2}<1$, the quantities
  $p_1=\Prob(1)$ and $p_2=\Prob(1+\e)$ being derived from some $s$-stable
  distribution $D$ according to Eq.~(\ref{eq:proba_collision_ls}).
\end{cor}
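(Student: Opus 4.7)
The plan is to combine Har-Peled's construction (Theorem~\ref{thm:cnn}) with the LSH guarantee (Theorem~\ref{th:lsh4renn}), instantiating the hash family by the $s$-stable construction of Section~\ref{sec:pleb_affine}. At every node $v$ of $\T(P,\e)$ and for every value of $r$ in the geometric progression stored at $v$, I would attach an LSH data structure for $(r,\e)$-\pleb built from the family $\f_{a,b}$, after rescaling the data so that the ``close'' radius is $1$. The integrals in Eq.~(\ref{eq:proba_collision_ls}) then yield sensitivity parameters $p_1=\Prob(1)$ and $p_2=\Prob(1+\e)$ that are independent of both $r$ and $d$, and the exponent $\varrho=\ln p_1/\ln p_2$ is a constant depending only on $\e$ and on the underlying distribution~$D$.

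For the query time, Theorem~\ref{thm:cnn} reduces a single $\e$-\nn query to $O(\ln\frac{n}{\e})$ queries of type $(r,\e)$-\pleb, distributed along a root-to-leaf path of $\T(P,\e)$. By Theorem~\ref{th:lsh4renn}, each such PLEB query runs in $O\!\left(\frac{n^\varrho}{p_1}\frac{\ln n}{\ln\nicefrac{1}{p_2}}\ln n\right)$ time and is correct with probability at least $1-1/n^\omega$ for a fixed constant $\omega$. Multiplying the two bounds gives the stated query-time complexity, while a union bound over the $O(\ln\frac{n}{\e})$ PLEB invocations keeps the overall $\e$-\nn procedure correct with high probability.

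For the space bound, I would exploit the structural property noted in Theorem~\ref{thm:cnn}: the subsets $P_v$ assigned to the children of a node partition the parent's subset, hence they partition $P$ at every fixed level of $\T(P,\e)$. Since the map $x\mapsto x^{1+\varrho}$ is convex and $\sum_v |P_v|=n$ at each level, one has $\sum_v |P_v|^{1+\varrho}\leq n^{1+\varrho}$. Each data structure at $v$ has size $O(|P_v|^{1+\varrho}\ln|P_v|/p_1)$ by Theorem~\ref{th:lsh4renn}, and each node stores $O(\frac{1}{\e}\ln\frac{n}{\e})$ such structures; summing over nodes at one level and then over the $O(\ln n)$ levels of $\T(P,\e)$ yields the claimed $O(\frac{1}{\e}\frac{n^{1+\varrho}}{p_1}\ln^2 n\ln\frac{n}{\e})$ bound.

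There is no real obstacle here; the only point requiring minor care is aligning the success probabilities so that, after the union bound over all PLEB calls performed along a root-to-leaf traversal, the $\e$-\nn procedure still succeeds with high probability in $n$. This is precisely what motivates the extra $\ln n$ factor in the per-PLEB running time coming from the $\omega\ln n$ independent LSH repetitions, as emphasized in the footnote preceding the statement.
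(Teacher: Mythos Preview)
Your proposal is correct and follows essentially the same route as the paper: combine Theorem~\ref{thm:cnn} with Theorem~\ref{th:lsh4renn} instantiated via the $s$-stable family of Section~\ref{sec:pleb_affine}, use the partition property of the $P_v$'s level by level for the space bound, and appeal to the $\omega\ln n$ repetitions plus a union bound for correctness. The paper compresses the space argument into the phrase ``a recursion gives the following bounds,'' whereas you spell it out via $\sum_v |P_v|^{1+\varrho}\le n^{1+\varrho}$ at each level and then sum over the $O(\ln n)$ levels; this is exactly the recursion the paper has in mind.
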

Here again the running time bound ignores the time needed to compute
distances and to evaluate hash functions, which is $O(d)$ per
operation (distance computation or hash function evaluation) in
$\R^d$. From now on we will also ignore poly-logarithmic factors in
$\frac{n}{\e}$ and hide them within big-$\tilde O$ notations for the
sake of simplicity. Thus, the time and space complexities given in
Theorem~\ref{th:lsh4renn} become respectively $\tilde
O(\frac{n^\varrho}{p_1\ln\nicefrac{1}{p_2}})$ and $\tilde
O(\frac{n^{1+\varrho}}{p_1})$, while those given in Corollary~\ref{cor:lsh4enn}
become respectively $\tilde O(\frac{n^\varrho}{p_1\ln\nicefrac{1}{p_2}})$
and $\tilde O(\frac{1}{\e}\frac{n^{1+\varrho}}{p_1})$.

The challenge now is to choose a value for parameter $w$ that makes
$\varrho$ as small as
possible. The {\em best} value for $w$ heavily depends on $s$ and
$\e$, and it may be difficult to find for some values of $s,\e$,
especially when no closed form solution to
Eq.~(\ref{eq:proba_collision_ls}) is known. Two special cases of
practical interest ($s=1$ and $s=2$) are analyzed
in~\cite{DIIM04}:
\begin{itemize}
\item In the case $s=1$, one can use the Cauchy distribution (which is
  $1$-stable) to derive a family of hash functions, and the
  probability of collision becomes $\Prob(l)=2\frac{\arctan(w/l)}{\pi} -
  \frac{1}{\pi(w/l)}\ln(1+(w/l)^2)$. The ratio $\varrho=\frac{\ln
    p_1}{\ln p_2}$ lies then strictly above $\frac{1}{1+\e}$, yet
  larger and larger values of parameter $w$ make it closer and
  closer to $\frac{1}{1+\e}$.
\item In the case $s=2$, one can use the normal distribution
  $\mathcal{N}(0,1)$ (which is $2$-stable), and the probability of
  collision becomes $\Prob(l) = 1-2\cdfn(-w/l) -
  \frac{2}{\sqrt{2\pi}w/l}(1- e^{-w^2/2l^2})$, where
  $\cdfn$ stands for the cumulative
  distribution function of $\mathcal{N}(0,1)$.  The ratio
  $\varrho=\frac{\ln p_1}{\ln p_2}$ lies then below $\frac{1}{1+\e}$
  for reasonably small values of parameter $w$.
\end{itemize}
The results obtained by~\citet{DIIM04} can be extended to any
$s\in [1,2]$ via low-distortion embeddings~\cite{js-elpilo}.  In the
rest of the paper we will follow~\cite{DIIM04} and use respectively
the Cauchy distribution and the normal distribution in the cases $s=1$
and $s=2$. An analysis of the influence of the choice of
parameter $w$ on the quantities $\varrho$, $\frac{1}{p_1}$ and
$\frac{1}{\ln\nicefrac{1}{p_2}}$ will be provided in
Section~\ref{sec:erpleb_Rd}.

\section{Exhaustive $r$-\pleb}
\label{sec:all-re-nn}

Let $(X,\bd)$ be a metric space and $P$ a finite subset of $X$. The
following variant of $r$-\pleb, where all the $r$-balls containing the
query point are asked to be retrieved, will play a central part in the
rest of the paper:
%
\begin{problem}[Exhaustive $r$-\pleb] \label{problem:allnnexact}
  Given a query point $q\in X$, the {\em exhaustive $r$-\pleb} query
  asks to return the set $\nnpqr{P}{q}{r}$.
\end{problem}
This problem is introduced under the name {\em near-neighbors
  reporting} in previous literature~\cite[Chapter~1]{SDI05}, where a
variant of the LSH scheme of Section~\ref{sec:lsh-ernn} is proposed
for solving it. The difference with the original LSH scheme is that
the query procedure does not stop when $3L$ collisions with the query
point $q$ have been found, but instead it continues until all the
points colliding with $q$ in the $L$ hash tables have been
collected. The output is then the subset of these points that lie
within $\nnpqr{P}{q}{r}$. The details of the pre-processing and query
phases are given in Algorithms~\ref{alg:nnpre} and~\ref{alg:nnquery}
respectively, where the data structure is called $\A(P,r,\e)$. Note
that parameter $\e$ no longer controls the quality of the output,
which is shown to coincide with the set $\nnpqr{P}{q}{r}$ with high
probability, but instead it influences the average complexity of the
procedure, as we will see later on.

\begin{algorithm}[!htb]
  \LinesNumbered
  \SetKwInOut{Input}{Input}
  \SetKwInOut{Output}{Output}

  \Input{metric space $(X,\bd)$, finite set $P$ with $n$ points in $X$, parameters
    $r,\e> 0$}
  \Output{$\A(P, r, \e)$ data structure}
  \BlankLine

  Take an $(r, r(1+\e), p_1, p_2)$-sensitive LSH family $\f$\; 
  Let $k=\lceil\frac{\ln n}{\ln\nicefrac{1}{p_2}}\rceil$ and $L=\lceil \frac{n^\varrho}{p_1}\rceil$, where $\varrho=\frac{\ln p_1}{\ln p_2}$\;
  Create the $k$-dimensional hash family $\g$ as described in Section~\ref{sec:lsh-ernn}\;
  \For{$i=1$ to $\lceil c\ln n\rceil$  \tcp{$c$ is a constant to be explicited later} } {  
    pick $L$ functions $\{g_1,\ldots,g_{L}\}$ independently at random from $\g$\;
    Create the corresponding hash tables $\{H_1,\ldots,H_{L}\}$\;
    \ForAll{$p\in P$} {
      \For{$j=1$ to $L$} {
        Insert $p$ into $H_j$ using the key $g_j(p)$\;
      }
    }
    Store the data structure $\A_i(P, r, \e):=\{g_1, \cdots, g_{L}\}\sqcup \{H_1, \cdots, H_{L}\}$\;
  }
  Output $\A(P, r, \e) := \bigcup_i \A_i(P, r, \e)$\;

  \caption{\em Pre-processing phase for exhaustive $r$-\pleb}
  \label{alg:nnpre}
\end{algorithm}

\begin{algorithm}[!hbt]
  \LinesNumbered
  \SetKwInOut{Input}{Input}
  \SetKwInOut{Output}{Output}

  \Input{metric space $(X,\bd)$, $\A(P, r, \e)$ data structure, query point $q\in X$} 
\BlankLine

  Let $k$, $L$, $\varrho$  and $c$ be defined as in Algorithm~\ref{alg:nnpre}\;
  Initialize the output set: $S:=\emptyset$\;
  \For{$i=1$ to $\lceil c\ln n\rceil$} {
    Let  $\{g_1,\ldots,g_{L}\}$ be the functions and
    $\{H_1,\ldots,H_{L}\}$ the tables contained in $\A_i(P, r,\e)$\;
    \For{$j=1$ to $L$} {
      Compute $g_j(q)$ and retrieve the set $C_j$ of the points colliding with $q$ in $H_j$\;
      \ForAll{$p\in C_j$} {
        \If{$\bd(q,p) \leq r$\nllabel{line:approxnn}} {
          Update the output set:
          $S := S\cup\{p\}$\;
        }
      }
    }\nllabel{line:breakjump}
  }
  Return $S$\;

  \caption{\em Online query phase for exhaustive $r$-\pleb}
  \label{alg:nnquery}
\end{algorithm}

In Section~\ref{sec:erpleb_general} we revisit the analysis
of~\cite[Chapters~1 and~3]{SDI05} and quantify more precisely the
amount of collisions with the query point that may occur within the
hash tables. To this end we use the following refined concept of
locality-sensitive family of hash functions\footnote{An even finer
  concept, proposed in~\cite[\textsection~3.3]{SDI05}, makes the
  probability of having $f(x)=f(y)$ a function of the distance between
  $x$ and $y$. However, for our purpose it is not necessary to go to
  this level of refinement.}:
\begin{defn}\label{def:stronglshashfunc}
  Given a metric space $(X,\dist)$ and positive radii $r_0\leq
  r_1<r_2$, a family $\f=\{f:X\rightarrow \Z\}$ of hash functions is
  called {\em $(r_0,r_1,r_2,p_0,p_1,p_2)$-sensitive} if there exist
  quantities $1>p_0\geq p_1>p_2 > 0$ such that $\forall x,y\in X$,
  \begin{slist}
  \item[\rm (i)] $\bd(x,y)\leq r_1$ $\Rightarrow$ $\p{f(x)=f(y)}\geq p_1$,
  \item[\rm (ii)] $\bd(x,y)\geq r_2$ $\Rightarrow$ $\p{f(x)=f(y)}\leq p_2$,
  \item[\rm (iii)] $\bd(x,y)\geq r_0$ $\Rightarrow$ $\p{f(x)=f(y)}\leq p_0$,
  \end{slist}
  where probabilities are given for a random choice of hash function
  $f\in \f$ according to some probability distribution
  over the family.
\end{defn}
Axioms (i) and (ii) correspond to the classical notion of
locality-sensitive family of hash functions
(Definition~\ref{def:lshashfunc}). They do not make it possible to
limit the number of collisions between the query point $q$ and the
points of $\nnpqr{P}{q}{r_1}$ in the analysis of exhaustive
$r_1$-\pleb queries. Specifically, every point of $\nnpqr{P}{q}{r_1}$
might collide with $q$ in every hash table in theory, thus raising the
cost of an exhaustive $r_1$-\pleb query to $\Omega(n^\varrho)$ per
point of $\nnpqr{P}{q}{r_1}$. This is in fact all theoretical, since
in practice the hash functions are likely to make a difference between
those points of $\nnpqr{P}{q}{r_1}$ that are really close to $q$ and
those that are farther away. This is the reason for introducing the
third axiom (iii), which will prove its usefulness in
Section~\ref{sec:erpleb_Rd}, where we concentrate on the case where
the ambient space is $(\R^d, \ell_s)$, $s\in (0,2]$, and show that a
non-isometric embedding of the data into $(\R^{d+1}, \ell_s)$ enables
us to move the sets of data and query points away from each other.

\subsection{Revisiting the analysis in the general case}
\label{sec:erpleb_general}

\begin{thm} \label{thm:allnn}
  Given a finite set $P\subseteq X$ with $n$ points and two parameters
  $r,\e> 0$, if $(X, \dist)$ admits a $(r_1, r_2, p_1,
  p_2)$-sensitive family $\f$ of hash functions with $r_1=r$ and
  $r_2\leq r(1+\e)$, then Algorithm~\ref{alg:nnquery} answers
  exhaustive $r$-\pleb queries correctly with high probability in
  expected $\tilde O(\frac{n^\varrho}{p_1} (\frac{1}{\ln
    \nicefrac{1}{p_2}}+1+|\nnpqr{P}{q}{r(1+\e)}|) )$ time,
  involving $\tilde O(\frac{n^\varrho}{p_1} +
  |\nnpqr{P}{q}{r(1+\e)}|)$ distance computations and $\tilde
  O(\frac{n^\varrho}{p_1 \ln \nicefrac{1}{p_2}})$ hash function
  evaluations only, and using $\tilde O(\frac{n^{1+\varrho}}{p_1})$ space, where
  $\varrho=\frac{\ln p_1}{\ln p_2}$. If moreover the family $\f$ is
  $(r_0, r_1, r_2, p_0, p_1, p_2)$-sensitive for some $r_0\leq r_1$,
  then for any query point $q\in X$ the algorithm answers the exhaustive
  $r$-\pleb query in expected $\tilde O(\frac{n^\varrho}{p_1}(\frac{1}{\ln
    \nicefrac{1}{p_2}}+1+|\nnpqr{P}{q}{r_0}|) +
  \frac{n^\alpha}{p_1}|\nnpqr{P}{q}{r(1+\e)}\setminus \ball(q, r_0)|)$
  time, where $\alpha=\varrho(1-\frac{\ln p_0}{\ln p_1})\leq\varrho$.
\end{thm}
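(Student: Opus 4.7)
The plan is to follow the classical Gionis--Indyk--Motwani LSH analysis for correctness, then do a careful accounting of collisions to reach the separate bounds on hash evaluations, distance computations, and total running time, and finally layer the third sensitivity axiom on top to sharpen the bound. Correctness is standard: with $k=\lceil \ln n/\ln(1/p_2)\rceil$ and $L=\lceil n^\varrho/p_1\rceil$, one checks that $p_1^k\ge p_1\,n^{-\varrho}$ and $p_2^k \le 1/n$, so for any fixed $p\in\nnpqr{P}{q}{r}$ the probability that $p$ fails to collide with $q$ in any of the $L$ tables of a single structure $\A_i(P,r,\e)$ is at most $(1-p_1/n^\varrho)^L\le e^{-1}$. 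Running the construction $\lceil c\ln n\rceil$ independent times drives this failure probability below $n^{-\omega}$ for a constant $\omega$ of our choice, and a union bound over the at most $n$ candidate points delivers the ``high probability'' guarantee.

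For the baseline running time I would split the data points by distance to $q$. A point at distance $>r_2$ collides with $q$ in any fixed table with probability at most $p_2^k\le 1/n$, so the expected total number of (point, table) collisions from such ``far'' points is $\le L\cdot n\cdot(1/n)=L$. A point at distance $\le r_2$ lies in $\nnpqr{P}{q}{r(1+\e)}$ since $r_2\le r(1+\e)$, and contributes at most $L$ collisions. Summing and multiplying by the outer $O(\ln n)$ iterations gives $\tilde O((n^\varrho/p_1)(1+|\nnpqr{P}{q}{r(1+\e)}|))$ collisions, and evaluating all $g_j(q)$ costs $O(kL)$ hash function applications, i.e.\ $\tilde O(n^\varrho/(p_1\ln(1/p_2)))$ after the outer loop. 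The distance computation bound, which does \emph{not} contain the product $L\cdot|\nnpqr{P}{q}{r(1+\e)}|$, is obtained by maintaining within each iteration a hash set of already-processed points so that each distinct collided point triggers at most one distance computation: the expected number of distinct far points colliding at all in the $L$ tables of a single iteration is again $\le L$, while close points contribute at most $|\nnpqr{P}{q}{r(1+\e)}|$, as desired. The space bound is immediate from storing $L$ tables of $n$ entries, repeated $O(\ln n)$ times.

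For the refined statement, the close set $\nnpqr{P}{q}{r(1+\e)}$ is further split into its intersection with $\ball(q,r_0)$ and the annulus $\nnpqr{P}{q}{r(1+\e)}\setminus\ball(q,r_0)$. Axiom~(iii) bounds the per-table collision probability of an annular point by $p_0^k$, so the annulus contributes expected collisions at most $L\,p_0^k\,|\nnpqr{P}{q}{r(1+\e)}\setminus\ball(q,r_0)|$. The crucial algebraic step is
\[
L\,p_0^k \;\le\; \frac{n^\varrho}{p_1}\cdot p_0^{\ln n/\ln(1/p_2)} \;=\; \frac{n^\varrho}{p_1}\cdot n^{-\ln(1/p_0)/\ln(1/p_2)} \;=\; \frac{n^\alpha}{p_1},
\]
where, using $\varrho=\ln p_1/\ln p_2$,
\[
\alpha \;=\; \varrho - \frac{\ln(1/p_0)}{\ln(1/p_2)} \;=\; \frac{\ln p_1-\ln p_0}{\ln p_2} \;=\; \varrho\Bigl(1-\tfrac{\ln p_0}{\ln p_1}\Bigr),
\]
and $0\le\alpha\le\varrho$ because $p_2<p_1\le p_0<1$. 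Points inside $\ball(q,r_0)$ still contribute up to $L$ collisions each, for a total of $\tilde O((n^\varrho/p_1)\,|\nnpqr{P}{q}{r_0}|)$, while far points still contribute $\tilde O(n^\varrho/p_1)$. Combining these three contributions with the hashing cost yields the refined bound. The main obstacle is exactly this algebraic identification of $\alpha$ with the stated expression $\varrho(1-\ln p_0/\ln p_1)$; once that computation is in place the refined claim is a routine substitution into the baseline collision accounting, and the only minor implementation subtlety is the deduplication bookkeeping that separates ``distinct collisions'' from ``total collisions'' in the distance-computation count.
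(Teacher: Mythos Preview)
Your proposal is correct and follows essentially the same approach as the paper: the same correctness argument via $p_1^k\ge p_1 n^{-\varrho}$ and amplification over $\lceil c\ln n\rceil$ independent copies, the same far/close split for collision accounting (the paper's Lemmas~\ref{lem:falsepos} and~\ref{lem:realpos}), the same bookkeeping trick to deduplicate distance computations, and the identical algebraic derivation of $\alpha=\varrho(1-\ln p_0/\ln p_1)$. The only cosmetic difference is that you split at radius $r_2$ rather than $r(1+\e)$ and then use $r_2\le r(1+\e)$, which is arguably slightly cleaner than the paper's phrasing.
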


The first term ($\frac{n^\varrho}{p_1 \ln \nicefrac{1}{p_2}}$) in the
running time bound corresponds to the complexity of a standard
$(r,\e)$-\pleb query and can be viewed as the incompressible time
needed to locate the query point $q$ in the data structure. The second
term ($\frac{n^\varrho}{p_1}$) bounds the total number of collisions
of $q$ with data points lying outside $\ball(q, r(1+\e))$.  The third
term ($\frac{n^\varrho}{p_1} |\nnpqr{P}{q}{r_0}|$) arises from the
fact that a data point lying within distance $r_0$ of $q$ may
collide in every single hash table with $q$. Finally, the last term
($\frac{n^\alpha}{p_1} |\nnpqr{P}{q}{r(1+\e)}\setminus\ball(q,r_0)|$)
arises from the fact that the points of $\nnpqr{P}{q}{r(1+\e)}$ that
lie farther than $r_0$ can only collide up to $\frac{n^\alpha}{p_1}$
times with $q$ each, for some $\alpha\leq\varrho$. Note that the less
sensitive the family $\f$ between radii $r_0$ and $r_1$, the closer to
$1$ the ratio $\frac{\ln p_0}{\ln p_1}$, and therefore the smaller
$\alpha$ compared to $\varrho$. By contrast, the more sensitive the
family between radii $r_1$ and $r_2$, the smaller the ratio
$\varrho=\frac{\ln p_1}{\ln p_2}$ compared to $1$.

Our proof of Theorem~\ref{thm:allnn} follows previous
literature~\cite{HarPeledIndykMotwani} and is divided into three
parts: (1) proving the correctness of the output of
Algorithm~\ref{alg:nnquery} with high probability, (2) bounding the
expected query time, and (3) bounding the size of the data
structure. The novelty resides in Lemma~\ref{lem:realpos}, which
exploits the axiom (iii) of Definition~\ref{def:stronglshashfunc} to
bound the number of collisions of $q$ with points of
$\nnpqr{P}{q}{r(1+\e)}\setminus\ball(q,r_0)$.

\paragraph{Correctness of the output.}
\label{sec:A_correct}
Note that the test on line~\ref{line:approxnn} of
Algorithm~\ref{alg:nnquery} ensures that the output set $S$ is
always a subset of $\nnpqr{P}{q}{r}$. Thus, we only need to show that
$S$ contains all the points of $\nnpqr{P}{q}{r}$ with high probability
at the end of the query.
\begin{lem} \label{lem:success}
$\nnpqr{P}{q}{r}\subseteq S$ with probability at least
  $1-n^{1-c\ln\frac{5}{2}}$. 
\end{lem}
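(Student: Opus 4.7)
The plan is to fix a single point $p\in\nnpqr{P}{q}{r}$, bound the probability that $p\notin S$ at the end of Algorithm~\ref{alg:nnquery}, and then take a union bound over at most $n$ such points. Since $\bd(q,p)\leq r$, the distance test on line~\ref{line:approxnn} never rejects $p$, so $p\notin S$ holds precisely when $p$ fails to collide with $q$ in every one of the $L\lceil c\ln n\rceil$ hash tables built by Algorithm~\ref{alg:nnpre}.

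The next step is to lower-bound the collision probability in a single table. By axiom~(i) of $(r_1,r_2,p_1,p_2)$-sensitivity applied to a random $f\in\f$, one has $\p{f(p)=f(q)}\geq p_1$, and since each $g\in\g$ is a vector of $k$ independent draws from $\f$, this gives $\p{g(p)=g(q)}\geq p_1^k$. Plugging in $k=\lceil\ln n/\ln\nicefrac{1}{p_2}\rceil$ together with the identity $\varrho=\ln p_1/\ln p_2$ yields $p_1^k\geq p_1\,n^{-\varrho}$, and the choice $L=\lceil n^{\varrho}/p_1\rceil$ is then calibrated exactly so that $Lp_1^k\geq 1$.

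From here the argument is essentially mechanical. Within a single outer iteration, the $L$ independent draws of $g$ produce no collision between $p$ and $q$ with probability at most $(1-p_1^k)^L\leq e^{-Lp_1^k}\leq e^{-1}<2/5$, the last inequality holding because $e>5/2$. Since the $\lceil c\ln n\rceil$ outer iterations use independently drawn hash families, the probability that $p$ is missed in every iteration is at most $(2/5)^{c\ln n}=n^{-c\ln(5/2)}$. A union bound over the at most $n$ points of $\nnpqr{P}{q}{r}$ then gives the claimed failure probability $n^{1-c\ln(5/2)}$. The only delicate bit is the arithmetic verifying $p_1^k\geq p_1\,n^{-\varrho}$ and hence $Lp_1^k\geq 1$, but this is precisely what the parameter settings of Algorithm~\ref{alg:nnpre} are designed to guarantee, so no real obstacle arises; the slightly weaker constant $2/5$ (in place of $1/e$) is used only to produce a clean exponent $c\ln(5/2)$ in the final bound.
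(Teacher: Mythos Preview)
Your proof is correct and follows essentially the same approach as the paper: fix a point $p\in\nnpqr{P}{q}{r}$, lower-bound the per-table collision probability by $p_1^k\geq p_1 n^{-\varrho}$, deduce that within one outer iteration $p$ collides with $q$ with probability at least $1-e^{-1}>3/5$, and then combine independence across the $\lceil c\ln n\rceil$ iterations with a union bound over at most $n$ points. The only cosmetic differences are that you explicitly mention the distance test on line~\ref{line:approxnn} cannot reject $p$, and you bound $(1-p_1^k)^L$ via $e^{-Lp_1^k}$ rather than directly via $(1-p_1 n^{-\varrho})^{n^\varrho/p_1}\leq e^{-1}$.
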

This result means that the probability of success of the query is
high, even for small values of $c$. For instance, it is at least
$1-\frac{1}{n}$ for $c\geq \frac{2}{\ln\frac{5}{2}}$, and more
generally it is at least $1-\frac{1}{n^\omega}$ for $c\geq
\frac{1+\omega}{\ln\frac{5}{2}}$.
\begin{proof}[Proof of the lemma]
  Let $p$ be a point of $\nnpqr{P}{q}{r}$. Consider a single iteration
  $i$ of the main loop of Algorithm~\ref{alg:nnquery}, and let us show
  that $p$ is inserted in the output set $S$ during this iteration
  with constant probability.  This is equivalent to showing that, with
  constant probability, there exists some function $g_j(\cdot)$ that
  hashes $q$ and $p$ to the same location ($g_j(q)=g_j(p)$). Since
  $d(q,p)\leq r$, the probability of a collision for a fixed $j$ is at
  least $p_1^k = p_1^{\lceil\frac{\ln n}{\ln 1/p_2}\rceil} =
  e^{-\ln{1/p_1}\lceil\frac{\ln n}{\ln 1/p_2}\rceil} \geq
  e^{-\ln{1/p_1}(\frac{\ln n}{\ln 1/p_2}+1)} = p_1 n^{-\frac{\ln
      1/p_1}{\ln 1/p_2}} = p_1 n^{-\varrho}$. Therefore, the
  probability that no hash function $g_j$ generates a collision is at
  most $(1-p_1 n^{-\varrho})^{L}\leq (1-p_1
  n^{-\varrho})^{n^\varrho/p_1}$ since $L=\lceil
  \frac{n^\varrho}{p_1}\rceil$ functions are picked from $\mathcal{G}$
  at iteration $i$. Thus, the probability that this iteration inserts
  $p$ into the output set $S$ is at least
  $1-(1-p_1n^{-\varrho})^{n^\varrho/p_1} \geq 1 - \frac{1}{e} >
  \frac{3}{5}$.

Now, there are $\lceil c\ln n\rceil$ iterations in total, with
independent hash functions, so the probability that $p\notin S$ at the
end of the query is at most $\left(\frac{2}{5}\right)^{\lceil c\ln
  n\rceil}=e^{\ln\frac{2}{5} \lceil c\ln n\rceil} \leq
n^{c\ln\frac{2}{5}}$. Applying the union bound on the set
$\nnpqr{P}{q}{r}$, we obtain that the probability that all points of
$\nnpqr{P}{q}{r}$ belong to $S$ at the end of the query is at least
$1-|\nnpqr{P}{q}{r}| n^{c\ln\frac{2}{5}}\geq 1-n^{1+c\ln\frac{2}{5}} =
1-n^{1-c\ln\frac{5}{2}}$.
\end{proof}

\begin{remark}\label{rem:increase_n}
It is easily seen from the final paragraph of the proof of
Lemma~\ref{lem:success} that the correctness of the output can be
guaranteed with probability $1-m^{1-c\ln\frac{5}{2}}$ for any given
$m\geq n$. Indeed, by running $\lceil c\ln m\rceil$ iterations of the
main loops of Algorithms~\ref{alg:nnpre} and~\ref{alg:nnquery} instead
of $\lceil c \ln n\rceil$ iterations, we obtain that each point of
$\nnpqr{P}{q}{r}$ belongs to $S$ at the end of the query with probability
at least $1-m^{-c\ln\frac{5}{2}}$, and thus that $\nnpqr{P}{q}{r}\subseteq
S$ with probability at least $1-m^{1-c\ln \frac{5}{2}}$. This
remark will be useful when dealing with \rnn queries in
Section~\ref{sec:rnn}.
\end{remark}

\paragraph{Expected query time.}
First of all, the query point $q$ is hashed into $\lceil
\frac{n^\varrho}{p_1}\rceil \lceil c\ln n\rceil=\tilde
O(\frac{n^\varrho}{p_1})$ hash tables in total, and each hashing
operation involves $k=\lceil \frac{\ln
  n}{\ln\nicefrac{1}{p_2}}\rceil=O(\frac{\ln n}{\ln
  \nicefrac{1}{p_2}})$ hash function evaluations, $c$ being a constant
here. Thus, the total number of hash function evaluations is $\tilde
O(\frac{n^\varrho}{p_1 \ln\nicefrac{1}{p_2}})$, and so is the total time
spent hashing $q$ (modulo the time needed to do a hash function
evaluation, which is ignored here as in the previous sections). There
remains to bound the expected number of colllisions of $q$ with points
of $P$ in the hash tables.
\begin{lem} \label{lem:falsepos}
The expected total number of collisions of $q$ with points of
$P\setminus \ball(q,r(1+\e))$ is $\tilde O(\frac{n^\varrho}{p_1})$.
\end{lem}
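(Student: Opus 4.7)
The plan is a direct application of linearity of expectation, using the second axiom of the locality-sensitive family to bound the per-point, per-table collision probability, and then summing over all relevant points and hash tables.

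First I would fix an arbitrary point $p\in P\setminus\ball(q,r(1+\e))$, so that $\bd(q,p) > r(1+\e) \geq r_2$. By axiom (ii) in the hypothesis, any single coordinate hash function $f\in\f$ satisfies $\p{f(q)=f(p)}\leq p_2$. Since the $k$ coordinates of a random $g\in\g$ are drawn independently, the probability that $g(q)=g(p)$ is at most $p_2^k$. With $k=\lceil\ln n/\ln\nicefrac{1}{p_2}\rceil$ we get $p_2^k \leq p_2^{\ln n/\ln\nicefrac{1}{p_2}} = e^{-\ln n} = 1/n$.

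Next I would sum this bound over the hash tables. Across the $\lceil c\ln n\rceil$ iterations of the outer loop in Algorithm~\ref{alg:nnpre}, the total number of hash tables built is $\lceil c\ln n\rceil\lceil n^\varrho/p_1\rceil = \tilde O(n^\varrho/p_1)$, and the hash vectors are drawn independently. By linearity of expectation, the expected number of tables in which $q$ collides with a fixed $p\in P\setminus\ball(q,r(1+\e))$ is therefore at most $\tilde O(n^\varrho/(p_1 n))$.

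Finally, applying linearity of expectation once more over the (at most $n$) points of $P\setminus\ball(q,r(1+\e))$ yields an expected total collision count of $\tilde O(n^\varrho/p_1)$, as claimed. No obstacle of substance arises: the argument is the classical ``far points rarely collide'' half of the LSH analysis, and the only care needed is to keep track of the $\lceil c\ln n\rceil$ repetition factor, which is absorbed into the $\tilde O$ notation.
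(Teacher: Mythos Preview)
Your proof is correct and essentially identical to the paper's: both bound the per-table collision probability for a far point by $p_2^k\leq 1/n$ and then multiply by the $\lceil c\ln n\rceil\lceil n^\varrho/p_1\rceil$ tables and the at most $n$ far points via linearity of expectation. The only cosmetic difference is that the paper sums over points first (getting an expected $\leq 1$ far collision per table) and then over tables, whereas you sum in the opposite order; the arithmetic is the same.
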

\begin{proof}
 Take an arbitrary iteration $i$ of the main loop of
 Algorithm~\ref{alg:nnquery}, and an arbitrary hash table $H_j$
 considered during that iteration. Recall that the hash family $\g$ is
 constructed in Algorithm~\ref{alg:nnpre} by concatenating
 $k=\lceil\frac{\ln n}{\ln\nicefrac{1}{p_2}}\rceil$ functions drawn
 from a $(r,(1+\e)r,p_1,p_2)$-sensitive family $\f$. Therefore, the
 probability that a given point of $P\setminus \ball(q,r(1+\e))$
 collides with $q$ in $H_j$ is at most $p_2^k = p_2^{\lceil\frac{\ln
     n}{\ln 1/p_2}\rceil} = e^{-\ln 1/p_2 \lceil\frac{\ln n}{\ln
     1/p_2}\rceil} \leq e^{-\ln 1/p_2 \frac{\ln n}{\ln 1/p_2}} =
 \frac{1}{n}$.  It follows that the expected number of points
 of $P\setminus \ball(q,r(1+\e))$ that collide with $q$ in $H_j$ is
 at most $1$, from which we conclude that the expected
 total number of such collisions in all the hash tables at all
 iterations is at most $\lceil \frac{n^\varrho}{p_1}\rceil \lceil
 c\ln n\rceil = \tilde O(\frac{n^\varrho}{p_1})$.
\end{proof}

Without any further assumptions on the family $\f$ of hash functions,
each point of $\nnpqr{P}{q}{r(1+\e)}$ might collide with $q$ in every
hash table. The number of collisions of $q$ with points of
$\nnpqr{P}{q}{r(1+\e)}$ is therefore $O(\frac{n^\varrho}{p_1} c\ln n
|\nnpqr{P}{q}{r(1+\e)}|) = \tilde
O(\frac{n^\varrho}{p_1}|\nnpqr{P}{q}{r(1+\e)}|)$. Combined with
Lemma~\ref{lem:falsepos}, this bound implies that the expected running
time of the algorithm is $\tilde O(\frac{n^\varrho}{p_1}
(\frac{1}{\ln\nicefrac{1}{p_2}}+1+|\nnpqr{P}{q}{r(1+\e)}|))$,
as claimed in the theorem.
For every collision considered, a test is made on the distance between
$q$ and the colliding point of $P$ (see line~\ref{line:approxnn} of
Algorithm~\ref{alg:nnquery}). With a simple book-keeping, e.g. by
marking the points of $P$ that have already been considered during the
query, we can afford to do the test at most once per point of $P$,
thus yielding a total number of distance computations of the order of
$\tilde O(\frac{n^\varrho}{p_1} + |\nnpqr{P}{q}{r(1+\e)}|)$.

Consider now the stronger hypothesis that the family $\f$ of hash
functions is $(r_0, r, r(1+\e), p_0, p_1, p_2)$-sensitive for some
$r_0\leq r$.
\begin{lem} \label{lem:realpos}
Assuming that $\f$ is $(r_0, r, r(1+\e), p_0, p_1, p_2)$-sensitive,
the expected total number of collisions of $q$ with points of
$\nnpqr{P}{q}{r(1+\e)}\setminus \ball(q,r_0)$ is $\tilde
O(\frac{n^\alpha}{p_1} |\nnpqr{P}{q}{r(1+\e)}\setminus
\ball(q,r_0)|)$, where $\alpha=\varrho(1-\frac{\ln p_0}{\ln
  p_1})\leq\varrho$.
\end{lem}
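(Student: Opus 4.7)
My plan is to mimic the proof of Lemma~\ref{lem:falsepos}, but with the sharper third axiom (iii) of Definition~\ref{def:stronglshashfunc} in place of axiom (ii). The key observation is that any data point $p$ lying outside $\ball(q,r_0)$ satisfies $\dist(q,p)\geq r_0$, so each individual coordinate function $f\in\f$ used to build $g\in\g$ makes $q$ and $p$ collide with probability at most $p_0$ (not just at most $1$, as would be the only bound available under the standard $(r_1,r_2,p_1,p_2)$-sensitivity). Since the $k$ coordinate functions composing $g$ are drawn independently from $\f$, the probability that $g(q)=g(p)$ in any fixed hash table is at most $p_0^k$.

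The heart of the argument is then a calibration of $p_0^k$ in terms of $n$. Plugging in $k=\lceil \ln n/\ln\nicefrac{1}{p_2}\rceil$ and using $\ln p_0<0$, I get
\[
p_0^k \;\leq\; e^{(\ln n/\ln\nicefrac{1}{p_2})\,\ln p_0} \;=\; n^{\ln p_0/\ln p_2}.
\]
The exponent $\ln p_0/\ln p_2$ can be rewritten as $\varrho\cdot(\ln p_0/\ln p_1)$, and since $\alpha=\varrho(1-\ln p_0/\ln p_1)$ this is exactly $\varrho-\alpha$. Hence $p_0^k\leq n^{\alpha-\varrho}$.

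Now the per-point expected collision count accumulates over all $L=\lceil n^\varrho/p_1\rceil$ hash tables per iteration and over all $\lceil c\ln n\rceil$ iterations, giving
\[
L\cdot\lceil c\ln n\rceil\cdot p_0^k \;=\; \tilde O\!\left(\frac{n^\varrho}{p_1}\cdot n^{\alpha-\varrho}\right) \;=\; \tilde O\!\left(\frac{n^\alpha}{p_1}\right).
\]
Summing over the points of $\nnpqr{P}{q}{r(1+\e)}\setminus\ball(q,r_0)$ yields the claimed bound by linearity of expectation. The inequality $\alpha\leq \varrho$ follows from $p_0\geq p_1$, which forces $\ln p_0/\ln p_1\geq 0$ (both logarithms are negative) and in fact $\ln p_0/\ln p_1\leq 1$ by $p_0\leq 1$ and $p_1<1$, so the factor $(1-\ln p_0/\ln p_1)$ lies in $[0,1]$.

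I do not expect any real obstacle: the calculation is essentially the same book-keeping as in Lemma~\ref{lem:falsepos}, and the only genuinely new ingredient is the identity $\ln p_0/\ln p_2=\varrho-\alpha$, which is what motivates the definition of $\alpha$ in the first place. The only mildly delicate step is making sure the ceiling in the definition of $k$ is handled on the correct side of the inequality, but since $\ln p_0<0$ the rounding up only makes $p_0^k$ smaller, so it goes the right way without needing an extra constant factor beyond what is already absorbed into the $\tilde O(\cdot)$.
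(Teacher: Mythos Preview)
Your argument is correct and essentially identical to the paper's own proof: bound the per-table collision probability by $p_0^k\leq n^{-\ln p_0/\ln p_2}$, multiply by the $L\lceil c\ln n\rceil$ tables, and sum over the relevant points. The only blemish is a sign slip in your displayed equation (it should read $n^{-\ln p_0/\ln p_2}$, not $n^{\ln p_0/\ln p_2}$), but you recover the correct exponent $\alpha-\varrho$ in the very next line, so the conclusion stands.
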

\begin{proof}
 Take an arbitrary iteration $i$ of the main loop of
 Algorithm~\ref{alg:nnquery}, and an arbitrary hash table $H_j$
 considered during that iteration. The probability that a given point
 $p\in \nnpqr{P}{q}{r(1+\e)}\setminus \ball(q,r_0)$ collides with $q$
 in $H_j$ is at most $p_0^k = p_0^{\lceil\frac{\ln n}{\ln
     1/p_2}\rceil} = e^{\ln p_0\lceil\frac{\ln n}{\ln 1/p_2}\rceil}
 \leq e^{\ln p_0 \frac{\ln n}{\ln 1/p_2}} = 
 n^{-\frac{\ln p_0}{\ln p_2}}$.  It follows that the expected total
 number of collisions between $p$ and $q$ during the execution of the
 algorithm is at most $n^{-\frac{\ln p_0}{\ln p_2}}
 \lceil \frac{n^\varrho}{p_1}\rceil \lceil c\ln n\rceil = \tilde
 O(\frac{n^\alpha}{p_1})$, where $\alpha=\varrho-\frac{\ln p_0}{\ln
   p_2} = \frac{\ln p_1}{\ln p_2}-\frac{\ln p_0}{\ln p_2} = \frac{\ln
   p_1}{\ln p_2} (1 - \frac{\ln p_0}{\ln p_1})$. We conclude that the
 expected total number of collisions of $q$ with points of
 $\nnpqr{P}{q}{r(1+\e)}\setminus \ball(q,r_0)$ during the course of
 the algorithm is $\tilde O(\frac{n^\alpha}{p_1}
 |\nnpqr{P}{q}{r(1+\e)}\setminus \ball(q,r_0)|)$.
\end{proof}

It follows from Lemma~\ref{lem:realpos} that the expected query time
becomes $\tilde O(\frac{n^\varrho}{p_1}
(\frac{1}{\ln\nicefrac{1}{p_2}}+1+|\nnpqr{P}{q}{r_0}|) +
\frac{n^\alpha}{p_1}|\nnpqr{P}{q}{r(1+\e)}\setminus \ball(q,r_0)|)$ when the
family $\f$ of hash functions is $(r_0, r, r(1+\e), p_0, p_1,
p_2)$-sensitive, as claimed in the theorem.

\paragraph{Size of the data structure.}
Each hash table contains one pointer per point of $P$, and there are
$\lceil \frac{n^\varrho}{p_1}\rceil \lceil c\ln n\rceil$ such hash
tables in total, so we need to store $\tilde
O(\frac{n^{1+\varrho}}{p_1})$ pointers in total. In addition, we need
to store the $\lceil \frac{n^\varrho}{p_1}\rceil \lceil c\ln n\rceil$
vectors of hash functions corresponding to the hash tables, but this
term is dominated by the previous one. Thus, in total our data
structure has a space complexity of $\tilde
O(\frac{n^{1+\varrho}}{p_1})$. This bound ignores the costs of storing
the input point cloud and the selected hash functions, which depend on
the type of data representation.

\subsection{Affine case: the non-isometric embedding trick}
\label{sec:erpleb_Rd}

Assume from now on that the ambient space is $(\R^d,\ell_s)$, where
$s\in (0,2]$, and note that axiom~(iii) of
Definition~\ref{def:stronglshashfunc} is satisfied by the families of
hash functions introduced in Section~\ref{sec:pleb_affine} since the
probability $\Prob(l)$ defined in Eq.~(\ref{eq:proba_collision_ls})
decreases as the distance $l$ increases. In order to prevent the
points of $P$ from getting too close to the query point $q$, so
axiom~(iii) can be exploited, our strategy is to 
apply a non-isometric embedding into $(\R^{d+1}, \ell_s)$ that moves
$q$ away from $P$, while preserving the order of the distances to $q$.

At preprocessing time, we lift the points of $P$ to $(\R^{d+1},
\ell_s)$ by adding one coordinate equal to $0$ to every point. We then
build an $\A(P', r', \e')$ data structure using
Algorithm~\ref{alg:nnpre}, where $P'$ denotes the image of $P$ through
the embedding, $r'=r(1+\frac{1}{(1+\e)^s-1})^{1/s}$, and
$\e'=((1+\e)^s+(1+\e)^{-s}-1)^{1/s}-1$.  In effect, right before
building the data structure we follow
Section~\ref{sec:pleb_affine} and rescale $P'$ by a factor of $1/r'$,
to get a normalized point cloud $P''$ on top of which we build
an $\A(P'', 1, \e')$ data structure using Algorithm~\ref{alg:nnpre}.

At query time, we lift $q$ to $\R^{d+1}$ by adding one coordinate
equal to $\frac{r}{((1+\e)^s-1)^{1/s}}$, then we answer an exhaustive
$r'$-\pleb query in $\R^{d+1}$ by running Algorithm~\ref{alg:nnquery}
with the $\A(P', r', \e')$ data structure, and then we return the
pre-image of the output set through the embedding. Once again, in
effect we rescale the image of the query point in $\R^{d+1}$ by a
factor of $1/r'$, so Algorithm~\ref{alg:nnquery} is actually run
with $\A(P'', 1, \e')$.

Note that the embedding into $\R^{d+1}$ is not isometric since it
does not preserve the distances of $q$ to the data points. However, it
does preserve their order. Indeed, for every point $p\in P$ the distance
$\dist(p,q)$ becomes $(\dist(p,q)^s + \frac{r^s}{(1+\e)^s-1})^{1/s}$
after the embedding. Since the map
$t\mapsto (t^s + \frac{r^s}{(1+\e)^s-1})^{1/s}$ is monotonically
increasing with $t$, the embedding preserves the order of
distances to $q$. We then have the following easy properties, where $x'\in
\R^{d+1}$ denotes the image of any point $x\in P\cup\{q\}$ through the
embedding:
\begin{itemize}
\item[\rm (i)] $\forall p\in P$, $\dist(p,q)\leq r \Leftrightarrow
  \dist(p', q')\leq
  \left(r^s+\frac{r^s}{(1+\e)^s-1}\right)^{1/s}=r\left(1+\frac{1}{(1+\e)^s-1}\right)^{1/s} = r'$;
\item[\rm (ii)] $\forall p\in P$, $\dist(p', q')\geq
  \left(\frac{r^s}{(1+\e)^s-1}\right)^{1/s} =
  \frac{r}{1+\e}\left(\frac{(1+\e)^s}{(1+\e)^s-1}\right)^{1/s} = \frac{r}{1+\e}
  \left(1+\frac{1}{(1+\e)^s-1}\right)^{1/s} = \frac{r'}{1+\e}$;
\item[\rm (iii)] $\forall p\in P$, $\dist(p', q')\leq r'(1+\e')
  \Rightarrow \dist(p, q)\leq
  \left(r'^s(1+\e')^s-\frac{r^s}{(1+\e)^s-1}\right)^{1/s} =$\\$
  \left(r^s(1+\frac{1}{(1+\e)^s-1})((1+\e)^s + (1+\e)^{-s}
  -1)-\frac{r^s}{(1+\e)^s-1}\right)^{1/s} =$\\$ 
  r\left(\frac{(1+\e)^s}{(1+\e)^s-1}((1+\e)^s+(1+\e)^{-s}-1) - \frac{1}{(1+\e)^s-1}\right)^{1/s} =$\\$ r\left(1+\e\right)$.
\end{itemize}
It follows from (i) that $\nnpqr{P'}{q'}{r'}$ is the image of
$\nnpqr{P}{q}{r}$ through the embedding. Hence, by
Lemma~\ref{lem:success}, with high probability the output set of the
exhaustive $r'$-\pleb query in $\R^{d+1}$ is the image of
$\nnpqr{P}{q}{r}$ through the embedding. Thus, our output is correct
with high probability. In the meantime, the embedding has the
following impact on the complexity bounds of Theorem~\ref{thm:allnn}:
\begin{itemize}
\item On the negative side, parameter $\e$ is now replaced by
  $\e'=((1+\e)^s + (1+\e)^{-s} - 1)^{1/s} - 1 \leq \e$, which
  increases $p_2$ from $\Prob(1+\e)$ to $\Prob(1+\e')$. This means
  that the ratio $\varrho = \frac{\ln p_1}{\ln p_2}$ becomes
  $\frac{\ln \Prob(1)}{\ln \Prob(1+\e')} \geq \frac{\ln \Prob(1)}{\ln
    \Prob(1+\e)}$ and thus gets closer to $1$, even though it still
  remains strictly below $1$. Furthermore, the term $\frac{1}{\ln
    \nicefrac{1}{p_2}}$ grows from
  $\frac{1}{\ln \nicefrac{1}{\Prob(1+\e)}}$ to $\frac{1}{\ln
    \nicefrac{1}{\Prob(1+\e')}}$.
\item On the positive side, we know from (ii) that the points of $P'$
  lie at least $\frac{r'}{1+\e}$ away from the query point $q$, so by
  Lemma~\ref{lem:realpos} they cannot collide with $q$ more than
  $\tilde O(\frac{n^\alpha}{p_1})$ times each in expectation, where
  $\alpha=\varrho(1-\frac{\ln p_0}{\ln p_1})$, $p_1=\Prob(1)$, and
  $p_0=\Prob(\frac{1}{1+\e})$.
\end{itemize}
For the rest, the embedding is a neutral operation. Indeed, even
though the complexity now depends on the size of
$\nnpqr{P'}{q'}{r'(1+\e')}$ instead of the size of
$\nnpqr{P}{q}{r(1+\e)}$, we know from (iii) that the preimage of the
former set through the embedding is contained within the latter set,
so we have $|\nnpqr{P'}{q'}{r'(1+\e')}|\leq |\nnpqr{P}{q}{r(1+\e)}|$.
In addition, the fact that the query now takes place in $\R^{d+1}$
instead of $\R^d$, with a radius parameter that grew from $r$ to $r'$,
does not affect the probabilities $p_1, p_2$, which depend neither on
the ambient dimension as pointed out after
Eq.~(\ref{eq:proba_collision_ls}), nor on the radius thanks to the
rescaling of the data. It also does not affect the asymptotic complexities
of distance computations and hash function evaluations, which remain
$O(d)$.

All in all, we obtain the following complexity bounds for the
exhaustive $r$-\pleb query in $(\R^d, \ell_s)$, where $\A'(P,r,\e)$
denotes the full data structure built at preprocessing time, which
contains the embedding and rescaling information together with the
$\A(P'', 1, \e')$ data structure:
\begin{thm} \label{thm:allnn_Rd}
  Given a finite set $P$ with $n$ points in $(\R^d, \ell_s)$, $s\in
  (0,2]$, and two parameters $r,\e> 0$, the $\A'(P, r, \e)$ data
    structure answers exhaustive $r$-\pleb queries correctly with high
    probability in expected $\tilde O(\frac{n^\varrho}{p_1}(\frac{1}{\ln
      \nicefrac{1}{p_2}}+1)+
    \frac{n^\alpha}{p_1}|\nnpqr{P}{q}{r(1+\e)}|)$ time using $\tilde
    O(\frac{n^{1+\varrho}}{p_1})$ space, where $\varrho=\frac{\ln p_1}{\ln p_2}$
    and $\alpha=\varrho(1-\frac{\ln p_0}{\ln p_1})\leq\varrho$, the
    quantities $p_0=\Prob(\frac{1}{1+\e})$, $p_1=\Prob(1)$ and
    $p_2=\Prob(((1+\e)^s + (1+\e)^{-s} - 1)^{1/s})$ being derived from
    some $s$-stable distribution $D$ according to
    Eq.~(\ref{eq:proba_collision_ls}).
\end{thm}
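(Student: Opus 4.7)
The plan is to assemble Theorem~\ref{thm:allnn_Rd} from the earlier machinery plus the non-isometric embedding described just before the statement. Having built $\A'(P,r,\e)$ by lifting $P$ to $P'\subset\R^{d+1}$, rescaling by $1/r'$ to obtain $P''$, and storing $\A(P'',1,\e')$, the query procedure lifts $q$ to $q'$, rescales to $q''$, runs Algorithm~\ref{alg:nnquery} with $\A(P'',1,\e')$, and returns the preimages. So the proof reduces to (a) verifying correctness of the translated query, and (b) instantiating the sensitivity-based bound of Theorem~\ref{thm:allnn} with parameters reflecting the embedding.

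For correctness, I would invoke property (i) of the embedding, which is an \emph{equivalence} and therefore shows that $\nnpqr{P'}{q'}{r'}$ is exactly the image of $\nnpqr{P}{q}{r}$; after rescaling this becomes $\nnpqr{P''}{q''}{1}$. Then Lemma~\ref{lem:success} (applied in $\R^{d+1}$ with $c$ large enough) guarantees that Algorithm~\ref{alg:nnquery} returns this set with high probability, and taking preimages recovers $\nnpqr{P}{q}{r}$.

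For the complexity, the key observation is that the families of hash functions from Section~\ref{sec:pleb_affine} satisfy axiom~(iii) of Definition~\ref{def:stronglshashfunc} because the collision probability $\Prob(l)$ from Eq.~(\ref{eq:proba_collision_ls}) is monotonically decreasing in $l$. Thus, on the rescaled data $P''$, the family $\f$ is $(r_0,1,1+\e',p_0,p_1,p_2)$-sensitive with $r_0=\tfrac{1}{1+\e}$, $p_0=\Prob(\tfrac{1}{1+\e})$, $p_1=\Prob(1)$, and $p_2=\Prob(1+\e')=\Prob(((1+\e)^s+(1+\e)^{-s}-1)^{1/s})$. By property~(ii), every point of $P''$ lies at distance at least $r_0$ from $q''$, so $\nnpqr{P''}{q''}{r_0}$ is empty (taking $r_0$ slightly below $\tfrac{1}{1+\e}$ if one prefers to avoid a boundary issue, which does not affect the asymptotic bound). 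By property~(iii), $\nnpqr{P''}{q''}{1+\e'}$ is contained in the preimage of $\nnpqr{P}{q}{r(1+\e)}$, so its size is at most $|\nnpqr{P}{q}{r(1+\e)}|$. Plugging these facts into the refined bound of Theorem~\ref{thm:allnn} yields the expected query time
\[
\tilde O\!\left(\tfrac{n^{\varrho}}{p_1}\!\left(\tfrac{1}{\ln 1/p_2}+1\right)+\tfrac{n^\alpha}{p_1}|\nnpqr{P}{q}{r(1+\e)}|\right),
\]
with $\varrho=\tfrac{\ln p_1}{\ln p_2}$ and $\alpha=\varrho(1-\tfrac{\ln p_0}{\ln p_1})$, while the space bound $\tilde O(n^{1+\varrho}/p_1)$ is inherited directly from Theorem~\ref{thm:allnn} applied to $P''$ (which has the same cardinality $n$ as $P$).

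The only real subtlety, and what I expect to require the most care, is the bookkeeping around the embedding: one must check that the ambient dimension increasing from $d$ to $d+1$ and the radius changing from $r$ to $r'$ do not alter the relevant probabilities. This follows because $\Prob(l)$ in Eq.~(\ref{eq:proba_collision_ls}) depends neither on $d$ nor on any intrinsic scale once the data have been rescaled to $r=1$, so the $p_0,p_1,p_2$ listed in the theorem are exactly those realized by the algorithm. Everything else, including the fact that $|\nnpqr{P''}{q''}{1+\e'}|\leq|\nnpqr{P}{q}{r(1+\e)}|$ and the preservation of the ordering of distances to $q$, is a direct consequence of properties (i)--(iii) already established in the excerpt.
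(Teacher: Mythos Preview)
Your proposal is correct and follows essentially the same route as the paper: the paper's argument for Theorem~\ref{thm:allnn_Rd} is precisely the discussion preceding the statement, which uses property~(i) plus Lemma~\ref{lem:success} for correctness, and properties~(ii)--(iii) together with the refined bound of Theorem~\ref{thm:allnn} (via Lemma~\ref{lem:realpos}) for the complexity, while noting that $\Prob(l)$ is independent of dimension and of the radius after rescaling. The only cosmetic difference is that you invoke Theorem~\ref{thm:allnn} directly whereas the paper phrases the collision bound via Lemma~\ref{lem:realpos}; also, your parenthetical about taking $r_0$ slightly below $\tfrac{1}{1+\e}$ is unnecessary, since equality in~(ii) would require $\dist(p,q)=0$, which is excluded by convention.
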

Quantifying precisely the amounts by which the quantities $\varrho$,
$\alpha$ and $\frac{1}{\ln\nicefrac{1}{p_2}}$ are affected by the
embedding, what the corresponding {\em best} choice of parameter $w$
is, and how this choice impacts $\frac{1}{p_1}$, are the main
questions at this point. Because Eq~(\ref{eq:proba_collision_ls}) may
not always have a closed form solution, it is difficult to provide an
answer in full generality for all values $s\in (0,2]$. We will
  nevertheless investigate two special cases that are of practical
  interest: $s=1$ and $s=2$.

\begin{figure}[!htb]
\centering
\includegraphics[height=0.48\textwidth]{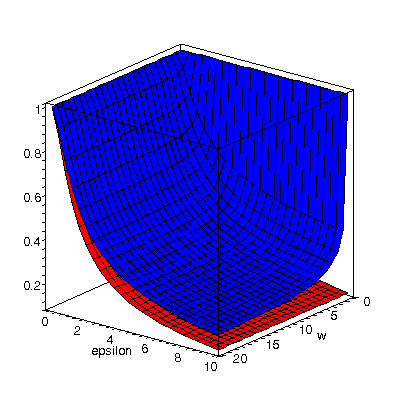}
\includegraphics[height=0.48\textwidth]{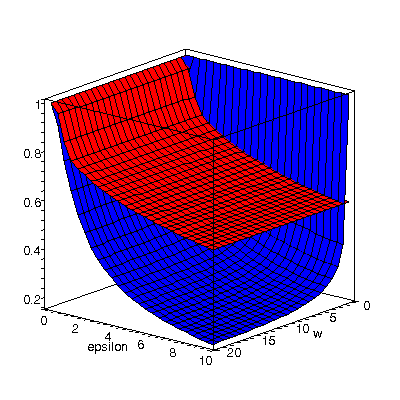}
\caption{\em Behavior of $\varrho$ in $(\R^{d+1}, \ell_1)$. Left: plots
  of $\varrho$ (blue) and $\frac{1}{1+\e'}=\frac{1}{1+\e^2/(1+\e)}$
  (red) versus $\e$ and $w$.  Right: plots of $\varrho$ (blue) and
  $\frac{1}{1+\min\{\e^2,\;\sqrt{\e}\}/4}$ (red) versus $\e$ and $w$.}
\label{fig:l1_rho_theorique}
\end{figure}

\begin{figure}[!htb]
\centering
\includegraphics[height=0.48\textwidth]{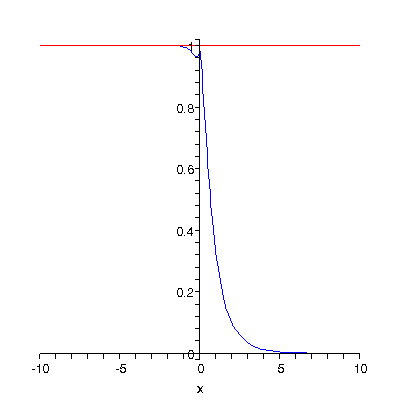}
\includegraphics[height=0.48\textwidth]{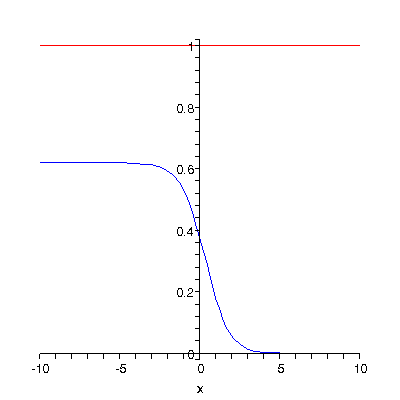}
\caption{\em Behaviors of $\varrho$ and $\alpha$ in $(\R^{d+1},
  \ell_1)$ after letting $w=\max\{1,\e\}$. From left to right, in
  blue: plots of $\varrho (1+\min\{\e^2,\;\sqrt{\e}\}/4)$ and
  $\frac{\alpha}{\e\varrho}$. Both plots are versus
  $\e$ on a logarithmic scale ($x=\log_{10} \e$). The red lines have
  equation $y=1$.}
\label{fig:l1_rho_alpha}
\end{figure}

\begin{figure}[!htb]
\centering
\includegraphics[height=0.48\textwidth]{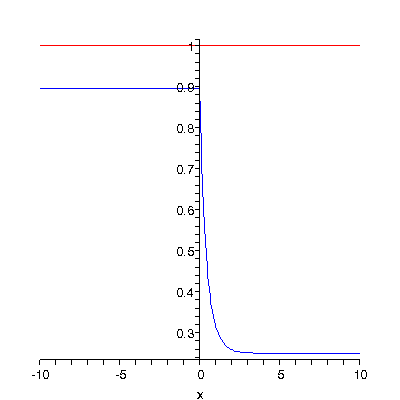}
\includegraphics[height=0.48\textwidth]{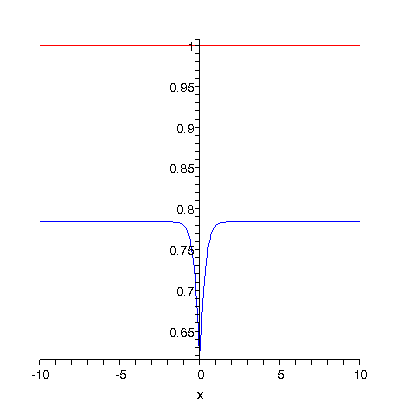}
\caption{\em Behaviors of $\frac{1}{\Prob(1)}$ and
  $\frac{1}{\ln\nicefrac{1}{\Prob(1+\e')}}$ in $(\R^{d+1}, \ell_1)$ after
  letting $w=\max\{1,\e\}$. From left to right, in blue: plots of
  $\frac{1}{4\Prob(1)}$ and $\frac{1}{\ln\nicefrac{1}{\Prob(1+\e')}}$. Both
  plots are versus $\e$ on a logarithmic scale ($x=\log_{10} \e$). The
  red lines have equation $y=1$.}
\label{fig:l1_p2_k}
\end{figure}

\paragraph{Case $s=1$.}
The definition of $\e'$ gives $\e'=\frac{\e^2}{1+\e}$ in this
case. The formula for $\varrho$ is then the same as in $\R^d$, with
$\e$ replaced by $\frac{\e^2}{1+\e}$. As reported in~\cite{DIIM04} and
illustrated in Figure~\ref{fig:l1_rho_theorique} (left), $\varrho$
remains above $\frac{1}{1+\e^2/(1+\e)}$, even though it seems to
converge to this quantity as $w$ tends to infinity. Letting
$w=\max\{1,\e\}$, we found experimentally that $\varrho$ is dominated
by $\frac{1}{1+\e^2/4}$ when $\e\leq 1$ and by
$\frac{1}{1+\sqrt{\e}/4}$ when $\e\geq 1$, as shown in
Figures~\ref{fig:l1_rho_theorique} (right) and~\ref{fig:l1_rho_alpha}
(left). In the meantime, $\alpha$ is less than $\e\varrho$, as
can be seen from Figure~\ref{fig:l1_rho_alpha} (right), while
$\frac{1}{\Prob(1)}$ and $\frac{1}{\ln\nicefrac{1}{\Prob(1+\e')}}$ are less
than $4$ and $1$ respectively, as shown in Figure~\ref{fig:l1_p2_k}.
All in all, Theorem~\ref{thm:allnn_Rd} can be re-written as follows:
\addtocounter{thm}{-1}
\begin{thm}[case $s=1$] \label{thm:allnn_Rd_l1}
  Given a finite set $P$ with $n$ points in $(\R^d, \ell_1)$, and two
  parameters $r,\e> 0$, the $\A'(P, r, \e)$ data
    structure answers exhaustive $r$-\pleb
  queries correctly with high probability in expected $\tilde
  O(n^\varrho+ n^\alpha|\nnpqr{P}{q}{r(1+\e)}|)$ time using
    $\tilde O(n^{1+\varrho})$ space, where $\varrho\leq
    \frac{1}{1+\min\{\e^2,\;\sqrt{\e}\}/4}<1$ and $\alpha\leq
    \e\varrho < \e$.
\end{thm}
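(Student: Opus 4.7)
The plan is to specialize Theorem~\ref{thm:allnn_Rd} to $s=1$ by (a) evaluating the distortion parameter $\e'$ in closed form, (b) choosing a concrete value of the LSH-parameter $w$, and (c) showing that with this choice the auxiliary quantities $\varrho$, $\alpha$, $1/p_1$, and $1/\ln(1/p_2)$ obey the claimed bounds, so that all of them can be absorbed either into the stated exponents or into the $\tilde O$ notation.

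First I would simply substitute $s=1$ in the definition $\e'=((1+\e)^s+(1+\e)^{-s}-1)^{1/s}-1$ and simplify to obtain $\e'=\e^2/(1+\e)$. Using the Cauchy distribution (which is $1$-stable), Eq.~(\ref{eq:proba_collision_ls}) yields the closed form $\Prob(l)=(2/\pi)\arctan(w/l)-(1/(\pi w/l))\ln(1+(w/l)^2)$ recalled in Section~\ref{sec:pleb_affine}. I would then write out $p_0=\Prob(1/(1+\e))$, $p_1=\Prob(1)$ and $p_2=\Prob(1+\e')$ as explicit one-variable functions of $w$, for fixed $\e$, and commit to the choice $w=\max\{1,\e\}$ suggested by the empirical study of~\cite{DIIM04}.

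Next, for that choice of $w$, I would verify the four numerical bounds on which the theorem hinges: $\varrho=\ln p_1/\ln p_2 \leq 1/(1+\min\{\e^2,\sqrt{\e}\}/4)$ (splitting according to whether $\e\leq 1$, where the $\e^2/4$ regime dominates, or $\e\geq 1$, where the $\sqrt{\e}/4$ regime kicks in); $\alpha=\varrho(1-\ln p_0/\ln p_1)\leq\e\varrho$; and the constant bounds $1/p_1\leq 4$ and $1/\ln(1/p_2)\leq 1$. These are exactly the experimental observations reported in Figures~\ref{fig:l1_rho_theorique}--\ref{fig:l1_p2_k}. I expect this to be the main obstacle of the proof: because $\Prob$ involves $\arctan$ and $\ln(1+\cdot)$, none of these inequalities has a short symbolic derivation, and one must either rely on the numerical evidence in the figures or carry out routine but tedious monotonicity arguments on the derivatives of the relevant expressions in $w$ and $\e$, handling the two regimes of the $\min$ separately.

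Once the four bounds are in place, the conclusion follows by direct substitution into Theorem~\ref{thm:allnn_Rd}. The expected query-time bound $\tilde O((n^\varrho/p_1)(1/\ln(1/p_2)+1)+(n^\alpha/p_1)|\nnpqr{P}{q}{r(1+\e)}|)$ collapses to $\tilde O(n^\varrho+n^\alpha|\nnpqr{P}{q}{r(1+\e)}|)$ after absorbing the constants $1/p_1\leq 4$ and $1/\ln(1/p_2)\leq 1$, while the space bound $\tilde O(n^{1+\varrho}/p_1)$ becomes $\tilde O(n^{1+\varrho})$ for the same reason. The strict inequalities $\varrho<1$ and $\alpha<\e$ are preserved as long as $\e>0$, since $\min\{\e^2,\sqrt{\e}\}/4>0$ and $\varrho<1$.
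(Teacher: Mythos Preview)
Your proposal is correct and follows essentially the same approach as the paper: specialize Theorem~\ref{thm:allnn_Rd} to $s=1$ by computing $\e'=\e^2/(1+\e)$, fix $w=\max\{1,\e\}$, verify the four numerical bounds on $\varrho$, $\alpha$, $1/p_1$, and $1/\ln(1/p_2)$ via the plots in Figures~\ref{fig:l1_rho_theorique}--\ref{fig:l1_p2_k}, and then absorb the resulting constants into the $\tilde O$ notation. The paper does not provide analytic derivations of these inequalities either---it relies on the experimental evidence in the figures---so your assessment that this is the main obstacle, and that the paper sidesteps it numerically, is accurate.
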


\begin{figure}[!htb]
\centering
\includegraphics[height=0.48\textwidth]{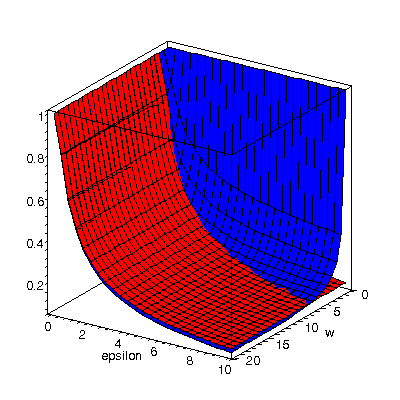}
\includegraphics[height=0.48\textwidth]{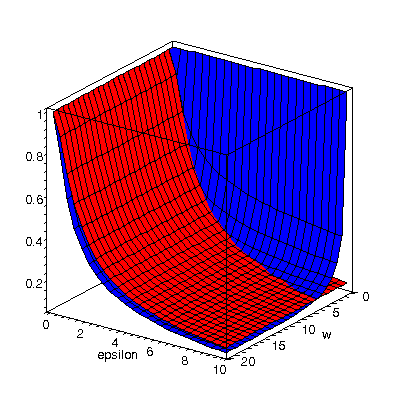}
\caption{\em Behavior of $\varrho$ in $(\R^{d+1}, \ell_2)$. Left: plots
  of $\varrho$ (blue) and
  $\frac{1}{1+\e'}=\frac{1+\e}{\sqrt{(1+\e)^4-1}}$ (red) versus $\e$
  and $w$.  Right: plots of $\varrho$ (blue) and
  $\frac{1}{1+\e^2/(1+\e)}$ (red) versus $\e$ and $w$.}
\label{fig:l2_rho_theorique}
\end{figure}

\begin{figure}[!htb]
\centering
\includegraphics[height=0.48\textwidth]{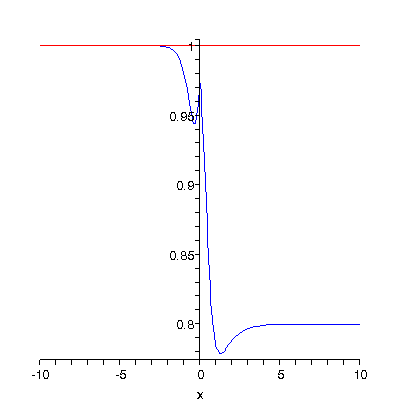}
\includegraphics[height=0.48\textwidth]{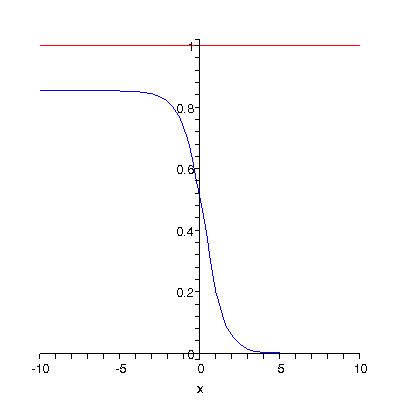}
\caption{\em Behaviors of $\varrho$ and $\alpha$ in $(\R^{d+1},
  \ell_2)$ after letting $w=\max\{1,\e\}$. From left to right, in
  blue: plots of $\varrho (1+\e^2/(1+\e))$ and
  $\frac{\alpha}{\e\varrho}$. Both plots are versus
  $\e$ on a logarithmic scale ($x=\log_{10} \e$). The red lines have
  equation $y=1$.}
\label{fig:l2_rho_alpha}
\end{figure}

\begin{figure}[!htb]
\centering
\includegraphics[height=0.48\textwidth]{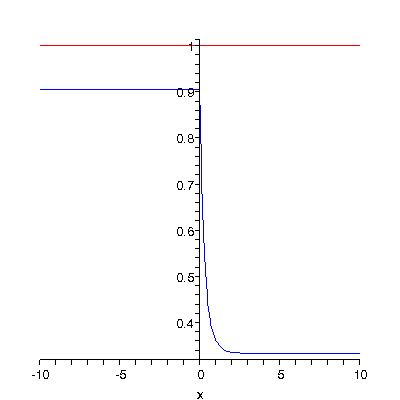}
\includegraphics[height=0.48\textwidth]{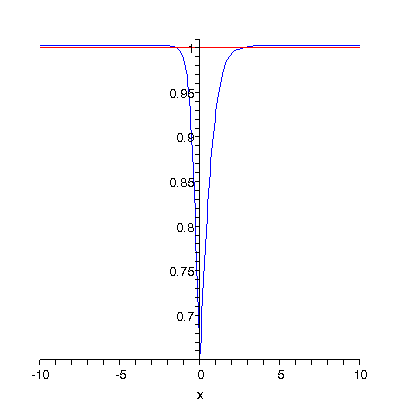}
\caption{\em Behaviors of $\frac{1}{\Prob(1)}$ and
  $\frac{1}{\ln\nicefrac{1}{\Prob(1+\e')}}$ in $(\R^{d+1}, \ell_2)$ after
  letting $w=\max\{1,\e\}$. From left to right, in blue: plots of
  $\frac{1}{3\Prob(1)}$ and
  $\frac{1}{\ln\nicefrac{1}{\Prob(1+\e')}}$. Both plots are versus
  $\e$ on a logarithmic scale ($x=\log_{10} \e$). The red lines have
  equation $y=1$.}
\label{fig:l2_p2_k}
\end{figure}

\paragraph{Case $s=2$.}
The definition of $\e'$ gives $\e'=\frac{\sqrt{(1+\e)^4-1}}{1+\e}-1$
in this case. The formula for $\varrho$ is then the same as in $\R^d$,
with $\e$ replaced by $\frac{\sqrt{(1+\e)^4-1}}{1+\e}-1$. As pointed
out in~\cite{DIIM04} and illustrated in
Figure~\ref{fig:l2_rho_theorique} (left), $\varrho$ goes below
$\frac{1+\e}{\sqrt{(1+\e)^4-1}}$ at reasonably small values of
parameter $w$. Since this bound is not quite evocative, we used a
slightly different bound, namely $\frac{1}{1+\e^2/(1+\e)}$, and we
found experimentally that $\varrho\leq \frac{1}{1+\e^2/(1+\e)}$
whenever $w=\max\{1,\e\}$, as shown in
Figures~\ref{fig:l2_rho_theorique} (right) and~\ref{fig:l2_rho_alpha}
(left). In the meantime, $\alpha$ is less than $\e\varrho$, as can be
seen from Figure~\ref{fig:l2_rho_alpha} (right), while the terms
$\frac{1}{\Prob(1)}$ and $\frac{1}{\ln\nicefrac{1}{\Prob(1+\e')}}$ are bounded
by small constants, as shown in Figure~\ref{fig:l2_p2_k}. All in all,
Theorem~\ref{thm:allnn_Rd} can be re-written as follows:
\addtocounter{thm}{-1}
\begin{thm}[case $s=2$] \label{thm:allnn_Rd_l2}
  Given a finite set $P$ with $n$ points in $(\R^d, \ell_2)$, and two
  parameters $r,\e> 0$, the $\A'(P, r, \e)$ data
    structure answers exhaustive $r$-\pleb
  queries correctly with high probability in expected $\tilde
  O(n^\varrho+ n^\alpha|\nnpqr{P}{q}{r(1+\e)}|)$ time using
    $\tilde O(n^{1+\varrho})$ space, where $\varrho\leq
    \frac{1}{1+\e^2/(1+\e)}<1$ and $\alpha\leq
    \e\varrho < \e$.
\end{thm}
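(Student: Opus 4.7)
The plan is to derive Theorem~\ref{thm:allnn_Rd_l2} by specializing Theorem~\ref{thm:allnn_Rd} to the case $s=2$, where the hash family comes from the $2$-stable normal distribution $\mathcal{N}(0,1)$ and $\Prob(l) = 1 - 2\cdfn(-w/l) - \frac{2}{\sqrt{2\pi}\,w/l}\left(1 - e^{-w^2/2l^2}\right)$. The definition of $\e'$ in Section~\ref{sec:erpleb_Rd} specializes to $\e' = \frac{\sqrt{(1+\e)^4-1}}{1+\e} - 1$, so the quantities $\varrho = \frac{\ln \Prob(1)}{\ln \Prob(1+\e')}$ and $\alpha = \varrho\bigl(1 - \frac{\ln \Prob(1/(1+\e))}{\ln \Prob(1)}\bigr)$ from Theorem~\ref{thm:allnn_Rd} become explicit (if non-elementary) functions of $\e$ and $w$.

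Next I would fix $w=\max\{1,\e\}$, as suggested by the empirical analysis in Figures~\ref{fig:l2_rho_theorique}--\ref{fig:l2_p2_k}, and establish four quantitative bounds: (a) $\varrho \leq \frac{1}{1+\e^2/(1+\e)}$; (b) $\alpha \leq \e\varrho$, i.e. $\ln \Prob(1/(1+\e)) \geq (1-\e)\ln\Prob(1)$; (c) $\Prob(1)$ is bounded below by an absolute constant (Figure~\ref{fig:l2_p2_k} gives $1/\Prob(1)\leq 3$); and (d) $\ln(1/\Prob(1+\e'))$ is bounded below by an absolute constant. Bounds (c) and (d) let us absorb the factors $\frac{1}{p_1}$ and $\frac{1}{\ln \nicefrac{1}{p_2}}$ into the $\tilde O(\cdot)$ notation, while (a) and (b) give the exponents stated in the theorem.

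The main obstacle is the analytical verification of (a) and (b), since the Gaussian expression for $\Prob(l)$ admits no elementary closed form. The natural approach is a case split on the regime of $\e$. For $\e \leq 1$ we have $w=1$; expanding $\Prob(1+\e')-\Prob(1)$ to leading order in $\e$ (noting that $\e' = \Theta(\e)$ while $(1+\e')^2 - 1 = \Theta(\e^2/(1+\e))$ is what actually drives the ratio) yields the required $\Theta(\e^2/(1+\e))$ gap in $\ln\Prob$, which after division by $\ln\Prob(1) = -\Theta(1)$ gives bound (a); a symmetric expansion of $\Prob(1/(1+\e))$ gives (b). For $\e \geq 1$ we have $w=\e$, which rescales the argument of $\Prob$ and pushes the computation into the regime where $w/l$ is moderate; here the bounds again follow from monotonicity of $\Prob$ in $l$ combined with elementary estimates on $\cdfn$ and the exponential term. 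A finite-interval numerical check fills any gap between the two asymptotic regimes, in accordance with the plots.

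Once (a)--(d) are in hand, I would simply substitute $\varrho \leq \frac{1}{1+\e^2/(1+\e)}$, $\alpha \leq \e\varrho$, $\frac{1}{p_1} = O(1)$ and $\frac{1}{\ln\nicefrac{1}{p_2}} = O(1)$ into the bounds of Theorem~\ref{thm:allnn_Rd}, observing that the term $\tilde O(\frac{n^\varrho}{p_1}(\frac{1}{\ln\nicefrac{1}{p_2}}+1))$ collapses to $\tilde O(n^\varrho)$ and the output-sensitive term to $\tilde O(n^\alpha |\nnpqr{P}{q}{r(1+\e)}|)$, while the space becomes $\tilde O(n^{1+\varrho})$. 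Since $\e^2/(1+\e) > 0$ we have $\varrho < 1$, and since $\varrho < 1$ and $\alpha\leq\e\varrho$ we have $\alpha < \e$, yielding exactly the theorem's statement.
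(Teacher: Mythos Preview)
Your proposal follows essentially the same route as the paper: specialize Theorem~\ref{thm:allnn_Rd} to $s=2$, fix $w=\max\{1,\e\}$, and verify the four bounds (a)--(d) that allow the constants $\frac{1}{p_1}$ and $\frac{1}{\ln\nicefrac{1}{p_2}}$ to be absorbed and the exponents $\varrho,\alpha$ to be bounded as stated. The one point worth flagging is that the paper does \emph{not} carry out the analytical verification you sketch: it establishes (a)--(d) purely experimentally, via the plots in Figures~\ref{fig:l2_rho_theorique}--\ref{fig:l2_p2_k}, with the explicit phrasing ``we found experimentally that\ldots''. So your Taylor-expansion and case-split plan, while reasonable, goes beyond what the paper actually offers; if you are aiming to reproduce the paper's argument, the numerical evidence from the figures is all there is. (Incidentally, one small slip in your sketch: for small $\e$ one has $\e'=\Theta(\e^2)$, not $\Theta(\e)$, since the linear terms in $(1+\e)^2+(1+\e)^{-2}$ cancel; this is consistent with your correct observation that $(1+\e')^2-1=\Theta(\e^2)$ drives the ratio.)
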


\section{Interlude: from exhaustive $r$-\pleb to exact \nn}
\label{sec:exact-nn}

Before dealing with \rnn queries (the main topic of the paper),
let us show a simple but pedagogical application of exhaustive
$r$-\pleb queries to exact $\nn$ search. Given a set $P$ with $n$ points
and a user-defined parameter $\e>0$, we will show that \nn queries can
be solved exactly with high probability on any query point $q$ in
expected $\tilde O(n^\varrho + n^\alpha |\ennpq{O(\e)}{P}{q}|)$ time
using $\tilde O(n^{1+\varrho})$ space, for some quantities
$\varrho= \frac{1}{1+\Theta(\e^2)}<1$ and $\alpha\leq\e\varrho<\e$
(Theorem~\ref{thm:exact-nn}). The running time bound is composed of
two terms: the first one is sublinear in $n$ and corresponds to a
standard approximate $\e$-\nn query using locality-sensitive hashing;
the second one depends on the size of the approximate nearest
neighbors set $\ennpq{O(\e)}{P}{q}$ and indicates that the solution to
the exact query is sought for among this set. Whether the bound will
be sublinear in $n$ or not in the end depends on the size of the set
compared to the quantity $n^{1-\alpha}$. This follows the intuition
that finding the exact nearest neighbor of $q$ is easy when $q$ does
not have too many approximate nearest neighbors, and in this respect
the quantity $|\ennpq{O(\e)}{P}{q}|$ plays the role of a {\em
  condition number} measuring the inherent difficulty of a given
instance of the exact $\nn$ problem. The interesting point to raise
here is that the limit on this number for our algorithm to be
sublinear is at least of the order of $n^{1-\e}$ since we
have $\alpha<\e$.

Let us point out that the above bounds are for the ambient space
$\R^d$ equipped with the $\ell_1$- or $\ell_2$-norm. Our analysis will
be carried out in the more general setting of an $\ell_s$-norm, with
$s\in (0,2]$, where we will derive more general complexity bounds.
  The choice of $(\R^d, \ell_s)$ is mainly for ease of
  exposition, since the algorithm can actually be applied in arbitrary
  metric spaces that admit locality-sensitive families of hash
  functions, where its analysis extends in a straightforward manner
  (see Remark~\ref{rem:exact-NN_general} at the end of the section).

\paragraph{The algorithm.} Let $P$ be a finite set of $n$ points in $(\R^d, \ell_s)$, $s\in (0,2]$, and let $\e>0$ be a parameter. The preprocessing phase
consists of the following steps:
\begin{slist}
\item[i.] Build the tree structure $\T(P,\e)$ of
  Section~\ref{sec:renn-to-enn} and its associated $(r,\e)$-\pleb data
  structures.
\item[ii.] For every $(r,\e)$-\pleb data structure built on some subset
  of $P$ at step i, build an $\A'(P, r,\e)$ data structure using the
  procedure of Section~\ref{sec:erpleb_Rd}.
\end{slist}
Then, given a query point $q$, we proceed as follows:
\begin{slist}
\item[1.] Answer an $\e$-\nn query using the tree structure $\T(P,\e)$,
  and let $r\geq 0$ be the output value.
\item[2.] Answer an exhaustive $r$-\pleb query using the $\A'(P,r,\e)$
  data structure, and let $S$ be the output set.
\item[3.] Iterate over the points of $S$ and return the one that is
  closest to $q$. If $S$ is empty, then return any arbitrary point of $P$.
\end{slist}
Note that the execution of step 2 is
made possible by the fact that the algorithm solving the $\e$-\nn
query at step 1 returns a radius $r$ that is stored in one of the
$\A'(P,r,\e)$ data structures built during the preprocessing
phase. For any other value $r$ we would not be able to perform step 2
because we would not have the corresponding $\A'(P,r,\e)$ data
structure at hand.

\paragraph{Analysis.} We begin by showing the correctness of the 
query procedure:

\begin{lem}\label{lem:exact-nn_correct}
The query procedure returns a point of $\nnp{q}$ with high probability.
\end{lem}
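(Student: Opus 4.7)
The plan is to show that the two randomized steps of the procedure both succeed with high probability, that conditioned on their success the deterministic step 3 returns a true nearest neighbor, and then to apply a union bound.

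First, I would argue that, with probability at least $1-1/\text{poly}(n)$, the $\e$-\nn query in step 1 returns a radius $r$ that (a) satisfies $r\geq \dnn{P}{q}$ and (b) equals one of the radii stored in the preprocessed $\A'(P,r,\e)$ data structures. This requires unpacking the construction of $\T(P,\e)$ from Theorem~\ref{thm:cnn}: the dichotomy within the relevant tree node pins down $\dnn{P}{q}$ to a geometric subinterval of preprocessed radii by means of $(r,\e)$-\pleb queries, and one may take $r$ to be the upper endpoint of this subinterval, which automatically lies in the set of radii on which $\A'$ structures were built in step ii of preprocessing. The high-probability correctness of the underlying $(r,\e)$-\pleb queries is inherited from the $\omega \ln n$-fold boosting of the LSH procedure used in Corollary~\ref{cor:lsh4enn}.

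Second, conditioning on the event of step 1, Theorem~\ref{thm:allnn_Rd} guarantees that the exhaustive $r$-\pleb query in step 2 returns the set $S=\nnpqr{P}{q}{r}$ with high probability. Since $r\geq \dnn{P}{q}$, one has $\nnp{q}\subseteq S$. Then step 3 deterministically picks a point of $S$ minimizing the distance to $q$; since every point of $\nnp{q}\subseteq S$ realizes the minimum distance $\dnn{P}{q}$ over $P\setminus\{q\}$, the returned point must belong to $\nnp{q}$.

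A union bound over the two randomized failure events (the $\e$-\nn query of step 1 and the exhaustive $r$-\pleb query of step 2) gives the claimed high-probability guarantee. The main obstacle, and the only non-routine point, is verifying (a) and (b) in step~1: one must inspect Har-Peled's construction carefully to confirm that the radius returned by the $\e$-\nn procedure can always be taken as one of the preprocessed radii that upper bounds $\dnn{P}{q}$, so that the exhaustive $r$-\pleb query in step 2 is both executable and certain to contain the true nearest neighbors.
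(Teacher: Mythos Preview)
Your proposal is correct and follows essentially the same argument as the paper: invoke Corollary~\ref{cor:lsh4enn} to get $\dist(q,P)\leq r$ with high probability, then Theorem~\ref{thm:allnn_Rd} to get $\nnp{q}\subseteq\nnpqr{P}{q}{r}\subseteq S$ with high probability, so the minimizer returned in step~3 lies in $\nnp{q}$. The only difference is that the paper does not verify your point~(b) inside the proof at all; it is stated as a side remark in the algorithm description (``the execution of step~2 is made possible by the fact that the algorithm solving the $\e$-\nn query at step~1 returns a radius $r$ that is stored in one of the $\A'(P,r,\e)$ data structures''), so your proof is slightly more thorough in flagging that this needs checking.
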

\begin{proof}
Corollary~\ref{cor:lsh4enn} guarantees that the radius $r$ computed at
step 1 satisfies $\dist(q,P)\leq r \leq \dist(q,P)(1+\e)$ with high
probability. Under this condition, we have $\nnp{q}\subseteq
\nnpqr{P}{q}{r}$, and so Theorem~\ref{thm:allnn_Rd} guarantees that
the set $S$ computed at step 2 contains $\nnp{q}$ with high
probability. It follows that the point returned at step 3 belongs to
$\nnp{q}$ with high probability.
\end{proof}

We will now analyze the expected running time of the query. Let $D$ be
the $s$-stable distribution used by the algorithm, and let
$p_0=\Prob(\frac{1}{1+\e})$, $p_1=\Prob(1)$, $p_2=\Prob(1+\e)$ and
$p'_2=\Prob(((1+\e)^s+(1+\e)^{-s}-1)^{1/s})$ be derived from $D$ according
to Eq.~(\ref{eq:proba_collision_ls}). By Corollary~\ref{cor:lsh4enn},
the running time of step 1 is $\tilde O(\frac{n^\varrho}{p_1 \ln
  \nicefrac{1}{p_2}})$, where $\varrho=\frac{\ln p_1}{\ln p_2}$. The
running time of step 3 is $O(|S|)$, so it is dominated by the running
time of step 2.
\begin{lem}\label{lem:exact-nn_runtime}
The expected running time of step 2 is $\tilde
O(\frac{n^{\varrho'}}{p_1}(\frac{1}{\ln\nicefrac{1}{p'_2}} + 1) +
\frac{n^\alpha}{p_1}|\ennpq{\e(2+\e)}{P}{q}|)$, where
$\varrho'=\frac{\ln p_1}{\ln p'_2}$ and $\alpha=\varrho' (1-\frac{\ln
  p_0}{\ln p_1})$.
\end{lem}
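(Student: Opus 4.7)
The plan is to invoke Theorem~\ref{thm:allnn_Rd} directly on the exhaustive $r$-\pleb query of step 2, and then translate the output-sensitive term $|\nnpqr{P}{q}{r(1+\e)}|$ appearing there into the quantity $|\ennpq{\e(2+\e)}{P}{q}|$ stated in the lemma. Since $r$ is the output of step 1, it lies in one of the values for which an $\A'(P,r,\e)$ data structure was built at preprocessing, so step 2 is well-defined and Theorem~\ref{thm:allnn_Rd} applies, giving an expected running time of
\[
\tilde O\!\left(\tfrac{n^{\varrho'}}{p_1}\bigl(\tfrac{1}{\ln \nicefrac{1}{p'_2}}+1\bigr)+\tfrac{n^{\alpha}}{p_1}\,|\nnpqr{P}{q}{r(1+\e)}|\right),
\]
with the exponents $\varrho'$ and $\alpha$ and the probabilities $p_0,p_1,p'_2$ exactly as in the statement.

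Next I would bound $|\nnpqr{P}{q}{r(1+\e)}|$. By Corollary~\ref{cor:lsh4enn}, the value $r$ returned at step 1 satisfies $\dist(q,P)\leq r\leq (1+\e)\dist(q,P)$ with high probability, so in that event
\[
r(1+\e)\;\leq\;(1+\e)^2\,\dist(q,P)\;=\;(1+\e(2+\e))\,\dist(q,P),
\]
which gives the inclusion $\nnpqr{P}{q}{r(1+\e)}\subseteq \nnpqr{P}{q}{(1+\e(2+\e))\dist(q,P)}=\ennpq{\e(2+\e)}{P}{q}$. Plugging this into the bound above yields the stated expression.

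The only subtle point is the complementary low-probability event in which $r>(1+\e)\dist(q,P)$, since then the inclusion could fail. I would handle this by standard conditioning: the $\e$-\nn subroutine can be tuned so that the bad event has probability at most $n^{-\omega}$ for a constant $\omega$ as large as we like (by increasing the number $c\ln n$ of LSH repetitions, as discussed in Remark~\ref{rem:increase_n} and the analysis preceding Theorem~\ref{th:lsh4renn}); meanwhile, even on the bad event, the running time of step 2 is trivially bounded by the total size of the $\A'(P,r,\e)$ data structure, i.e.\ $\tilde O(n^{1+\varrho'}/p_1)$. Choosing $\omega$ large enough makes the contribution of the bad event negligible in expectation compared to the main term, so the stated expected bound holds unconditionally. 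No step here is really a serious obstacle; the whole argument is just a careful composition of Theorem~\ref{thm:allnn_Rd} with the approximation guarantee from Corollary~\ref{cor:lsh4enn}.
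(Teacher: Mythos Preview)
Your proposal is correct and follows the same overall structure as the paper: apply Theorem~\ref{thm:allnn_Rd} to the exhaustive $r$-\pleb call, use $r\leq (1+\e)\dist(q,P)$ to get $\nnpqr{P}{q}{r(1+\e)}\subseteq \ennpq{\e(2+\e)}{P}{q}$, and then deal separately with the low-probability event that $r$ is too large.

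The one place where you diverge from the paper is in handling that bad event. The paper does not boost $\omega$ or bound by the data-structure size; instead it keeps the failure probability at $1/n$ (as set up in the footnote to Corollary~\ref{cor:lsh4enn}), bounds the bad-event cost using Theorem~\ref{thm:allnn_Rd} with the crude estimate $|\nnpqr{P}{q}{r(1+\e)}|\leq n$ to get $\tilde O(n^{\alpha+1}/p_1)$, and observes that $\tfrac{1}{n}\cdot\tfrac{n^{\alpha+1}}{p_1}=\tfrac{n^\alpha}{p_1}$ is absorbed by the output-sensitive term because $\ennpq{\e(2+\e)}{P}{q}$ always contains at least one point (the nearest neighbor of $q$). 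This has the advantage of working with the preprocessing exactly as described, without needing to retune~$\omega$; your route is also valid but implicitly modifies the algorithm's parameters.
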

\begin{proof}
Let $r$ be the radius computed at step 1. By
Theorem~\ref{thm:allnn_Rd}, the expected running time of step 2 is
$\tilde O(\frac{n^{\varrho'}}{p_1}(\frac{1}{\ln\nicefrac{1}{p'_2}}+1)
+ \frac{n^\alpha}{p_1} |\nnpqr{P}{q}{r(1+\e)}|)$. If $r\leq \dist(q,P)(1+\e)$,
then we have $\nnpqr{P}{q}{r(1+\e)}\subseteq \ennpq{\e(2+\e)}{P}{q}$
and so the expected running time becomes $\tilde
O(\frac{n^{\varrho'}}{p_1}(\frac{1}{\ln\nicefrac{1}{p'_2}}+1) +
\frac{n^\alpha}{p_1} |\ennpq{\e(2+\e)}{P}{q}|)$.  By contrast, if $r>
\dist(q,P)(1+\e)$, then we have no bound on the size of
$\nnpr{q}{r(1+\e)}$ other than $n$, so the expected running time of
step 2 becomes $\tilde
O(\frac{n^{\varrho'}}{p_1}(\frac{1}{\ln\nicefrac{1}{p'_2}}+1) +
\frac{n^{\alpha+1}}{p_1})$. Now, recall from Section~\ref{sec:prelim} that the
event that $r> \dist(q,P)(1+\e)$ only occurs with very low
probability, more precisely with probability at most
$\frac{1}{n}$. Therefore, in total the expected running time of step 2
is bounded by $\tilde
O(\frac{n^{\varrho'}}{p_1}(\frac{1}{\ln\nicefrac{1}{p'_2}}+1) +
\frac{n^\alpha}{p_1} |\ennpq{\e(2+\e)}{P}{q}| + \frac{1}{n}
\frac{n^{\alpha+1}}{p_1})$, which is $\tilde
O(\frac{n^{\varrho'}}{p_1}(\frac{1}{\ln\nicefrac{1}{p_2}}+ 1) +
\frac{n^\alpha}{p_1} |\ennpq{\e(2+\e)}{P}{q}|)$ since the set
$\ennpq{\e(2+\e)}{P}{q}$ contains at least one point, namely the
nearest neighbor of $q$.
\end{proof}

Let us now focus on the size of the data structure. By
Corollary~\ref{cor:lsh4enn}, the total size of the tree $\T(P,\e)$ and
associated $(r,\e)$-\pleb data structures is $\tilde O(\frac{1}{\e}
\frac{n^{1+\varrho}}{p_1})$. In addition, since $\T(P,\e)$ has $\tilde O(n)$ nodes
in total, each one storing $\tilde O(\frac{1}{\e})$ data structures
for $(r,\e)$-\pleb, the total number of $\A'(P,r,\e)$ data structures
built at step ii of the preprocessing phase is $\tilde
O(\frac{n}{\e})$. Therefore, by Theorem~\ref{thm:allnn_Rd}, the total
memory usage of the $\A'(P,r,\e)$ data structures is $\tilde
O(\frac{1}{\e}\frac{n^{2+\varrho'}}{p_1})$. 

Observing now that we have $p'_2\geq p_2$ and $\varrho'\geq \varrho$
since $((1+\e)^s+(1+\e)^{-s}-1)^{1/s}\leq 1+\e$, we conclude that our
procedure has the following space and time complexities (where $p'_2$
and $\varrho'$ have been renamed respectively $p_2$ and $\varrho$ for
convenience):
\begin{thm}\label{thm:exact-nn}
  Given a finite set $P$ with $n$ points in $(\R^d,\ell_s)$, $s\in
  (0,2]$, and a user-defined parameter $\e> 0$, our procedure
    answers exact \nn queries with high probability in expected
    $\tilde O(\frac{n^\varrho}{p_1}(\frac{1}{\ln\nicefrac{1}{p_2}}+1)
    + \frac{n^\alpha}{p_1} |\ennpq{\e(2+\e)}{P}{q}|)$ time using $\tilde
    O(\frac{1}{\e}\frac{n^{2+\varrho}}{p_1})$ space, where $\varrho=\frac{\ln
      p_1}{\ln p_2}$ and $\alpha=\varrho(1-\frac{\ln p_0}{\ln p_1})$,
    the quantities $p_0=\Prob(\frac{1}{1+\e})$, $p_1=\Prob(1)$ and
    $p_2=\Prob(((1+\e)^s+(1+\e)^{-s}-1)^{1/s})$ being derived from some
    $s$-stable distribution $D$ according to
    Eq.~(\ref{eq:proba_collision_ls}).
\end{thm}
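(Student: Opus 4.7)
The plan is to stitch together the three ingredients developed just above the theorem: correctness (Lemma~\ref{lem:exact-nn_correct}), the expected cost of Step~2 (Lemma~\ref{lem:exact-nn_runtime}), and the costs/sizes of the underlying $\e$-\nn and exhaustive \pleb data structures supplied by Corollary~\ref{cor:lsh4enn} and Theorem~\ref{thm:allnn_Rd}. All three pieces are already in place, so the theorem amounts to careful bookkeeping and a renaming of parameters.

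Correctness follows from Lemma~\ref{lem:exact-nn_correct} directly: with high probability Step~1 returns a radius $r\in[\dist(q,P),\dist(q,P)(1+\e)]$ and, conditioned on that event, Step~2 returns a superset of $\nnp{q}$, so Step~3 outputs a true nearest neighbor. A union bound over the two failure events preserves the high-probability guarantee.

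For the expected query time I would sum the costs of the three steps. Step~1 costs $\tilde O\!\left(\frac{n^\varrho}{p_1\ln\nicefrac{1}{p_2}}\right)$ by Corollary~\ref{cor:lsh4enn}, with $p_2=\Prob(1+\e)$ and $\varrho=\ln p_1/\ln p_2$. Step~2 is bounded by Lemma~\ref{lem:exact-nn_runtime} using the embedding parameters $p'_2=\Prob(((1+\e)^s+(1+\e)^{-s}-1)^{1/s})$ and $\varrho'=\ln p_1/\ln p'_2$. Step~3 runs in $O(|S|)$ time and is dominated. A one-line calculation shows $((1+\e)^s+(1+\e)^{-s}-1)^{1/s}\leq 1+\e$ for every $s\in(0,2]$ (it reduces to $(1+\e)^{-s}\leq 1$), hence $p'_2\geq p_2$, $\frac{1}{\ln\nicefrac{1}{p'_2}}\geq\frac{1}{\ln\nicefrac{1}{p_2}}$, and $\varrho'\geq\varrho$. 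The Step~2 bound therefore subsumes the Step~1 bound, and after renaming $p'_2\to p_2$, $\varrho'\to\varrho$ we recover the running time advertised in the statement.

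For the space bound I would sum the contributions of preprocessing Steps~i and~ii. Corollary~\ref{cor:lsh4enn} gives $\tilde O\!\left(\frac{1}{\e}\frac{n^{1+\varrho}}{p_1}\right)$ for the tree $\T(P,\e)$ and its $(r,\e)$-\pleb data structures. Each node of $\T(P,\e)$ stores $O(\frac{1}{\e}\ln\frac{n}{\e})$ such data structures and, by Step~ii, each one induces an $\A'(P_v,r,\e)$ of size $\tilde O(\frac{n^{1+\varrho'}}{p_1})$ via Theorem~\ref{thm:allnn_Rd}; since $\T(P,\e)$ has $O(n)$ nodes the total is $\tilde O(\frac{1}{\e}\frac{n^{2+\varrho'}}{p_1})$, which dominates Step~i and yields the claimed bound after the same renaming. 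The only genuinely delicate point — already handled inside Lemma~\ref{lem:exact-nn_runtime} — is folding the (at most $1/n$) probability that Step~1 overshoots the true \nn distance into the expectation of Step~2; once that is absorbed, nothing more than a union bound and the monotonicity check above is needed for the final assembly.
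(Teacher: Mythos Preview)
Your proposal is correct and follows essentially the same approach as the paper: correctness via Lemma~\ref{lem:exact-nn_correct}, query time by summing Corollary~\ref{cor:lsh4enn} for Step~1 and Lemma~\ref{lem:exact-nn_runtime} for Step~2, space by counting the $\tilde O(n/\e)$ exhaustive-\pleb structures and bounding each via Theorem~\ref{thm:allnn_Rd}, and finally the monotonicity observation $((1+\e)^s+(1+\e)^{-s}-1)^{1/s}\le 1+\e$ to absorb the unprimed parameters into the primed ones before renaming. One minor slip: in Step~ii the $\A'$ structures are built on the full set $P$, not on the node subsets $P_v$; this does not affect your bound since you use $n^{1+\varrho'}/p_1$ per structure anyway.
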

Replacing Theorem~\ref{thm:allnn_Rd} by its specialized versions for
$s=1$ and $s=2$ in the analysis immediately gives the following complexity
bounds:
\addtocounter{thm}{-1}
\begin{thm}[case $s=1$]\label{thm:exact-nn_l1}
  Given a finite set $P$ with $n$ points in $(\R^d,\ell_1)$, and a
  user-defined parameter $\e> 0$, our procedure answers exact \nn
  queries with high probability in expected $\tilde
  O(n^\varrho + n^\alpha
  |\ennpq{\e(2+\e)}{P}{q}|)$ time using $\tilde
  O(\frac{1}{\e}n^{2+\varrho})$ space, where $\varrho\leq
  \frac{1}{1+\min\{\e^2,\;\sqrt{\e}\}/4}<1$ and $\alpha\leq
  \e\varrho<\e$.
\end{thm}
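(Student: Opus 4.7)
The plan is to derive Theorem~\ref{thm:exact-nn_l1} as a direct specialization of Theorem~\ref{thm:exact-nn} by plugging in the case $s=1$ estimates established in Section~\ref{sec:erpleb_Rd}, in particular those used to prove Theorem~\ref{thm:allnn_Rd_l1}. Concretely, the algorithm and its correctness analysis (Lemma~\ref{lem:exact-nn_correct}) do not depend on $s$, so nothing changes there; only the complexity bounds need to be rewritten in the $\ell_1$ setting.

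First I would recall that in the $\ell_1$ case the Cauchy distribution is used as the $1$-stable distribution, and that $\e' = \e^2/(1+\e)$ arises from the non-isometric lifting used in the $\A'(P,r,\e)$ construction. Then I would invoke the experimentally verified bounds from Figures~\ref{fig:l1_rho_theorique}--\ref{fig:l1_p2_k}: when $w = \max\{1,\e\}$, we have (i) $\varrho = \frac{\ln p_1}{\ln p_2} \leq \frac{1}{1+\min\{\e^2,\sqrt{\e}\}/4} < 1$, (ii) $\alpha = \varrho(1 - \frac{\ln p_0}{\ln p_1}) \leq \e\varrho < \e$, and (iii) $\frac{1}{p_1}$ and $\frac{1}{\ln\nicefrac{1}{p_2}}$ are bounded above by absolute constants. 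These are exactly the same quantitative facts underlying Theorem~\ref{thm:allnn_Rd_l1}.

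Next I would simply substitute these estimates into the generic time bound of Theorem~\ref{thm:exact-nn}, namely
\[
\tilde O\!\left(\tfrac{n^\varrho}{p_1}\!\left(\tfrac{1}{\ln\nicefrac{1}{p_2}}+1\right) + \tfrac{n^\alpha}{p_1}\,|\ennpq{\e(2+\e)}{P}{q}|\right),
\]
which collapses to $\tilde O(n^\varrho + n^\alpha |\ennpq{\e(2+\e)}{P}{q}|)$ once the constant factors $1/p_1$ and $1/\ln\nicefrac{1}{p_2}$ are absorbed into the $\tilde O$. For the space bound, the same substitution in the $\tilde O(\frac{1}{\e} n^{2+\varrho}/p_1)$ term of Theorem~\ref{thm:exact-nn} gives $\tilde O(\frac{1}{\e} n^{2+\varrho})$. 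Correctness (with high probability) is inherited verbatim from Lemma~\ref{lem:exact-nn_correct}, since that proof invokes only Corollary~\ref{cor:lsh4enn} and Theorem~\ref{thm:allnn_Rd}, both of which hold in $(\R^d,\ell_1)$.

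There is essentially no hard step here: the real work was done in Section~\ref{sec:erpleb_Rd}, where the $s=1$ bounds on $\varrho$, $\alpha$, $p_1$ and $p_2$ were established, and in Section~\ref{sec:exact-nn}, where the reduction from exact \nn to an $\e$-\nn query plus one exhaustive $r$-\pleb query was analyzed. The only mild point worth mentioning is to re-confirm that the event $r > \dist(q,P)(1+\e)$ contributes negligibly (as already absorbed in the proof of Lemma~\ref{lem:exact-nn_runtime}), so that the superset $\nnpqr{P}{q}{r(1+\e)}$ can be safely replaced by $\ennpq{\e(2+\e)}{P}{q}$ in the final bound.
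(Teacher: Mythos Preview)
Your proposal is correct and matches the paper's approach exactly: the paper simply states that replacing Theorem~\ref{thm:allnn_Rd} by its $s=1$ specialization in the analysis of Section~\ref{sec:exact-nn} immediately yields the stated bounds, which is precisely what you spell out in detail.
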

\addtocounter{thm}{-1}
\begin{thm}[case $s=2$]\label{thm:exact-nn_l2}
  Given a finite set $P$ with $n$ points in $(\R^d,\ell_2)$, and a
  user-defined parameter $\e> 0$, our procedure answers exact \nn
  queries with high probability in expected $\tilde O(n^\varrho +
  n^\alpha |\ennpq{\e(2+\e)}{P}{q}|)$ time using $\tilde
  O(\frac{1}{\e}n^{2+\varrho})$ space, where $\varrho\leq
  \frac{1}{1+\e^2/(1+\e)}<1$ and $\alpha\leq
  \e\varrho<\e$.
\end{thm}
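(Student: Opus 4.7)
The plan is to obtain this statement as a direct specialization of Theorem~\ref{thm:exact-nn} to the $\ell_2$ setting, using exactly the same arguments as the ones that led from Theorem~\ref{thm:allnn_Rd} to its $s=2$ version Theorem~\ref{thm:allnn_Rd_l2}. In other words, I would not revisit the analysis of the algorithm itself (correctness via Lemma~\ref{lem:exact-nn_correct} and expected runtime via Lemma~\ref{lem:exact-nn_runtime} both go through as stated); I would only re-plug the $s=2$ estimates for $\varrho$, $\alpha$, $p_1$ and $p_2$ into the generic bounds of Theorem~\ref{thm:exact-nn}.

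First I would recall that in the $s=2$ case the relevant LSH family comes from the normal distribution $\mathcal N(0,1)$, and following the discussion of Section~\ref{sec:erpleb_Rd} I would fix the window parameter to $w=\max\{1,\e\}$. With this choice, the experimental curves summarized in Figures~\ref{fig:l2_rho_theorique}--\ref{fig:l2_p2_k} give the three inequalities I need: $\varrho=\frac{\ln p_1}{\ln p_2}\leq \frac{1}{1+\e^2/(1+\e)}<1$; $\alpha=\varrho\bigl(1-\frac{\ln p_0}{\ln p_1}\bigr)\leq \e\varrho$; and both $\frac{1}{p_1}$ and $\frac{1}{\ln(1/p_2)}$ are bounded above by small absolute constants (independent of $n$, $d$, and $\e$). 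The last point is exactly what lets me absorb these factors into the $\tilde O$ notation, the same way it was done in the proof of Theorem~\ref{thm:allnn_Rd_l2}.

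Next I would substitute these bounds into the generic expressions of Theorem~\ref{thm:exact-nn}. The time bound
\[
\tilde O\!\left(\frac{n^{\varrho}}{p_1}\!\left(\frac{1}{\ln 1/p_2}+1\right)+\frac{n^{\alpha}}{p_1}|\ennpq{\e(2+\e)}{P}{q}|\right)
\]
collapses to $\tilde O(n^{\varrho}+n^{\alpha}|\ennpq{\e(2+\e)}{P}{q}|)$ once $\frac{1}{p_1}$ and $\frac{1}{\ln 1/p_2}$ are replaced by constants, and the space bound $\tilde O(\frac{1}{\e}\frac{n^{2+\varrho}}{p_1})$ becomes $\tilde O(\frac{1}{\e}n^{2+\varrho})$ for the same reason. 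Combining this with the upper bounds $\varrho\leq \frac{1}{1+\e^2/(1+\e)}$ and $\alpha\leq\e\varrho<\e$ yields exactly the claimed estimates.

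The only thing that is not a rote substitution is justifying the three $s=2$ inequalities under $w=\max\{1,\e\}$. The cleanest path is to rely on the plots already presented in the paper, but ideally I would sharpen them into honest analytic statements. The main (and essentially only) obstacle is therefore manipulating the closed-form collision probability $\Prob(l)=1-2\cdfn(-w/l)-\frac{2}{\sqrt{2\pi}\,w/l}(1-e^{-w^2/(2l^2)})$: checking $\varrho\leq \frac{1}{1+\e^2/(1+\e)}$ reduces to a one-variable monotonicity/convexity argument on $\log \Prob$, and checking $\alpha\leq \e\varrho$ reduces to an analogous comparison between $\log \Prob(1/(1+\e))$ and $\log \Prob(1)$. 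Both are elementary but somewhat tedious; if a fully analytic bound proved unwieldy, I would be content to quote the experimental verification reported in Section~\ref{sec:erpleb_Rd}, as is already done in the proof of Theorem~\ref{thm:allnn_Rd_l2}.
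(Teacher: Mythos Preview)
Your proposal is correct and matches the paper's approach exactly: the paper simply states that ``replacing Theorem~\ref{thm:allnn_Rd} by its specialized versions for $s=1$ and $s=2$ in the analysis immediately gives the following complexity bounds,'' which is precisely the substitution you describe. Your added remarks about how the $s=2$ inequalities for $\varrho$, $\alpha$, $\frac{1}{p_1}$, $\frac{1}{\ln(1/p_2)}$ are justified (namely via the experimental curves of Section~\ref{sec:erpleb_Rd} with $w=\max\{1,\e\}$) are faithful to what the paper actually does.
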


Note that in practice a trade-off must be made by the user when
choosing parameter $\e$. Indeed, the smaller $\e$, the smaller
the set $\ennpq{\e(2+\e)}{P}{q}$ and the smaller $\alpha$ compared to
$\varrho$, but on the other hand the higher $\varrho$ itself.

\begin{remark}\label{rem:exact-NN_general}
In our analysis we traded optimality for simplicity since we applied
the results from Section~\ref{sec:erpleb_Rd} verbatim. In fact, a
closer look at the problem reveals that the points of $P$ lie at least
$\dnnp{q}\geq \frac{r}{1+\e}$ away from the query point $q$ with high
probability at step 2 of the query phase. This means that no
lifting of the data into $\R^{d+1}$ is actually needed. We then
have $p'_2=p_2$, $\varrho'=\varrho$, and a careful analysis shows that
relevant choices of parameter $w$ reduce $\varrho$ down to (or at
least close to) $\frac{1}{1+\e}$. In addition and more importantly,
not having to re-embed the data means that the algorithm can be
applied in arbitrary metric spaces $(X,\dist)$ that admit
locality-sensitive families of hash functions, where the analysis
extends in a straightforward manner.
%
\end{remark}

\section{From exhaustive $r$-\pleb to exact \rnn}
\label{sec:rnn}

In this section we focus on our main problem (\rnn) and show how
it can be reduced to a single instance of $\e$-\nn search plus a
controlled number of instances of exhaustive $r$-\pleb. Although the
reduction is applicable in any metric space, we will restrict our
study to the case of $\R^d$ equipped with an $\ell_s$-norm, $s\in
(0,2]$, where the non-isometric embedding trick of
  Section~\ref{sec:erpleb_Rd} can be used to speed-up the process. The
  details of the reduction are given in Section~\ref{sec:rnn_alg}, its
  output proven correct in Section~\ref{rnn_correctness}, and its
  complexity analyzed in Section~\ref{rnn_complexity}. The reduction
  and analysis are then extended to the bichromatic setting in
  Section~\ref{sec:rnn_bichro}.  For now we begin with an overview of
  the reduction and of its key ingredients in
  Section~\ref{sec:rnn_overview}.

\subsection{Overview of the reduction}
\label{sec:rnn_overview}

Let $P$ be a finite set with $n$ points in $(\R^d,\ell_s)$, $s\in
(0,2]$.  Suppose the distance of every point $p\in P$ to its nearest
  neighbor in $P\setminus\{p\}$ has been pre-computed. Then, given a query point
  $q$, computing a solution to the \rnn query amounts to
  checking, for every point $p\in P$, whether $\dist(q,p)\leq
  \dnnpp{p}$ or $\dist(q,p)>\dnnpp{p}$: in the first case, $p$ must be
  included in the solution, whereas in the second case it must
  not. This check for point $p$ can be done by computing the solution
  $S$ of the exaustive $r$-\pleb query on input $(P,q)$, with
  $r=\dnnpp{p}$, and by including $p$ in the answer if and only if it
  belongs to~$S$. Indeed, 
\[
p\in \rnnp{q} \Leftrightarrow \dist(p,q)\leq \dnnpp{p} = r
\Leftrightarrow p\in \nnpqr{P}{q}{r} \Leftrightarrow p\in S.
\] 
Thus, computing the set $\rnnp{q}$ boils down to locating $q$ among
the set of balls $\{\ball(p, \dist(p,P)) \mid p\in P\}$.  This
observation was exploited in previous work~\cite{KM00} and serves as
the starting point of our approach.  The main problem is that the ball
radius $r$ changes with each data point $p\in P$ considered, so the
total number of exhaustive $r$-\pleb queries to be solved can be up to
linear in $n$. To reduce this number, we allow some degree of
fuzziness and use a bucketing strategy. Given a user-defined parameter
$\e>0$, at pre-processing time we compute and store $\dnnpp{p}$ for
every point $p\in P$ and then we hash the data points into buckets
according to their nearest neighbor distances, so that bucket $P_i$
contains the points $p\in P$ such that
$(1+\e)^{i-1} \leq \dnnpp{p} < (1+\e)^i$. At query time, we
  solve an exhaustive $r$-\pleb query with $r=(1+\e)^{i}$ on each
  bucket $P_i$ separately, then we consider the union $S$ of the
  solutions and prune out those points $p\in S$ such that
  $\dist(p,q)>\dnnp{p}$. Since the points $p\in P_i$ satisfy
  $(1+\e)^{i-1}\leq \dnnpp{p} < (1+\e)^{i}$, it is easily seen that
  $\rnnp{q}\subseteq S\subseteq \ernnpq{\e}{P}{q}$ and that our output
  is an admissible solution to the \rnn query.

A remaining issue is that we do not impose any constraints on
parameter $i$, so at query time we need to inspect every single
non-empty bucket $P_i$. As a result, in pathological cases such as
when all non-empty buckets are singletons, we will end up considering
a linear number of buckets, even though the set $\ernnp\e{q}$ itself
might be small or even empty. To avoid this pitfall, we limit the
range of values of $i$ to be considered thanks to the following
observations, where $y$ is an arbitrary point of $\ennpq{\e}{P}{q}$:
\begin{observation}\label{claim:closeby_pts}
  Every point $p\in\rnnp{q}$ satisfies $\dnnpp{p}\geq
  \frac{\dist(q,y)}{1+\e}$.
\end{observation}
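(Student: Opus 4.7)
The statement is essentially a triangle-inequality-style chain of three easy bounds tying together the three hypotheses: (a) the definition of reverse nearest neighbor, (b) the definition of an $\e$-nearest neighbor of $q$, and (c) the fact that $p$ is a bona fide point of $P\setminus\{q\}$. My plan is just to write down the three inequalities in the right order.

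First, I would unfold the definition of $\rnnp{q}$. By definition, $p\in\rnnp{q}$ means $q\in\nnpq{P\cup\{q\}}{p}$, which is to say that $q$ is at least as close to $p$ as any other point of $(P\cup\{q\})\setminus\{p\}$; in particular, $\dist(p,q)\leq \dist(p,p')$ for every $p'\in P\setminus\{p\}$, so
\[
\dist(p,q)\;\leq\;\dnnpp{p}.
\]
Next, I would use that $y\in\ennpq{\e}{P}{q}$ means $\dist(q,y)\leq (1+\e)\dist(q,P)$.  Finally, since $p\in P\setminus\{q\}$ (as $\rnnp{q}$ excludes $q$ itself by the usual convention in Section~\ref{sec:defs}), the minimum $\dist(q,P)$ is upper bounded by $\dist(q,p)$.

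Chaining these three bounds yields
\[
\dist(q,y)\;\leq\;(1+\e)\,\dist(q,P)\;\leq\;(1+\e)\,\dist(q,p)\;\leq\;(1+\e)\,\dnnpp{p},
\]
and dividing through by $1+\e$ gives the claim. There is no real obstacle here; the only thing to be careful about is the convention that $p\neq q$ so that $p$ is a legitimate candidate in the minimum defining $\dist(q,P)$, which is guaranteed by our implicit exclusion of the query point from $\rnnp{q}$.
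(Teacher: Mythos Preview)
Your proof is correct and follows essentially the same route as the paper's: both unfold $p\in\rnnp{q}$ to get $p\neq q$ and $\dist(p,q)\leq\dnnpp{p}$, then use $y\in\ennp{\e}{q}$ together with $p\in P\setminus\{q\}$ to obtain $\dist(q,y)\leq(1+\e)\dnnp{q}\leq(1+\e)\dist(q,p)$, and chain these to conclude.
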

 \begin{proof}
   Since $p\in \rnnp{q}$, we have $p\neq q$ and $\dist(p,q)\leq
   \dnnp{p}$. Moreover, since $p\neq q$ and $y\in \ennp{\e}{q}$, we
   have $\dist(q,y)\leq (1+\e)\dnnp{q}\leq (1+\e)\dist(q,p)$. It
   follows that $\dist(q,y)\leq (1+\e)\dnnp{p}$.
 \end{proof}
\begin{observation}\label{claim:far-away_pts}
  Every point $p\in\rnnp{q}$ such that $\dnnpp{p}\geq
  \frac{\dist(q,y)}{\e}$ belongs to $\ernnp{\e}{y}\cup\{y\}$.
\end{observation}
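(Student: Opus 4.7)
The plan is to prove Observation~\ref{claim:far-away_pts} by a direct application of the triangle inequality, using the reverse-nearest-neighbor hypothesis to control $\dist(p,q)$ and the hypothesis $\dnnpp{p}\geq \dist(q,y)/\e$ to control $\dist(q,y)$ in terms of $\dnnpp{p}$.

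First I would dispose of the trivial case $p=y$, which clearly lands $p$ in $\ernnp{\e}{y}\cup\{y\}$. So assume $p\neq y$. Unfolding the definition of $\ernnp{\e}{y}$ (bearing in mind that $y\in P$, so $P\cup\{y\}=P$), to show $p\in \ernnp{\e}{y}$ it suffices to verify that $p\in P\setminus\{y\}$ and that $\dist(p,y)\leq (1+\e)\,\dnnpp{p}$; membership in $P\setminus\{y\}$ is immediate since $p\in \rnnp{q}\subseteq P$ and $p\neq y$.

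For the distance bound, the triangle inequality gives
\[
\dist(p,y)\;\leq\;\dist(p,q)+\dist(q,y).
\]
The hypothesis $p\in \rnnp{q}$ means $\dist(p,q)\leq \dnnpp{p}$, and the hypothesis $\dnnpp{p}\geq \dist(q,y)/\e$ is precisely $\dist(q,y)\leq \e\,\dnnpp{p}$. Adding these two inequalities yields $\dist(p,y)\leq (1+\e)\,\dnnpp{p}$, as required.

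There is no real obstacle here: the observation is essentially a bookkeeping lemma that says the hypothesis $\dnnpp{p}\geq \dist(q,y)/\e$ is exactly the threshold at which the triangle inequality through $q$ witnesses $y$ as an $\e$-approximate nearest neighbor of $p$. The only mild subtlety is remembering the convention that $y$ itself is excluded from $\ennpq{\e}{P}{p}$, which forces the separate treatment of the case $p=y$ and explains the $\cup\{y\}$ in the statement.
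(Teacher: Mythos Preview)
Your proof is correct and follows essentially the same argument as the paper: both use the triangle inequality $\dist(p,y)\leq \dist(p,q)+\dist(q,y)$, bound the first term by $\dnnpp{p}$ via the reverse-nearest-neighbor hypothesis and the second by $\e\,\dnnpp{p}$ via the threshold hypothesis, and conclude $\dist(p,y)\leq (1+\e)\dnnpp{p}$. The only cosmetic difference is that you split off the case $p=y$ at the start, whereas the paper derives the distance bound first and then reads off the disjunction ``either $p=y$ or $p\in \ernnp{\e}{y}$'' at the end.
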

 \begin{proof}
   Since $p\in\rnnp{q}$, we have $\dist(p,q)\leq
   \dnnpp{p}$. In addition, we have $\dist(q,y)\leq \e\,\dnnpp{p}$ by
   hypothesis. Hence, $\dist(p,y)\leq \dist(p,q)+\dist(q,y)\leq
   (1+\e)\dnnpp{p}$, which means that either $p=y$ or $p\in \ernnp{\e}{y}$.
 \end{proof}
Assuming that we have precomputed a data structure that enables us to
find some $y\in\ennpq{\e}{P}{q}$, Observation~\ref{claim:closeby_pts}
ensures that we can safely ignore the buckets $P_i$ with
$i\leq\log_{1+\e} \frac{\dist(q,y)}{1+\e}$. Furthermore, assuming that
the set $\ernnp{\e}{y}$ has been precomputed,
Observation~\ref{claim:far-away_pts} ensures that the reverse
nearest neighbors of $q$ that belong to the buckets $P_i$ with
$i\geq 1+\log_{1+\e} \frac{\dist(q,y)}{\e}$ can simply be looked for
among the points of $\ernnp{\e}{y}\cup\{y\}$. Thus, the total number
of buckets to be inspected is reduced to 
$O(\frac{1}{\e}\log\frac{1}{\e})=\tilde O(\frac{1}{\e})$.

\subsection{Details of the reduction}
\label{sec:rnn_alg}

Given a finite set $P$ with $n$ points in $(\R^d,\ell_s)$, $s\in
(0,2]$, and a parameter $\e> 0$, our pre-computation phase builds a
data structure $\rnnds(P,\e)$ that stores the following pieces of
information:
\begin{slist}
\item[i.] A collection of buckets $\{P_i\}_{i\in\Z}$ that
  partition $P$. Each bucket $P_i$ contains those points $p\in P$ such
  that $(1+\e)^{i-1}\leq \dnnpp{p} < (1+\e)^i$. To fill in the
  buckets, we iterate over the points $p\in P$, we compute the
  distance $\dnnpp{p}$ exactly\footnote{This can be done either by
    brute-force or using the algorithm of
    Section~\ref{sec:exact-nn}.} and store it, and then we assign $p$
  to its corresponding bucket.
 Once this is done, the empty buckets are discarded and the non-empty
 buckets are stored in a hash table to ensure constant look-up
 time. On each non-empty bucket $P_i$ we build an $\A'(P_i, (1+\e)^i,
 \e)$ data structure using the procedure of
 Section~\ref{sec:erpleb_Rd}.
 Note that when applying Algorithm~\ref{alg:nnpre} we increase the
 number of iterations of the main loop from $\lceil c \ln |P_i|\rceil$
 to $\lceil c\ln n\rceil$, where $c=\frac{3}{\ln\frac{5}{2}}$.
\item[ii.] For each point $y\in P$, an array $P_y$ containing the
  points $p\in \ernnp{\e}{y}\cup\{y\}$, sorted by increasing
  distances $\dnnp{p}$. Building the array takes $\tilde O(n)$ time once
  $\dnnpp{p}$ has been computed for all $p\in P$.
\item[iii.] The tree $\T(P,\e)$ of
  Section~\ref{sec:renn-to-enn} and its associated $(r,\e)$-\pleb data
  structures.
\end{slist}

\smallskip

\noindent Given a point $q\in \R^d$, we answer the \rnn
query using the $\rnnds(P,\e)$ data structure as follows:
\begin{slist}
\item[1.] We use the tree $\T(P,\e)$ and its
  associated $(r,\e)$-\pleb data structures to answer an $\e$-\nn
  query, and we let $y$ be the output point.
\item[2.] We use the $\A'(P_i, (1+\e)^i, \e)$ data structure to
  answer an exhaustive $(1+\e)^i$-\pleb query on each bucket $P_i$
  separately, for $i$ lying in the range prescribed by
  Observations~\ref{claim:closeby_pts} and~\ref{claim:far-away_pts},
 and then we merge the output sets into a single set $S$.
 Note that when applying Algorithm~\ref{alg:nnquery} on $P_i$ we
 increase the number of iterations of the main loop from
 $\lceil c \ln |P_i|\rceil$ to $\lceil c \ln n\rceil$, where
 $c=\frac{3}{\ln\frac{5}{2}}$, which raises the probability of success
 of the query from $1-\frac{1}{|P_i|^2}$ (which can be as low as
 $0$ when $P_i$  is a singleton) to $1-\frac{1}{n^2}$.
\item[3.] We add to $S$ the points $p\in \ernnpq{\e}{P}{y}\cup\{y\}$ s.t.
  $\dnnp{p}\geq \frac{\dist(q,y)}{\e}$. These are found by looking up
  the value $ \frac{\dist(q,y)}{\e}$ in the sorted array $P_y$ by
  binary search, and then by iterating until the end of the array.
\item[4.] We iterate over the points $p\in S$ and remove the ones
  that do not satisfy $\dist(p,q)\leq \dnnp{p}$.
\end{slist}
\smallskip
Upon termination, we return the set $S$.  The pseudo-codes of the
preprocessing and query procedures are given in Algorithms~\ref{alg:rnnpre}
and~\ref{alg:rnnquery}.

%
\begin{algorithm}[!htb]
  \LinesNumbered
  \SetKwInOut{Input}{Input}
  \SetKwInOut{Output}{Output}

  \Input{point cloud $P\subset \R^d$, 
    parameter $\e>0$}
  \Output{$\rnnds(P,\e)$ data structure}
  \BlankLine

  Initialize $P_i := \emptyset$ for $i\in \Z$\;
  \ForEach{$p\in P$} {
    Compute $\dnnpp{p}$ exactly and store it\;
    Find $i$ s.t. $(1+\e)^{i-1}\leq \dnnpp{p}<(1+\e)^{i}$ and
    update $P_i := P_i \cup \{p\}$\;
  }
  \ForEach{$P_i\neq\emptyset$} {
    Build an $\A'(P_i, (1+\e)^{i}, \e)$ data structure\;
  }
  \ForEach{$y\in P$} {
    Build the set $\ernnp{\e}{y}\cup\{y\}$ and store it in an array $P_y$ \;
    \mbox{Sort the points $p\in P_y$ by increasing distances $\dnnp{p}$} \;
  }
  Build the tree $\T(P,\e)$ of
  Section~\ref{sec:renn-to-enn} and its associated $(r,\e)$-\pleb data
  structures \;
  \caption{\small Pre-processing phase for \rnn.}
  \label{alg:rnnpre}
\end{algorithm}
%
\begin{algorithm}[!htb]
  \LinesNumbered
  \SetKwInOut{Input}{Input}
  \SetKwInOut{Output}{Output}

  \Input{$\rnnds(P,\e)$ data structure, query point $q\in \R^d$}
  \BlankLine

Answer an $\e$-\nn query on input $(P,q)$, and let  $y$ be the output \;
  \For{$i=\left\lfloor \log_{1+\e} \frac{\bd(q,y)}{1+\e}\right\rfloor + 1
    \ \mathrm{ to }\ \left\lceil \log_{1+\e}\frac{\bd(q,y)}{\e}\right\rceil$} { 
    \If{$P_i\neq\emptyset$}{ Answer an exhaustive $(1+\e)^{i}$-\pleb query on input $(P_i,q)$, and let $S_i$ be the output \; 
    } 
  }

  Let $S:=\bigcup_i S_i$ \;

  Look up the value $\frac{\dist(q,y)}{\e}$ in the sorted array $P_y$ by binary search \;

  Iterate from the  value $\frac{\dist(q,y)}{\e}$ to the end of the array $P_y$ and insert all the visited points into $S$ \;

  \ForEach{$p\in S$ } {
    \If{$\bd(p,q) > \dnnp{p}$} {
        Remove $p$ from $S$\;
    }
  }

  Return $S$ \;
  \caption{\small Online query phase for \rnn.}
  \label{alg:rnnquery}
\end{algorithm}

\subsection{Correctness of the output}
\label{rnn_correctness}

Corollary~\ref{cor:lsh4enn} guarantees that step 1 of the query
procedure retrieves a point $y\in\ennp{\e}{q}$ with high
probability. Let us show that, given that $y\in \ennp{\e}{q}$, the
final set $S$ output by the query procedure satisfies $S=\rnnp{q}$ with high
probability.  For clarity, we let $S'$ be the set of points inserted
in $S$ at step 2 of the procedure, and $S''$ be the set of points
inserted at step 3. The output of the algorithm is then $(S'\cup
S'')\cap \rnnp{q}$.
Let $P'=\bigcup_i P_i$ for $i=\lfloor \log_{1+\e}
\frac{\bd(q,y)}{1+\e}\rfloor + 1$ to $\lceil 
\log_{1+\e}\frac{\bd(q,y)}{\e}\rceil$.
\begin{lem} \label{lem:allnnred}
  $\rnnp{q}\cap P'\subseteq S'$ with high probability.
\end{lem}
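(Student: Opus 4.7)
The plan is to take an arbitrary $p\in \rnnp{q}\cap P'$ and trace through which bucket it lands in, then invoke the correctness of the exhaustive $r$-\pleb query on that bucket. Concretely, since $P'$ is a union of buckets, $p\in P_i$ for some $i$ in the prescribed range $\lfloor\log_{1+\e}\frac{\dist(q,y)}{1+\e}\rfloor+1\leq i\leq\lceil\log_{1+\e}\frac{\dist(q,y)}{\e}\rceil$. By definition of the bucket, $\dnnpp{p}<(1+\e)^i$. The hypothesis $p\in\rnnp{q}$ gives $\dist(p,q)\leq \dnnpp{p}$, and chaining these two inequalities yields $\dist(p,q)<(1+\e)^i$. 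In particular, $p\in \nnpqr{P_i}{q}{(1+\e)^i}$, which is exactly what the exhaustive $(1+\e)^i$-\pleb query on $P_i$ is asked to recover.

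Next I would invoke the correctness of the exhaustive $r$-\pleb procedure on each bucket. By Lemma~\ref{lem:success}, the set $S_i$ returned by the exhaustive query on $P_i$ satisfies $\nnpqr{P_i}{q}{(1+\e)^i}\subseteq S_i$ with probability at least $1-|P_i|^{1-c\ln(5/2)}$ if we used $\lceil c\ln |P_i|\rceil$ iterations. However, since $|P_i|$ may be as small as $1$, the default number of iterations is useless here; this is precisely why the algorithm bumps the iteration count up to $\lceil c\ln n\rceil$. Applying Remark~\ref{rem:increase_n} with $m=n$, the failure probability per bucket drops to at most $n^{1-c\ln(5/2)}$, which with the chosen constant $c=\nicefrac{3}{\ln(5/2)}$ becomes $n^{-2}$.

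Finally, I would take a union bound. The range of $i$ considered in step~2 has cardinality $O(\frac{1}{\e}\log\frac{1}{\e})=\tilde O(\frac{1}{\e})$, so the probability that some bucket $P_i$ in the prescribed range fails to recover its complete set of $(1+\e)^i$-close points is at most $\tilde O(\frac{1}{\e n^2})$. On the complementary event, $S_i\supseteq \nnpqr{P_i}{q}{(1+\e)^i}\ni p$ for every such bucket, hence $p\in S'=\bigcup_i S_i$. Since this argument applies simultaneously to every $p\in \rnnp{q}\cap P'$ via the same high-probability event, we conclude $\rnnp{q}\cap P'\subseteq S'$ with high probability.

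The bulk of this proof is just bookkeeping; the only subtle point is recognizing that the default correctness guarantee of Lemma~\ref{lem:success} is too weak for singleton (or very small) buckets, so the argument hinges on the iteration-count boost justified by Remark~\ref{rem:increase_n} and on keeping the number of inspected buckets at $\tilde O(\frac{1}{\e})$ (which is exactly what Observations~\ref{claim:closeby_pts} and~\ref{claim:far-away_pts} were designed to ensure).
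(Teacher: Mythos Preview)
Your proof is correct and follows essentially the same approach as the paper: show $\rnnp{q}\cap P_i\subseteq\nnpqr{P_i}{q}{(1+\e)^i}$ via the bucket definition and the reverse-nearest-neighbor condition, invoke the correctness guarantee of the exhaustive $r$-\pleb query (boosted to $\lceil c\ln n\rceil$ iterations so that small buckets are handled), and finish with a union bound. The only notable difference is that the paper takes the union bound over all non-empty buckets (at most $n$ of them), yielding failure probability $\leq\frac{1}{n}$, whereas you union-bound only over the $\tilde O(\frac{1}{\e})$ buckets actually inspected, yielding the slightly sharper $\tilde O(\frac{1}{\e n^2})$; both are valid.
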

\begin{proof}
  Step 2 of the query procedure builds $S'$ by taking the union of the
  sets $S_i$ generated by answering exhaustive $(1+\e)^{i}$-\pleb
  queries on the non-empty buckets $P_i$ with query point $q$. For
  each such $P_i$, we have $\rnnpq{P}{q}\cap P_i\subseteq
  \nnpqr{P_i}{q}{(1+ \e)^i}$ since by definition every point $p\in
  \rnnpq{P}{q}\cap P_i$ satisfies $\dist(p,q)\leq \dnnp{p}\leq
  (1+\e)^i$. Now, by Theorem~\ref{thm:allnn}, we have $S_i =
  \nnpqr{P_i}{q}{(1+\e)^i}$ with probability at least
  $1-\frac{1}{n^2}$.  Thus, $\rnnp{q}\cap P_i \subseteq S_i$ with
  probability at least $1-\frac{1}{n^2}$. Since the total number of
  non-empty buckets is at most $n$, the union bound tells us that
  $\rnnp{q}\cap P'\subseteq S'$ with probability
  at least $1-\frac{1}{n}$.
\end{proof}
\begin{lem}\label{lem:rnnofann}
  Given that $y\in \ennp\e{q}$, we have $\rnnp{q}\setminus P'\subseteq
  S''$ with high probability.
\end{lem}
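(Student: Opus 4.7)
The plan is to argue deterministically, given $y\in\ennp\e{q}$, that every point $p\in\rnnp{q}\setminus P'$ is inserted into $S$ at step~3, so that $p\in S''$. The only randomness would come from the hypothesis $y\in\ennp\e{q}$, which is already built into the statement; hence the high-probability conclusion is inherited directly from step~1.

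Let $p\in\rnnp{q}\setminus P'$. Since the buckets $\{P_i\}_{i\in\Z}$ partition $P$ and $p\in P$, there is a unique $j\in\Z$ with $p\in P_j$, i.e.\ $(1+\e)^{j-1}\leq\dnnp{p}<(1+\e)^{j}$. Because $p\notin P'$, this index $j$ lies outside the range $\bigl[\lfloor\log_{1+\e}\tfrac{\dist(q,y)}{1+\e}\rfloor+1,\;\lceil\log_{1+\e}\tfrac{\dist(q,y)}{\e}\rceil\bigr]$. I would then split into the two possible cases.

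First, if $j\leq\lfloor\log_{1+\e}\tfrac{\dist(q,y)}{1+\e}\rfloor$, then $(1+\e)^{j}\leq\tfrac{\dist(q,y)}{1+\e}$, so $\dnnp{p}<\tfrac{\dist(q,y)}{1+\e}$. This directly contradicts Observation~\ref{claim:closeby_pts}, which requires $\dnnp{p}\geq\tfrac{\dist(q,y)}{1+\e}$ for any $p\in\rnnp{q}$. So this case never arises. Second, if $j\geq\lceil\log_{1+\e}\tfrac{\dist(q,y)}{\e}\rceil+1$, then $(1+\e)^{j-1}\geq\tfrac{\dist(q,y)}{\e}$, hence $\dnnp{p}\geq\tfrac{\dist(q,y)}{\e}$. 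Applying Observation~\ref{claim:far-away_pts} now yields $p\in\ernnp{\e}{y}\cup\{y\}$, so $p$ lies in the sorted array $P_y$ stored in $\rnnds(P,\e)$. Since moreover $\dnnp{p}\geq\tfrac{\dist(q,y)}{\e}$, the binary search on $P_y$ at step~3 will locate $p$ in the suffix to be inserted, and therefore $p\in S''$.

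Combining the two cases, $\rnnp{q}\setminus P'\subseteq S''$ whenever $y\in\ennp\e{q}$, completing the proof. The main thing to get right is the bookkeeping with the floors/ceilings in the bucket range and the strict-versus-non-strict inequalities in the bucket definition, together with checking that Observations~\ref{claim:closeby_pts} and~\ref{claim:far-away_pts} fire cleanly at the two endpoints; there is no genuine analytic obstacle, only a careful case split.
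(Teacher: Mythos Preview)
Your proof is correct and follows essentially the same approach as the paper: a deterministic case split on whether the bucket index is below or above the prescribed range, invoking Observation~\ref{claim:closeby_pts} to rule out the low case and Observation~\ref{claim:far-away_pts} to place the high case inside $P_y$ with $\dnnp{p}\geq\frac{\dist(q,y)}{\e}$. Your remark that the argument is deterministic once $y\in\ennp\e{q}$ is also apt, and your floor/ceiling bookkeeping is handled correctly.
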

\begin{proof}
  The result follows from Observations~\ref{claim:closeby_pts}
  and~\ref{claim:far-away_pts}. Indeed, every point $p\in P_i$ with
  $i< \lfloor \log_{1+\e} \frac{\bd(q,y)}{1+\e}\rfloor+1$ satisfies
  $\dnnpp{p}<(1+\e)^{i}\leq\frac{\bd(q,y)}{1+\e}$ and therefore
  cannot belong to $\rnnp{q}$, by Observation~\ref{claim:closeby_pts}.
  In addition, the points $p\in \rnnp{q}\cap P_i$ with $i>\lceil
  \log_{1+\e}\frac{\bd(q,y)}{\e}\rceil$ satisfy $\dnnpp{p}\geq
  (1+\e)^{i-1}\geq \frac{\bd(q,y)}{\e}$ and therefore belong to
  $\ernnp{\e}{y}\cup\{y\}$, by Observation~\ref{claim:far-away_pts}.
  Hence, all such points $p$ are inserted in $S$ at step 3 of the
  query procedure. It follows that $\rnnp{q}\setminus P'\subseteq
  S''$.
\end{proof}

It follows from Lemmas~\ref{lem:allnnred} and~\ref{lem:rnnofann} that
$(S'\cup S'')\cap\rnnp{q}=\rnnp{q}$ with high probability.  In other
words, the set $S$ returned after step 4 of the query procedure
coincides with $\rnnp{q}$ with high probability.

\subsection{Complexity}
\label{rnn_complexity}

Let $D$ be the $s$-stable distribution used by the algorithm, and let
$p_0=\Prob(\frac{1}{1+\e})$, $p_1=\Prob(1)$, $p_2=\Prob(1+\e)$ and
$p'_2=\Prob(((1+\e)^s + (1+\e)^{-s} -1)^{1/s})$ be derived from $D$
according to Eq.~(\ref{eq:proba_collision_ls}). By
Corollary~\ref{cor:lsh4enn}, the running time of the $\e$-\nn query at
step 1 is $\tilde O(\frac{n^{\varrho}}{p_1 \ln \nicefrac{1}{p_2}})$,
where $\varrho = \frac{\ln p_1}{\ln p_2}$. Then, for $i$ ranging from
$\left\lfloor \log_{1+\e} \frac{\bd(q,y)}{1+\e}\right\rfloor + 1$ to
$\left\lceil \log_{1+\e}\frac{\bd(q,y)}{\e}\right\rceil$, the
exhaustive $(1+\e)^i$-\pleb query on the set $P_i$ takes $\tilde
O(\frac{|P_i|^{\varrho'}}{p_1}(\frac{1}{\ln \nicefrac{1}{p'_2}}+1) +
\frac{|P_i|^\alpha}{p_1} |\nnpqr{P_i}{q}{(1+\e)^{i+1}}|) = \tilde
O(\frac{n^{\varrho'}}{p_1}(\frac{1}{\ln \nicefrac{1}{p'_2}}+1) +
\frac{n^\alpha}{p_1} |\nnpqr{P_i}{q}{(1+\e)^{i+1}}|)$ time in
expectation, where $\varrho' = \frac{\ln p_1}{\ln p'_2}$ and $\alpha =
\varrho'(1-\frac{\ln p_0}{\ln p_1})$, by Theorem~\ref{thm:allnn_Rd}.
Observe that the points $p\in \nnpqr{P_i}{q}{(1+\e)^{i+1}}$ satisfy
$\dist(p,q)\leq (1+\e)^{i+1}\leq (1+\e)^2\dnnpp{p}$, so we have
$\nnpqr{P_i}{q}{(1+\e)^{i+1}}\subseteq
\ernnp{\e(2+\e)}{q}$. Furthermore, since the buckets $P_i$ are
pairwise disjoint, so are the sets $\nnpqr{P_i}{q}{(1+\e)^{i+1}}$. It
follows that the total expected time spent at step 2 is $\tilde
O(\frac{1}{\e}\frac{n^{\varrho'}}{p_1}(\frac{1}{\ln\nicefrac{1}{p'_2}}+1)
+ \frac{n^\alpha}{p_1}|\ernnp{\e(2+\e)}{q}|)$, the factor
$\frac{1}{\e}$ in the first term coming from the fact that there are
$\tilde O(\frac{1}{\e})$ iterations of the loop.  Considering now step
3, the binary search takes $O(\log_2 |P_y|)=O(\log_2 n)$ time.  For
every point $p\in P_y$ such that $\dnnpp{p}\geq
\frac{\dist(q,y)}{\e}$, we have $\dist(p,y)\geq \dnnpp{p}\geq
\frac{\dist(q,y)}{\e}$, so $\dist(p,q)\leq \dist(p,y)+\dist(y,q)\leq
(1+\e)\dist(p,y)\leq (1+\e)^2\dnnpp{p}$ since $p\in
P_y=\ernnp{\e}{y}$. It follows that $p\in\ernnp{\e(2+\e)}{q}$. Hence,
the total time spent at step 3 is $O(\log_2 n +
|\ernnp{\e(2+\e)}{q}|)$ and is therefore dominated by the time spent
at step 2.  Finally, the time spent at step 4 is dominated by the
times spent at steps 2 and 3. Combining these bounds together and
using the fact that $p'_2\geq p_2$ and $\varrho'\geq\varrho$ since
$((1+\e)^s + (1+\e)^{-s}-1)^{1/s}\leq 1+\e$, we obtain the following
query time bound (where $p'_2$ and $\varrho'$ are renamed respectively
$p_2$ and $\varrho$ for convenience):
\begin{thm}\label{thm:rnnquerytime}
Given $q\in (\R^d, \ell_s)$, the expected query time is
$\tilde O (\frac{1}{\e}\frac{n^\varrho}{p_1}(\frac{1}{\ln
  \nicefrac{1}{p_2}}+1) + \frac{n^\alpha}{p_1}
|\ernnp{\e(2+\e)}{q}|)$, where $\varrho=\frac{\ln p_1}{\ln p_2}$ and
$\alpha=\varrho(1-\frac{\ln p_0}{\ln p_1})$, the quantities
$p_0=\Prob(\frac{1}{1+\e})$, $p_1=\Prob(1)$ and
$p_2=\Prob(((1+\e)^s+(1+\e)^{-s}-1)^{1/s})$ being derived from some
$s$-stable distribution $D$ according to
Eq.~(\ref{eq:proba_collision_ls}).
\end{thm}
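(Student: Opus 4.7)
The plan is to bound the running time of each of the four steps of Algorithm~\ref{alg:rnnquery} separately and then sum them. Step~1 is handled by Corollary~\ref{cor:lsh4enn}, which gives expected time $\tilde O(\frac{n^\varrho}{p_1\ln\nicefrac{1}{p_2}})$ for the $\e$-\nn query, using the quantities $p_1=\Prob(1)$ and $p_2=\Prob(1+\e)$. Step~4 just filters the set $S$ already assembled, so its cost is dominated by the construction of $S$.

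For step~2, I apply Theorem~\ref{thm:allnn_Rd} to each non-empty bucket $P_i$ in the range of indices of interest, with radius $(1+\e)^i$ and approximation parameter $\e$. This gives an expected cost of $\tilde O(\frac{|P_i|^{\varrho'}}{p_1}(\frac{1}{\ln\nicefrac{1}{p'_2}}+1) + \frac{|P_i|^\alpha}{p_1}|\nnpqr{P_i}{q}{(1+\e)^{i+1}}|)$ per bucket, with $\varrho'=\frac{\ln p_1}{\ln p'_2}$, $\alpha=\varrho'(1-\frac{\ln p_0}{\ln p_1})$, and $p'_2=\Prob(((1+\e)^s+(1+\e)^{-s}-1)^{1/s})$. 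Replacing $|P_i|$ by $n$ in each base and summing over the $\tilde O(\nicefrac{1}{\e})$ values of $i$ contributes a factor $\nicefrac{1}{\e}$ to the overhead term, yielding $\tilde O(\frac{1}{\e}\frac{n^{\varrho'}}{p_1}(\frac{1}{\ln\nicefrac{1}{p'_2}}+1))$. The key point, which I expect to be the only nontrivial one in the argument, is to prevent the output-sensitive term from also being multiplied by $\nicefrac{1}{\e}$. For this I observe that the buckets $P_i$ partition $P$, so the sets $\nnpqr{P_i}{q}{(1+\e)^{i+1}}$ are pairwise disjoint, and that any $p\in\nnpqr{P_i}{q}{(1+\e)^{i+1}}$ satisfies $\dist(p,q)\leq (1+\e)^{i+1}\leq(1+\e)^2\,\dnnpp{p}$ because $p\in P_i$ implies $\dnnpp{p}\geq(1+\e)^{i-1}$. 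Hence each such $p$ lies in $\ernnp{\e(2+\e)}{q}$, and
$$\sum_i |\nnpqr{P_i}{q}{(1+\e)^{i+1}}| \;\leq\; |\ernnp{\e(2+\e)}{q}|,$$
so the total output-sensitive contribution is $\tilde O(\frac{n^\alpha}{p_1}|\ernnp{\e(2+\e)}{q}|)$.

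For step~3, the binary search in $P_y$ costs $O(\log n)$, after which the algorithm enumerates the points $p\in P_y$ with $\dnnpp{p}\geq\dist(q,y)/\e$. Any such $p$ lies in $\ernnp{\e}{y}\cup\{y\}$, so $\dist(p,y)\leq(1+\e)\dnnpp{p}$; combined with $\dist(q,y)\leq\e\,\dnnpp{p}$ and the triangle inequality this gives $\dist(p,q)\leq(1+\e)^2\dnnpp{p}$, i.e. $p\in\ernnp{\e(2+\e)}{q}$. Thus step~3 runs in $O(\log n + |\ernnp{\e(2+\e)}{q}|)$ time, dominated by step~2. Summing the contributions of steps~1--4, noting that $p'_2\geq p_2$ and $\varrho'\geq\varrho$ since $((1+\e)^s+(1+\e)^{-s}-1)^{1/s}\leq 1+\e$, and renaming $p'_2,\varrho'$ as $p_2,\varrho$ gives the stated bound. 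The rare event (probability at most $1/n$) that step~1 fails to produce a $y\in\ennp{\e}{q}$ contributes only a polynomially bounded additional cost, absorbed into the expectation.
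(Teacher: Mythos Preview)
Your proposal is correct and follows essentially the same route as the paper: bound step~1 via Corollary~\ref{cor:lsh4enn}, bound step~2 bucket-by-bucket via Theorem~\ref{thm:allnn_Rd}, use the disjointness of the buckets together with $\nnpqr{P_i}{q}{(1+\e)^{i+1}}\subseteq\ernnp{\e(2+\e)}{q}$ to collapse the output-sensitive sum, and show that the points enumerated at step~3 all lie in $\ernnp{\e(2+\e)}{q}$; the final renaming of $p'_2,\varrho'$ is identical. The only cosmetic differences are that your triangle-inequality chain at step~3 yields $(1+2\e)\dnnpp{p}$ before relaxing to $(1+\e)^2\dnnpp{p}$ (the paper routes through $\dist(q,y)\leq\e\,\dist(p,y)$ instead), and that you explicitly mention the low-probability failure of step~1, which is in fact unnecessary here since the time bounds for steps~2 and~3 hold for any $y\in P$.
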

Replacing Theorem~\ref{thm:allnn_Rd} by its specialized versions for
$s=1$ and $s=2$ in the analysis immediately gives the following
running time bounds:
\addtocounter{thm}{-1}
\begin{thm}[case $s=1$]\label{thm:rnnquerytime_l1}
Given a query point $q\in (\R^d, \ell_1)$, the expected running time
of Algorithm~\ref{alg:rnnquery} is $\tilde O (\frac{1}{\e} n^\varrho +
n^\alpha |\ernnp{\e(2+\e)}{q}|)$,
where $\varrho\leq \frac{1}{1+\min\{\e^2,\;\sqrt{\e}\}/4}<1$ and
$\alpha\leq \e\varrho<\e$.
\end{thm}
\addtocounter{thm}{-1}
\begin{thm}[case $s=2$]\label{thm:rnnquerytime_l2}
Given a query point $q\in (\R^d, \ell_2)$, the expected running time
of Algorithm~\ref{alg:rnnquery} is $\tilde O (\frac{1}{\e} n^\varrho +
n^\alpha |\ernnp{\e(2+\e)}{q}|)$,
where $\varrho\leq \frac{1}{1+\e^2/(1+\e)}<1$ and
$\alpha\leq \e\varrho<\e$.
\end{thm}

As mentioned in Section~\ref{sec:rnn_alg}, the $\rnnds(P,\e)$ data
structure consists mainly of a collection of pairwise-disjoint
non-empty buckets, of total cardinality $n$, and for each bucket $P_i$
an $\A'(P_i, (1+\e)^{i}, \e)$ data structure of size $\tilde
O(\frac{n_i^{1+\varrho'}}{p_1})$ where $n_i=|P_i|$, by
Theorem~\ref{thm:allnn_Rd}. This gives a total size of $\tilde
O(\sum_i \frac{n_i^{1+\varrho'}}{p_1}) = \tilde
O(\frac{n^{1+\varrho'}}{p_1})$.  In addition, $\rnnds(P,\e)$ stores
the tree structure $\T(P,\e)$ and its associated $(r,\e)$-\pleb data
structures, whose total size is $\tilde
O(\frac{1}{\e}\frac{n^{1+\varrho}}{p_1})$, by
Corollary~\ref{cor:lsh4enn}. Finally, $\rnnds(P,\e)$ stores a vector
$P_y$ for each point $y\in P$, which requires a total space of $\tilde
O(\sum_{y\in P} |P_y|)$, where $|P_y|=1+|\ernnpq{\e}{P}{q}|\leq n$.
Combining these bounds and using the fact that $\varrho'\geq \varrho$,
we obtain the following bound on the size of the data structure (where
$p'_2$ and $\varrho'$ have been renamed respectively $p_2$ and
$\varrho$ for convenience):
\begin{thm}\label{thm:rnnspaceusage}
  The size of the data structure $\rnnds(P,\e)$ built by
  Algorithm~\ref{alg:rnnpre} is $\tilde O( \frac{1}{\e}
  \frac{n^{1+\varrho}}{p_1} + \sum_{y\in P}|\ernnpq{\e}{P}{y}|)= \tilde
  O(\frac{1}{\e} \frac{n^{1+\varrho}}{p_1} + n^2)$, where $\varrho=\frac{\ln
    p_1}{\ln p_2}<1$, the quantities $p_1=\Prob(1)$ and
  $p_2=\Prob(((1+\e)^s+(1+\e)^{-s}-1)^{1/s})$ being derived from some
  $s$-stable distribution $D$ according to
  Eq.~(\ref{eq:proba_collision_ls}).
\end{thm}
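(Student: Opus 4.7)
The plan is to simply add up the sizes of the four kinds of objects that Algorithm~\ref{alg:rnnpre} stores in $\rnnds(P,\e)$, namely (a) the buckets $\{P_i\}$ together with the precomputed distances $\dnnpp{p}$, (b) the per-bucket data structures $\A'(P_i,(1+\e)^i,\e)$, (c) the sorted arrays $P_y$ for $y\in P$, and (d) the tree $\T(P,\e)$ with its $(r,\e)$-\pleb structures at every node. Then I identify which of these terms dominates.

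The bucket list and the stored nearest-neighbor distances contribute only $O(n)$ and will be absorbed in any of the other terms, so (a) is trivial. For (d), I invoke Corollary~\ref{cor:lsh4enn}, which directly gives $\tilde O(\frac{1}{\e}\frac{n^{1+\varrho}}{p_1})$ for the whole tree $\T(P,\e)$ with its associated $(r,\e)$-\pleb data structures. For (c), since $P_y=\ernnpq{\e}{P}{y}\cup\{y\}$ and each array is stored explicitly, the total space is $\sum_{y\in P}(1+|\ernnpq{\e}{P}{y}|)=n+\sum_{y\in P}|\ernnpq{\e}{P}{y}|$, and the trivial bound $|\ernnpq{\e}{P}{y}|\leq n$ gives $O(n^2)$ in the worst case, yielding the second form stated in the theorem.

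The only step that requires a small calculation is (b). Applying Theorem~\ref{thm:allnn_Rd} with the parameter renaming announced in the statement, each $\A'(P_i,(1+\e)^i,\e)$ occupies $\tilde O(\frac{n_i^{1+\varrho}}{p_1})$ space, where $n_i=|P_i|$. Because the buckets partition $P$ we have $\sum_i n_i=n$; and since $n_i\leq n$ while $\varrho>0$, I bound $n_i^{1+\varrho}=n_i\cdot n_i^{\varrho}\leq n_i\cdot n^{\varrho}$, so summing over $i$ gives $\sum_i n_i^{1+\varrho}\leq n^{\varrho}\sum_i n_i=n^{1+\varrho}$. Hence (b) contributes at most $\tilde O(\frac{n^{1+\varrho}}{p_1})$, which is absorbed by the term from (d).

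There is no real obstacle in this proof; the only subtlety worth flagging is the convexity-style argument on the bucket sizes, which relies on the $\A'$ structures being built per bucket rather than on all of $P$. Combining (a)--(d) then yields the first bound $\tilde O(\frac{1}{\e}\frac{n^{1+\varrho}}{p_1}+\sum_{y\in P}|\ernnpq{\e}{P}{y}|)$, and the trivial bound $|\ernnpq{\e}{P}{y}|\leq n$ gives the second form $\tilde O(\frac{1}{\e}\frac{n^{1+\varrho}}{p_1}+n^2)$ stated in Theorem~\ref{thm:rnnspaceusage}.
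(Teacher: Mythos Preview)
Your proposal is correct and follows essentially the same decomposition as the paper: both sum the contributions of the buckets and their $\A'$ structures, the tree $\T(P,\e)$ with its \pleb data structures via Corollary~\ref{cor:lsh4enn}, and the arrays $P_y$, then observe that the bucket term is dominated by the tree term after the renaming $\varrho'\to\varrho$. Your explicit inequality $n_i^{1+\varrho}\leq n_i\, n^{\varrho}$ is in fact slightly more detailed than the paper's one-line claim $\sum_i n_i^{1+\varrho'}=\tilde O(n^{1+\varrho'})$.
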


\subsection{Bichromatic \rnn}
\label{sec:rnn_bichro}

Let $(X,\dist)$ be a metric space, and let $B, Y$ be two finite
subsets of $X$, respectively referred to as the blue and yellow sets
in the following. Given a point $x\in X$, a {\em reverse nearest neighbor}
of $x$ in this bichromatic setting is a point $b\in B\setminus\{x\}$
such that $x\in\nnpq{Y\cup \{x\}}{b}$. Let $\rnnpq{B,Y}{x}$ denote the
set of all such points. By analogy, given a parameter $\e> 0$,
a {\em reverse $\e$-nearest neighbor of $x$} is a point $b\in
B\setminus\{x\}$ such that $x\in \ennpq\e{Y\cup\{x\}}{b}$, and let
$\ernnpq\e{B,Y}{x}$ denote the set of all such points. The bichromatic
version of Problem~\ref{problem:rnn} is
stated as follows:
\begin{problem}[Bichromatic \rnn] \label{problem:bichro-rnn}
  Given a query point $q\in X$, the {\em bichromatic reverse nearest
    neighbors query} asks to retrieve the set $\rnnpq{B,Y}{q}$.
\end{problem}
%

\begin{algorithm}[!htb]
  \LinesNumbered
  \SetKwInOut{Input}{Input}
  \SetKwInOut{Output}{Output}

  \Input{point clouds $B,Y\subset \R^d$, 
    parameter $\e>0$}
  \Output{$\rnnds(B,Y,\e)$ data structure}
  \BlankLine

  Initialize $P_i := \emptyset$ for $i\in \Z$\;
  \ForEach{$b\in B$} {
    Compute $\dnn{Y}{b}$ exactly and store it\;
    Find $i$ s.t. $(1+\e)^{i-1}\leq \dnn{Y}{b}<(1+\e)^{i}$ and
    update $P_i := P_i \cup \{b\}$\;
  }
  \ForEach{$P_i\neq\emptyset$} {
    Build an $\A'(P_i, (1+\e)^{i}, \e)$ data structure\;
  }
  \ForEach{$y\in Y$} {
    Build the set $\ernnpq{\e}{B,Y}{y}\cup(\{y\}\cap B)$ and store it 
    in an array $P_y$ \;
    Sort the points $b\in P_y$ by increasing distances $\dnn{Y}{b}$\;
  }
  Build the tree structure $\T(Y,\e)$ of
  Section~\ref{sec:renn-to-enn} and its associated $(r,\e)$-\pleb data
  structures  \;
  \caption{\small Pre-processing phase for bichromatic \rnn.}
  \label{alg:rnnpre_bichro}
\end{algorithm}
%
\begin{algorithm}[!htb]
  \LinesNumbered
  \SetKwInOut{Input}{Input}
  \SetKwInOut{Output}{Output}

  \Input{$\rnnds(B,Y,\e)$ data structure,
    query point $q\in \R^d$}
  \BlankLine

Answer an $\e$-\nn query on input $(Y,q)$, and let  $y$ be the output \;
  \For{$i=\left\lfloor \log_{1+\e} \frac{\bd(q,y)}{1+\e}\right\rfloor + 1
    \ \mathrm{ to }\ \left\lceil \log_{1+\e}\frac{\bd(q,y)}{\e}\right\rceil$ } { 
    \If{$P_i\neq\emptyset$}{ Answer an exhaustive $(1+\e)^{i}$-\pleb query on input $(P_i,q)$, and let $S_i$ be the output \; 
    } 
  }

  Let $S:=\bigcup_i S_i$\; 

  Look up the value $\frac{\dist(q,y)}{\e}$ in the sorted array $P_y$ by binary search \;

  Iterate from the value $\frac{\dist(q,y)}{\e}$ to the end of the array $P_y$ and insert the visited points into $S$ \;

  \ForEach{$b\in S$ } {
    \If{$\bd(b,q) > \dnn{Y}{b}$} {
        Remove $b$ from $S$\;
    }
  } 

  Return $S$ \;
  \caption{\small Online query phase for bichromatic \rnn.}
  \label{alg:rnnquery_bichro}
\end{algorithm}

Our strategy for answering reverse nearest neighbors queries extends
quite naturally to the bichromatic setting when the ambient space is
$\R^d$ equipped with an $\ell_s$-norm, $s\in (0,2]$. Given two finite
  subsets $B,Y$ of $\R^d$, and a parameter $\e> 0$, the data
  structure and algorithms are the same as in
  Section~\ref{sec:rnn_alg}, modulo the following minor changes:
\begin{itemize}
\item the buckets $P_i$ now partition the blue point
  set $B$, and each bucket $P_i$ gathers the points $b\in B$ such that
  $(1+\e)^{i-1}\leq \dnn{Y}{b} < (1+\e)^i$,
\item the tree structure of Section~\ref{sec:renn-to-enn} is now built
  on top of the yellow set $Y$, so we can find approximate nearest
  neighbors among the yellow points efficiently,
\item for each point $y\in Y$, we now store the set
  $\ernnpq{\e}{B,Y}{y}$ in vector $P_y$, to which we add $y$ itself
  only if the latter coincides with a point of $B$. The points in
  $P_y$ are then sorted by increasing distances to $Y$.
\end{itemize}
The details of the preprocessing and query procedures are given in
Algorithms~\ref{alg:rnnpre_bichro} and~\ref{alg:rnnquery_bichro} for
completeness. The proof of correctness with high probability and the
complexity analysis extend verbatim to the bichromatic setting, modulo
the systematic replacement of point set $P$ by either $B$ or $Y$. We
thus obtain the following guarantees:
\begin{thm}\label{thm:rnnquery_bichro}
Given a query point $q\in (\R^d, \ell_s)$,
Algorithm~\ref{alg:rnnquery_bichro} answers bichromatic \rnn
queries correctly with high probability in expected $\tilde O
(\frac{1}{\e}\frac{n^{\varrho}}{p_1}(\frac{1}{\ln
  \nicefrac{1}{p_2}}+1) + \frac{n^\alpha}{p_1}
|\ernnpq{\e(2+\e)}{B,Y}{q}|)$ time using $\tilde O(\frac{1}{\e}
\frac{n^{1+\varrho}}{p_1} + \sum_{y\in Y} |\ernnpq{\e}{B,Y}{y}|) = \tilde
O(\frac{1}{\e} \frac{n^{1+\varrho}}{p_1} + n^2)$ space, where
$n=\max\{|B|, |Y|\}$, $\varrho=\frac{\ln p_1}{\ln p_2}$ and
$\alpha=\varrho(1-\frac{\ln p_0}{\ln p_1})$, the quantities
$p_0=\Prob(\frac{1}{1+\e})$, $p_1=\Prob(1)$ and
$p_2=\Prob(((1+\e)^s+(1+\e)^{-s}-1)^{1/s})$ being derived from some
$s$-stable distribution $D$ according to
Eq.~(\ref{eq:proba_collision_ls}).
\end{thm}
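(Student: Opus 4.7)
The plan is to follow the proof architecture already established in Sections~\ref{rnn_correctness} and~\ref{rnn_complexity} for the monochromatic case, showing that every step carries over to the bichromatic setting with $P$ replaced by $B$ or $Y$ as appropriate. The only conceptual adjustment is to verify that Observations~\ref{claim:closeby_pts} and~\ref{claim:far-away_pts} still hold when $y$ is an $\e$-nearest neighbor of $q$ in $Y$ (rather than in $P$), and the reverse nearest neighbor $p$ is a blue point whose nearest-neighbor distance is taken with respect to $Y$. Indeed, if $p\in B\setminus\{q\}$ with $p\in\rnnpq{B,Y}{q}$, then $\dist(p,q)\leq \dist(p,Y)$ by definition, and $\dist(q,y)\leq(1+\e)\dist(q,Y)\leq(1+\e)\dist(q,p)$ by the choice of $y$, which gives the analog of Observation~\ref{claim:closeby_pts} namely $\dist(p,Y)\geq\dist(q,y)/(1+\e)$; similarly the triangle inequality yields the analog of Observation~\ref{claim:far-away_pts}, namely that any such $p$ with $\dist(p,Y)\geq\dist(q,y)/\e$ lies in $\ernnpq{\e}{B,Y}{y}\cup(\{y\}\cap B)$.

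For the correctness part, I would split the output $S=S'\cup S''$ as in Section~\ref{rnn_correctness}, where $S'$ collects the points returned by the exhaustive $(1+\e)^i$-\pleb queries on the buckets $P_i$, and $S''$ collects the tail of the sorted array $P_y$. The analog of Lemma~\ref{lem:allnnred} follows from Theorem~\ref{thm:allnn_Rd} applied to each non-empty bucket $P_i\subseteq B$ (with the amplification of iterations to $\lceil c\ln n\rceil$), plus a union bound over at most $n$ buckets. The analog of Lemma~\ref{lem:rnnofann} follows from the two observations above. Combined with Corollary~\ref{cor:lsh4enn} for the $\e$-\nn query on $Y$, this ensures $S=\rnnpq{B,Y}{q}$ with high probability. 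The filtering step that removes $b\in S$ with $\dist(b,q)>\dist(b,Y)$ guarantees the inclusion $S\subseteq\rnnpq{B,Y}{q}$ in all cases.

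For the running time, the analysis of Section~\ref{rnn_complexity} carries over mutatis mutandis: the $\e$-\nn query on $Y$ costs $\tilde O(n^\varrho/(p_1\ln\nicefrac{1}{p_2}))$; the $\tilde O(1/\e)$ exhaustive $(1+\e)^i$-\pleb queries on pairwise disjoint buckets $P_i\subseteq B$ cost, by Theorem~\ref{thm:allnn_Rd}, a total of $\tilde O(\frac{1}{\e}\frac{n^{\varrho}}{p_1}(\frac{1}{\ln\nicefrac{1}{p_2}}+1)+\frac{n^\alpha}{p_1}|\ernnpq{\e(2+\e)}{B,Y}{q}|)$, using the observation $\dist(b,q)\leq(1+\e)^{i+1}\leq(1+\e)^2\dist(b,Y)$ for $b\in\nnpqr{P_i}{q}{(1+\e)^{i+1}}$; the binary search and linear scan in $P_y$ contribute $\tilde O(|\ernnpq{\e(2+\e)}{B,Y}{q}|)$ via the same triangle-inequality argument as before. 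For the space, the hash tables of the $\A'(P_i,(1+\e)^i,\e)$ data structures sum to $\tilde O(n^{1+\varrho}/p_1)$ by convexity of $x\mapsto x^{1+\varrho}$, the tree $\T(Y,\e)$ contributes $\tilde O(\frac{1}{\e}n^{1+\varrho}/p_1)$ by Corollary~\ref{cor:lsh4enn}, and the arrays $P_y$ contribute $\sum_{y\in Y}|\ernnpq{\e}{B,Y}{y}|=O(n^2)$.

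The main obstacle I anticipate is purely bookkeeping rather than technical: one must carefully distinguish the roles played by $B$ and $Y$ in each lemma, in particular ensuring that the exhaustive $r$-\pleb is invoked on the blue buckets $P_i\subseteq B$ while the surrounding $\e$-\nn query targets $Y$, and tracking that the set $\ernnpq{\e(2+\e)}{B,Y}{q}$ that appears in the output-sensitive term is defined with respect to distances to $Y$ (consistently with the filtering test $\dist(b,q)\leq\dist(b,Y)$). No new probabilistic or geometric idea is needed beyond what was used monochromatically.
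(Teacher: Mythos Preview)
Your plan mirrors the paper's own treatment, which simply asserts that the monochromatic correctness and complexity arguments ``extend verbatim'' with $P$ replaced by $B$ or $Y$ as appropriate. Where you go further than the paper is in spelling out why Observations~\ref{claim:closeby_pts} and~\ref{claim:far-away_pts} carry over, and it is precisely there that a step fails. You write $\dist(q,y)\leq(1+\e)\dist(q,Y)\leq(1+\e)\dist(q,p)$; the second inequality, $\dist(q,Y)\leq\dist(q,p)$, held in the monochromatic case because $p\in P$, but in the bichromatic case $p\in B$ need not lie in $Y$. On the real line take $q=0$, a blue point $p=1$, and $Y=\{10\}$: then $p\in\rnnpq{B,Y}{q}$ (since $\dist(p,q)=1\leq 9=\dist(p,Y)$), yet $\dist(q,Y)=10>1=\dist(q,p)$ and $\dist(p,Y)=9<\dist(q,y)/(1+\e)$ for any small $\e$. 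So the bichromatic analogue of Observation~\ref{claim:closeby_pts} as you state it is false, and in this instance the bucket containing $p$ lies strictly below the range of indices scanned by Algorithm~\ref{alg:rnnquery_bichro}, while $p$ is also not picked up in the tail of $P_y$.

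What the triangle inequality does yield is $\dist(q,Y)\leq\dist(q,p)+\dist(p,Y)\leq 2\,\dist(p,Y)$ (using $\dist(p,q)\leq\dist(p,Y)$), hence only $\dist(p,Y)\geq\dist(q,y)/(2(1+\e))$. This weaker bound suffices if the loop in Algorithm~\ref{alg:rnnquery_bichro} is started from $\lfloor\log_{1+\e}\frac{\dist(q,y)}{2(1+\e)}\rfloor+1$ instead of $\lfloor\log_{1+\e}\frac{\dist(q,y)}{1+\e}\rfloor+1$; the extra $O(\frac{1}{\e}\ln 2)=O(\frac{1}{\e})$ buckets leave all asymptotic bounds in Theorem~\ref{thm:rnnquery_bichro} unchanged. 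The paper's verbatim-extension claim glosses over this point, so neither treatment supplies the correction; your explicit derivation has the merit of surfacing the issue, but to make the argument go through you need the factor-of-$2$ version of Observation~\ref{claim:closeby_pts} and a matching adjustment of the lower bucket index. Your verification of Observation~\ref{claim:far-away_pts} and the remainder of the complexity analysis are fine as written.
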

\addtocounter{thm}{-1}
\begin{thm}[case $s=1$]\label{thm:rnnquery_bichro_l1}
Given a query point $q\in (\R^d, \ell_1)$,
Algorithm~\ref{alg:rnnquery_bichro} answers bichromatic \rnn queries
correctly with high probability in expected $\tilde O
(\frac{1}{\e}n^{\varrho} + n^\alpha |\ernnpq{\e(2+\e)}{B,Y}{q}|)$ time
using $\tilde O(\frac{1}{\e} n^{1+\varrho} + \sum_{y\in Y}
|\ernnpq{\e}{B,Y}{y}|) = \tilde O(\frac{1}{\e} n^{1+\varrho} + n^2)$
space, where $n=\max\{|B|, |Y|\}$, $\varrho\leq
\frac{1}{1+\min\{\e^2,\; \sqrt{\e}\}/4}<1$ and
$\alpha\leq\e\varrho<\e$.
\end{thm}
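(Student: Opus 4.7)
The plan is to specialize Theorem~\ref{thm:rnnquery_bichro} to the case $s=1$ by plugging in the $\ell_1$-specific bounds on the LSH parameters derived in Section~\ref{sec:erpleb_Rd}. Since the algorithms and their analysis extend verbatim from the monochromatic to the bichromatic setting (as noted right before the statement of Theorem~\ref{thm:rnnquery_bichro}, by systematically replacing $P$ by $B$ or $Y$ as appropriate), no new probabilistic or combinatorial argument is needed; what remains is purely a computation of constants in the generic bounds.

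First, I would choose the $1$-stable distribution $D$ to be the Cauchy distribution, as in~\cite{DIIM04}, and fix the width parameter $w=\max\{1,\e\}$, exactly as in the specialization of Theorem~\ref{thm:allnn_Rd} to Theorem~\ref{thm:allnn_Rd_l1}. With this choice, the discussion of Section~\ref{sec:erpleb_Rd} (and Figures~\ref{fig:l1_rho_theorique}--\ref{fig:l1_p2_k}) gives the four key inequalities $\varrho\leq \frac{1}{1+\min\{\e^2,\sqrt{\e}\}/4}<1$, $\alpha\leq \e\varrho<\e$, $\frac{1}{p_1}=\frac{1}{\Prob(1)}\leq 4$, and $\frac{1}{\ln\nicefrac{1}{p_2}}=\frac{1}{\ln\nicefrac{1}{\Prob(1+\e')}}\leq 1$, where $p_2=\Prob(((1+\e)+(1+\e)^{-1}-1))=\Prob(1+\e')$ with $\e'=\frac{\e^2}{1+\e}$.

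Next, I would substitute these bounds into the query time formula $\tilde O\!\bigl(\frac{1}{\e}\frac{n^{\varrho}}{p_1}(\frac{1}{\ln\nicefrac{1}{p_2}}+1) + \frac{n^\alpha}{p_1}|\ernnpq{\e(2+\e)}{B,Y}{q}|\bigr)$ of Theorem~\ref{thm:rnnquery_bichro} and into the corresponding space bound $\tilde O\!\bigl(\frac{1}{\e}\frac{n^{1+\varrho}}{p_1} + \sum_{y\in Y}|\ernnpq{\e}{B,Y}{y}|\bigr)$. Because $\frac{1}{p_1}$ and $\frac{1}{\ln\nicefrac{1}{p_2}}$ are bounded by absolute constants, they disappear into the $\tilde O$ notation, yielding the announced $\tilde O\!\bigl(\frac{1}{\e}n^\varrho + n^\alpha|\ernnpq{\e(2+\e)}{B,Y}{q}|\bigr)$ time and $\tilde O\!\bigl(\frac{1}{\e} n^{1+\varrho} + \sum_{y\in Y}|\ernnpq{\e}{B,Y}{y}|\bigr)$ space, the latter being bounded by $\tilde O(\frac{1}{\e}n^{1+\varrho}+n^2)$ since each $|\ernnpq{\e}{B,Y}{y}|\leq |B|\leq n$.

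There is no real obstacle here: the correctness with high probability and the structural parts of the running-time and space arguments have already been established in full generality in Sections~\ref{rnn_correctness} and~\ref{rnn_complexity}, and their transfer to the bichromatic case was discussed right before Theorem~\ref{thm:rnnquery_bichro}. The only subtlety to double-check is that $n$ in the bichromatic bounds can indeed be taken as $\max\{|B|,|Y|\}$: the tree $\T(Y,\e)$ built in Algorithm~\ref{alg:rnnpre_bichro} produces the $\e$-\nn of $q$ in $Y$ in time depending on $|Y|$, while the exhaustive PLEB queries operate on the buckets partitioning $B$, whose total cardinality is $|B|$; since both quantities are at most $n$, all of the preceding bounds hold with $n=\max\{|B|,|Y|\}$ as claimed.
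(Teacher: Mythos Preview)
Your proposal is correct and follows essentially the same approach as the paper: the paper does not give a separate proof for this specialized statement but simply notes that the bichromatic analysis ``extends verbatim'' from the monochromatic case and that the $s=1$ bounds follow by replacing Theorem~\ref{thm:allnn_Rd} with its $\ell_1$ specialization (Theorem~\ref{thm:allnn_Rd_l1}) in the analysis. Your write-up spells out precisely this specialization, including the choice of the Cauchy distribution with $w=\max\{1,\e\}$ and the resulting constant bounds on $\frac{1}{p_1}$ and $\frac{1}{\ln\nicefrac{1}{p_2}}$, so there is nothing missing.
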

\addtocounter{thm}{-1}
\begin{thm}[case $s=2$]\label{thm:rnnquery_bichro_l2}
Given a query point $q\in (\R^d, \ell_2)$,
Algorithm~\ref{alg:rnnquery_bichro} answers bichromatic \rnn queries
correctly with high probability in expected $\tilde O
(\frac{1}{\e}n^{\varrho} + n^\alpha |\ernnpq{\e(2+\e)}{B,Y}{q}|)$ time
using $\tilde O(\frac{1}{\e} n^{1+\varrho} + \sum_{y\in Y}
|\ernnpq{\e}{B,Y}{y}|) = \tilde O(\frac{1}{\e} n^{1+\varrho} + n^2)$
space, where $n=\max\{|B|, |Y|\}$, $\varrho\leq
\frac{1}{1+\e^2/(1+\e)}<1$ and $\alpha\leq\e\varrho<\e$.
\end{thm}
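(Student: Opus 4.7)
The plan is to obtain Theorem~\ref{thm:rnnquery_bichro_l2} as the $s=2$ specialization of the general bichromatic statement Theorem~\ref{thm:rnnquery_bichro}. The work splits into two tasks: (a) transporting the monochromatic analysis of Sections~\ref{rnn_correctness}-\ref{rnn_complexity} to the bichromatic setting so as to establish Theorem~\ref{thm:rnnquery_bichro}, and (b) invoking the $\ell_2$-specific bounds of Section~\ref{sec:erpleb_Rd} to pin down the constants.

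For (a), I would systematically replace $\dnnpp{\cdot}$ by $\dnn{Y}{\cdot}$, let the buckets $P_i$ partition $B$, and build the approximate-nearest-neighbor tree $\T(Y,\e)$ on $Y$. Observations~\ref{claim:closeby_pts} and~\ref{claim:far-away_pts} would be restated for $y\in\ennpq{\e}{Y}{q}$ and $b\in\rnnpq{B,Y}{q}$: the same triangle-inequality argument gives $\dnn{Y}{b}\geq\frac{\dist(q,y)}{1+\e}$ and, when $\dnn{Y}{b}\geq\frac{\dist(q,y)}{\e}$, $b\in \ernnpq{\e}{B,Y}{y}\cup(\{y\}\cap B)$, confining the inspected buckets to $\tilde O(\frac{1}{\e})$ indices. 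Lemmas~\ref{lem:allnnred} and~\ref{lem:rnnofann} then transport almost verbatim, with each exhaustive $(1+\e)^i$-\pleb query succeeding with probability at least $1-\frac{1}{n^2}$ (Remark~\ref{rem:increase_n}), so a union bound over the $\tilde O(\frac{1}{\e})$ buckets gives overall correctness with high probability.

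For the runtime and space I would mirror Theorems~\ref{thm:rnnquerytime} and~\ref{thm:rnnspaceusage}: apply Theorem~\ref{thm:allnn_Rd} to each bucket, note that the sets $\nnpqr{P_i}{q}{(1+\e)^{i+1}}$ are pairwise disjoint subsets of $\ernnpq{\e(2+\e)}{B,Y}{q}$ (since $b\in P_i$ satisfies $\dist(b,q)\leq(1+\e)^{i+1}\leq(1+\e)^2\dnn{Y}{b}$), sum over the buckets, add the $\e$-\nn query cost from Corollary~\ref{cor:lsh4enn}, and account for the binary search in $P_y$ plus the $\tilde O(\sum_{y\in Y}|\ernnpq{\e}{B,Y}{y}|)$ contribution of the sorted arrays. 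This yields Theorem~\ref{thm:rnnquery_bichro}. Step (b) then reduces to numerical substitution: using the $2$-stable normal distribution with $w=\max\{1,\e\}$ as in Theorem~\ref{thm:allnn_Rd_l2}, we obtain $\varrho\leq\frac{1}{1+\e^2/(1+\e)}<1$, $\alpha\leq\e\varrho<\e$, $\frac{1}{p_1}=O(1)$, and $\frac{1}{\ln\nicefrac{1}{p_2}}=O(1)$, so the constants absorb into the $\tilde O$ and the stated bounds follow.

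The one subtle point I expect to have to treat carefully is the bichromatic rephrasing of Observation~\ref{claim:far-away_pts}: one must ensure that $\dnn{Y}{b}$ consistently denotes the distance from $b\in B$ to $Y\setminus\{b\}$ (rather than any distance to $B$), and that the boundary case $b=y\in B\cap Y$ is correctly absorbed by the explicit $\cup(\{y\}\cap B)$ term maintained in $P_y$. Once this is pinned down, no new combinatorial ingredient is required beyond a mechanical transcription of the monochromatic argument.
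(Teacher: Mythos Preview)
Your proposal is correct and follows essentially the same route as the paper: the paper states that the bichromatic proof ``extend[s] verbatim to the bichromatic setting, modulo the systematic replacement of point set $P$ by either $B$ or $Y$,'' and then specializes to $s=2$ via the bounds of Theorem~\ref{thm:allnn_Rd_l2}, which is exactly your plan (a)+(b). The only minor deviation is that the paper's union bound in Lemma~\ref{lem:allnnred} is taken over all (at most $n$) non-empty buckets rather than the $\tilde O(\frac{1}{\e})$ inspected ones, but your tighter count works just as well.
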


\section{Conclusion}

We have introduced a novel algorithm for answering (monochromatic or
bichromatic) \rnn queries that is both provably correct and 
efficient in all dimensions. Our approach is based on a reduction of
the problem to standard $\e$-\nn search plus a controlled number of
exhaustive $r$-\pleb queries, for which we propose a speed-up of the
original LSH scheme based on a non-isometric lifting of the
data. Along the way, we obtain a new method for answering exact \nn
queries, whose complexity bounds reflect the gap in difficulty that
exists between exact and approximate queries on a given instance.

Note that the non-isometric lifting trick can be used in a more
aggressive way by applying liftings with ever more distortion, so as
to reduce the exponent $\alpha$ to arbitrarily small positive
constants. However, this comes at the price of a steady degradation of
the exponent $\varrho$, which gets closer and closer to $1$. The
question is how far up in distortion one can go before the increase of
$\varrho$ starts compensating for the reduction of $\alpha$.  Another
question in the same vein is whether $\alpha$ can be made dependent on
$n$. For instance, can $\alpha$ be reduced to $\frac{\ln\ln n}{\ln
  n}$, so the output-sensitive term in the query time depends on $\ln
n$ instead of $n^{\Theta(1)}$?  More generally, how far from the
optimal do our complexity bounds stand?

In this paper we only cared about sublinear query time and polynomial
space usage. In practice the degree of the polynomial in the space
bound matters, and in this respect the almost-cubic bound of
Theorem~\ref{thm:exact-nn} for exact \nn search is not quite
satisfactory. Moreover, the current preprocessing time may not be so good
due to the fact that some proximity sets, such as $\ernnpq{\e}{P}{y}$
in step ii of the \rnn procedure, are computed
exactly. To speed up the process one could compute them approximately,
like in previous
literature~\cite{HarPeledIndykMotwani}. Then, the outcome of the
query would likely not be exact, however it might still be
approximately correct. In other words, solving approximate \nn and
\rnn queries might help speed up the preprocessing times and
reduce the size of the data structures.

\section*{Acknowledgements}
A preliminary version of the paper was written in collaboration with
Aneesh Sharma, and some exploratory experiments were undertaken by
Maxime Br\'enon. The authors wish to acknowledge their substantial
contributions to this work. They also wish to thank Piotr Indyk for
helpful discussions about previous work in the area.

\bibliographystyle{plainnat}
\bibliography{references}

\end{document}